\newtheorem*{rep@theorem}{\rep@title}
\newcommand{\newreptheorem}[2]{%
\newenvironment{rep#1}[1]{%
 \def\rep@title{#2 \ref{##1}}%
 \begin{rep@theorem}}%
 {\end{rep@theorem}}}
\newtheorem{theo}{Theorem} 
\newtheorem{lemma}[theo]{Lemma}
\newtheorem{corol}[theo]{Corollary}
\newtheorem{defn}[theo]{Definition}
\newtheorem{rem}[theo]{Remark}
\newtheorem{claim}[theo]{Claim}
\def\U{\mathbf{U}}
\def\C{\mathbbm{C}}
\def\F{\mathbbm{F}}
\def\uX{\mathcal{X}}
\def\uY{\mathcal{Y}}
\def\uZ{\mathcal{Z}}
\def\cA{\mathcal{A}}
\def\cB{\mathcal{B}}
\def\cC{\mathcal{C}}
\def\cD{\mathcal{D}}
\def\P{\mathcal{P}}
\def\cE{\mathcal{E}}
\def\I{\mathcal{I}}
\def\cC{\mathcal{C}}
\def\1{\mathbbm{1}}
\begin{document}

\title{Quantum XYZ Product Codes}

\author{Anthony Leverrier}
\affiliation{Inria, France}
\orcid{0000-0002-6707-1458}
\email{anthony.leverrier@inria.fr}

\author{Simon Apers}
\affiliation{CNRS, IRIF, Universit\'e Paris Cit\'e}
\email{smgapers@gmail.com}
\orcid{0000-0003-3823-6804}

\author{Christophe Vuillot}
\affiliation{Inria, France}
\orcid{0000-0002-3445-0179}
\email{christophe.vuillot@inria.fr}

\maketitle

\begin{abstract}
We study a three-fold variant of the hypergraph product code construction, differing from the standard homological product of three classical codes. When instantiated with 3 classical LDPC codes, this ``XYZ product'' yields a \emph{non CSS} quantum LDPC code which might display a large minimum distance. 
The simplest instance of this construction, corresponding to the product of 3 repetition codes, is a non CSS variant of the 3-dimensional toric code known as the Chamon code. 
The general construction was introduced in Denise Maurice's PhD thesis, but has remained poorly understood so far. 
The reason is that while hypergraph product codes can be analyzed with combinatorial tools, the XYZ product codes also depend crucially on the algebraic properties of the parity-check matrices of the three classical codes, making their analysis much more involved. 

Our main motivation for studying XYZ product codes is that the natural representatives of logical operators are two-dimensional objects. This contrasts with standard hypergraph product codes in 3 dimensions which always admit one-dimensional logical operators. In particular, specific instances of XYZ product codes with constant rate \emph{might} display a minimum distance as large as $\Theta(N^{2/3})$. While we do not prove this result here, we obtain the dimension of a large class of XYZ product codes, and when restricting to codes with dimension 1, we reduce the problem of computing the minimum distance to a more elementary combinatorial problem involving binary 3-tensors. 
We also discuss in detail some families of XYZ product codes that can be embedded in three dimensions with local interaction. Some of these codes seem to share properties with Haah's cubic codes and might be interesting candidates for self-correcting quantum memories with a logarithmic energy barrier. 
\end{abstract}


\newpage

\section{Introduction and summary}
\label{sec:intro}

There is a large interest in quantum low-density parity-check (LDPC) codes, both for their relevance for building fault-tolerant quantum computers and for their nontrivial topological properties.
Despite this, many questions are still open: Can quantum LDPC codes have linear minimum distance? Do there exist quantum LDPC codes that are locally testable? Can quantum LDPC codes be embedded in 3 \emph{physical} dimensions and describe self-correcting quantum memories?
We study these questions for a family of quantum LDPC codes called \textit{quantum XYZ product codes}, which are a non-CSS, 3-fold variant of the hypergraph product code, derived from a triple of classical codes with parity-check matrices $H_1,H_2,H_3$.
They were initially described in \cite{mau14} and as a special case include the Chamon code \cite{cha05,BLT11}.
We conjecture that this family includes codes that have constant rate and minimum distance $\Theta(N^{2/3})$.
As a step towards this conjecture we show that, under certain conditions, the dimension and the distance of the code are determined by the \textit{tensor Sylvester equation} (over $\F_2$):
\[
(H_1H_1^T \otimes \1 \otimes \1) X
= (\1 \otimes H_2 H_2^T \otimes \1) X
= (\1 \otimes \1 \otimes H_3 H_3^T) X + R.
\]
For $R = 0$ this equation characterizes the code dimension.
For $R = R(H_1,H_2,H_3) \neq 0$ it characterizes the minimum distance.
More precisely, the distance is given by how closely (in Hamming weight) the equation can be satisfied.
This observation stresses the algebraic nature of the code family.
A particular subclass of interest are the \textit{cyclic} XYZ product codes where $H_1,H_2,H_3$ are circulant matrices.
We study the simplest code in this family (beyond the Chamon code), for which we show the distance is likely upper bounded by $\sqrt{N}$. We also show the existence of fractal operators (as in Haah's cubic code \cite{haa11}) for cyclic XYZ product codes. Such operators exclude local testability.

\subsection{Quantum LDPC codes}

While specific classes of quantum LDPC codes such as surface codes \cite{kit03} are rather well understood, they suffer from inherent limitations of their parameters: for instance their minimum distance is upper bounded by the square root of their length \cite{BT09}. In fact, beating this bound turned out to be very challenging, even for general LDPC codes that need not be embeddable on a finite-dimensional lattice. 
More precisely, an old construction due to Freedman, Meyer and Luo gives a minimum distance $\Theta (N^{1/2} \log^{1/4} N)$ \cite{FML02}, and only very recently new constructions based on high-dimensional expanders have brought polylogarithmic improvements \cite{EKZ20,KT20}.
These were holding the record until the recent breakthrough of Ref.~\cite{HHO20} which introduced \emph{fiber bundle codes}, a variant of the hypergraph product codes relying on a \emph{twisted} homological product.
A particular instance of fiber bundle codes was shown to display a minimum distance of $\Theta(N^{3/5})$, up to polylogarithmic factors. See also \cite{BE20} for an explicit construction with the same parameters. Even more recently, Panteleev and Kalachev showed that a generalization of fiber bundle codes, named \emph{lifted product codes} \cite{PK20} (and originally introduced in Ref.~\cite{PK19} in a more restrictive version) have an almost linear minimum distance, $d = \Theta(n/\log n)$, with a logarithmic dimension, $k= \Theta(\log n)$. Combining this construction with the balancing trick of Hastings \cite{has17b}, one obtains codes such that $kd^2 = \Theta(n^2)$. See Ref.~\cite{BE21} for a recent review of the current state-of-the-art on quantum LDPC codes.

It is a daunting open question to understand whether there exist quantum LDPC codes with both a large rate and minimum distance.
Recall that in the classical setting, LDPC codes with constant rate and linear minimum distance are easy to obtain, for instance by picking a sparse parity-check matrix at random.
Expansion properties of the resulting code will exclude the existence of low-weight operators with zero syndrome.
The quantum case is significantly more complex, in particular because we cannot simply exclude low-weight operators with zero syndrome.
Indeed, any quantum LDPC code will have such operators (corresponding to the generators of the code).
Rather, the challenge is to exclude low-weight \emph{logical} operators (\textit{i.e.}, operators that send one codeword to a different one). 
Recall that a quantum stabilizer code \cite{got97} of length $N$ is defined as the common $+1$ eigenspace of a set $\{g_1, \ldots, g_m\}$ of commuting Pauli operators acting on $(\C^2)^{\otimes N}$, that is, the space 
\[ \mathcal{Q} := \mathrm{span} \{ |\psi\rangle \in  (\C^2)^{\otimes N} \: : \: g_i |\psi\rangle = |\psi\rangle, \forall i \in [m] \}.\]
Such a code is said to be \emph{LDPC} if all the generators $g_i$ act nontrivially on at most a constant number of qubits, and if each qubit is only acted on by a constant number of generators.
With this language, a quantum LDPC stabilizer code also corresponds to the ground space of the local Hamiltonian $H = \frac{1}{m} \sum_{i=1}^m \Pi_i$, with $\Pi_i = \frac{1}{2}(\mathbbm{1}- g_i)$.\\

If we denote by $\mathcal{S} := \langle g_1, \ldots, g_m \rangle$ the stabilizer group of the code, then the logical operators correspond to the set $N(\mathcal{S})\setminus \mathcal{S}$ where ${N}(\mathcal{S})$ is the normalizer of $\mathcal{S}$. 
The code dimension is $k = |N(\mathcal{S})/\mathcal{S}|$ and the minimum distance $d_{\min}$ of the code is given by the minimum weight (\textit{i.e.} the number of qubits on which it acts nontrivially) of an operator that leaves the code space globally invariant, but which is not in the stabilizer group:
\begin{align}
d_{\min} = \min \{ |w| \: : \: w \in N(\mathcal{S})\setminus \mathcal{S}\}.
\end{align}
This quantity can be particularly challenging to compute for quantum LDPC codes since most low-weight operators in $N(\mathcal{S})$ effectively belong to the stabilizer group $\mathcal{S}$ and therefore do not affect the distance. 
Quantum CSS codes offer a much simplified setup: these codes correspond to stabilizer codes for which the generators are either products of Pauli-$X$ operators $\sigma_1 = \left[ \begin{smallmatrix} 0 & 1 \\ 1& 0\end{smallmatrix}\right]$ and the identity $\1$, or products of Pauli-$Z$ operators $\sigma_3 = \left[ \begin{smallmatrix} 1 & 0 \\ 0& -1\end{smallmatrix}\right]$ and the identity $\1$ \cite{CS96,ste96b,ste96}. These families are easier to study because the commutation relations required to make the stabilizer Abelian simply need to be checked between $X$-type and $Z$-type generators. In particular, such quantum codes can be described by a pair of classical codes.
\begin{defn}[CSS code]
A quantum CSS code with parameters $\llbracket N, k, d_{\min} \rrbracket$ is a
pair of classical codes $\cC_X, \cC_Z \subseteq \mathbbm{F}_2^N$ such that
$\cC_X^\perp \subseteq \cC_Z$, or equivalently $\cC_Z^\perp \subseteq \cC_X$.
The code space corresponds to the linear span of $\left\{ \sum_{z \in \cC_Z^\perp} |x+z\rangle
\: : \: x\in \cC_X\right\}$, where $\left\{ |x\rangle\: : \: x\in \mathbbm{F}_2^N\right\}$ is the canonical basis of $(\mathbbm{C}^2)^{\otimes N}$.
\end{defn}
\noindent
The dimension of the code is given by $\dim(\cC_X/\cC_Z^\perp) =
\dim\cC_X + \dim\cC_Z - N$ and its minimum distance is
given by 
\[d_{\min}^{(\text{CSS})} = \min (d_X, d_Z),\]
 with $d_X = \min\{ |w| \: : \: w \in \cC_X
\setminus \cC_Z^\perp\}$ and $d_Z = \min\{ |w|\: : \: w \in \cC_Z \setminus \cC_X^\perp\}$. Here, $|w|$ stands for the Hamming weight of the word $w$.
All the code constructions with large distance mentioned above are instances of CSS codes \cite{FML02,KKL16,EKZ20,KT20,HHO20}.
An appealing property of CSS codes is that they can alternatively be described using chain complexes.
A chain complex (of length 3) is described by 3 binary vector spaces $C_0, C_1, C_2$ together with \emph{boundary operators} $\partial_1, \partial_2$, and is depicted as
\[
C_2  \xrightarrow{\partial_{2}} C_1 \xrightarrow{\partial_1} C_0.
\]
The boundary operators are such that $\partial_1 \partial_2 = 0$.
In this case, the classical codes $\cC_X$ and $\cC_Z$ defining the CSS code are given by 
\[
\cC_X
= \ker \partial_1 \quad \text{and} \quad
\cC_Z
= (\mathrm{Im}\, \partial_{2})^\perp = \ker \partial_2^T.
\]
The condition that $\cC_Z^\perp \subseteq \cC_X$ now naturally follows from the condition that $\partial_1  \partial_2 = 0$.
The qubits correspond to the space $C_1$, while the $X$- and $Z$-check operators correspond to $C_0$ and $C_2$, respectively.\\

One might intuitively expect that non-CSS codes offer better prospects in terms of possible minimum distance, albeit at the cost of making their analysis more involved.
It turns out that this is not the case, as was observed by Ref.~\cite{BTL10}: starting from any non CSS code of parameters $\llbracket n,k,d\rrbracket $, one can construct a CSS code with essentially the same asymptotic parameters\footnote{The hyperbicycle codes of \cite{KP13} give an example where the non CSS version can display slightly better parameters than the standard CSS construction of hypergraph product codes \cite{TZ13}.}:
\[
\llbracket n_\text{CSS}
= 4n,k_\text{CSS}=2k,d_\text{CSS}=2d \rrbracket,
\]
and generators of weight at most doubled compared to the original code (we detail this construction in Section \ref{sub:CSS}).
This fact, together with their \textit{a priori} more involved analysis, explains why non CSS codes have been mostly ignored as an approach to devise quantum LDPC codes with large distance\footnote{We note, however, that non CSS codes are investigated in the context of quantum fault-tolerance where their non CSS nature offers interesting properties in terms of resistance to biased noise for instance \cite{BTB20}. For instance, the XZZX code of Ref.~\cite{BTB20} is a 2-dimensional version of the Chamon code.}.
A notable exception is the \emph{Chamon code} \cite{cha05} which was initially introduced as a three-dimensional quantum Hamiltonian with intriguing topological properties and later analyzed in depth by Bravyi, Leemhuis and Terhal \cite{BLT11}. This code shares with the 3D toric code the fact that it can be embedded in three dimensions with local interaction: the generators have weight 6 and act on their two neighboring qubits in each spatial direction with either $X$-, $Y$-, or $Z$-type Pauli operators. 
Interestingly, while the 3D toric code admits string-like logical operators (and thus has a minimum distance upper bounded by $N^{1/3}$), the natural logical operators of the Chamon code are membrane-like, giving some hope that the minimum distance could be much larger.
Unfortunately, the Chamon code is known to have distance only $O(\sqrt{N})$. 
This follows from the fact that certain ``flexible'' string operators of the Chamon code live in a plane orthogonal to the $[1,1,1]^T$ direction of the lattice and that the topology of this plane is identical to a 2-dimensional toric code, hence there exists a  noncontractible loop (corresponding to a logical operator) of length at most $\sqrt{N}$.
Whether this upper bound is tight or not is still an open question, as far as we know. Note that this upper bound only applies to the \emph{stabilizer version} of the Chamon code. It also makes sense to study \emph{subsystem code} versions where not all logical qubits are used to encode information, with the hope that the chosen qubits necessarily have large weight logical operators. While \cite{BLT11} shows that the subsystem code version still display a $\sqrt{N}$ upper bound for the minimum distance for some choice of parameters, it is not excluded that better choices would lead to distance as large as $\Theta(N^{2/3})$.
An important aspect of the analysis of \cite{BLT11} is that the properties of the Chamon code have an intrinsic algebraic nature, suggesting that the code is very different from its toric code cousin.
This shows up very clearly when computing the quantum dimension of the Chamon code:
if the code is defined on a lattice of size $n_1 \times n_2 \times n_3$ then its dimension is
\[
k_{\text{Chamon}}
= 4 \gcd(n_1, n_2, n_3).
\]
Recall on the other hand that the dimension of topological codes such as the surface code only depend on the genus of the surface (number of holes) and are independent of the lattice size.
The algebraic properties of the parity-check matrix essentially show up because the product of the three Pauli matrices is proportional to the identity: $\sigma_1 \sigma_2 \sigma_3 = i\1_2$, with $\sigma_2 = \left[ \begin{smallmatrix} 0 & -i \\ i& 0\end{smallmatrix}\right]$.
Because of this, nontrivial relations might occur among the code generators, and this is essentially what $k_{\text{Chamon}} $ above counts since the Chamon code has as many generators as qubits.
This is in contrast to typical CSS codes, where the generators involve only $X$- and $Z$-type Pauli operators.

\subsection{Hypergraph Product codes and generalizations}

\emph{Hypergraph product codes} \cite{TZ13} are a constant-rate generalization of the toric code obtained by taking the (homological) product of two classical codes.
In particular, the product of two good classical LDPC codes yields a quantum code with constant rate and minimum distance $\Theta(\sqrt{N})$. At the moment, no quantum LDPC code family is known to beat this bound while displaying a constant rate. 
Such codes could be useful to perform quantum computation in a fault-tolerant manner with a constant overhead in terms of qubits \cite{FGL18}.
Hypergraph product codes can be understood as a special instance of the homological product of two chain complexes of length two \cite{FH14, BH14}.
Concretely, consider two classical codes $\cC_1 = \ker H_1$ and $\cC_2 = \ker H_2$ described by their parity-check matrices $H_i$ of size $m_i \times n_i$ for $i =1,2$.
The hypergraph product code associated to these codes is given by the quantum CSS code associated to the 3-complex:
\[
\F_2^{n_1 \times n_2}
\xrightarrow{(H_1 \otimes \1 , \1 \otimes H_2) } (\F_2^{m_1 \times n_2} \oplus \F_2^{n_1 \times m_2} )
\xrightarrow{1 \otimes H_2 + H_1 \otimes \1} \F_2^{m_1 \times m_2}.
\]
We note that by considering the homological product of two random complexes of length 3 (\textit{i.e.}, random CSS codes), Bravyi and Hastings \cite{BH14} showed the existence of CSS codes with linear minimum distance.
Unfortunately the construction requires generators of weight $\Theta(\sqrt{N})$ and therefore do not satisfy the LDPC condition.\\

The 2-dimensional toric code is a special instance of the hypergraph product code construction corresponding to the product of two repetition codes. It is straightforward to generalize the hypergraph product code construction to $D$-fold homological products  \cite{C19,ZP19,QVR20}, and the special case of the repetition code will naturally yield the $D$-dimensional toric code.

\subsection{XYZ product codes}
In her PhD thesis, Denise Maurice noticed that one could also consider a different generalization of the hypergraph product code that would yield a non CSS code when applied to 3 classical codes \cite{mau14}.
The idea is to associate a different code to each of the $X$-, $Y$- or $Z$-Pauli checks (see Fig.~\ref{fig:cubecube-intro}).
She noted that this XYZ product yields the Chamon code when the three classical codes are repetition codes\footnote{Ref.~\cite{BLT11} also suggested some generalizations of the Chamon code replacing the repetition code by symmetric circulant matrices with even row weight. We will consider a closely related construction in Section \ref{sec:cyclic} but with circulant matrices of odd weight.}.
She also gave lower bounds on the dimension of general XYZ product codes, but warned that these were not tight because they ignored potential relations between the generators caused by $XYZ=i\1$.
Consequently, she could not give non trivial lower bounds on the minimum distance, because this would at least require to know how many logical operators exist.
Finally, she established a general upper bound on the minimum distance of the form $O(N^{2/3})$.
\begin{figure}[!h]
\centering
\includegraphics[width=0.4\linewidth]{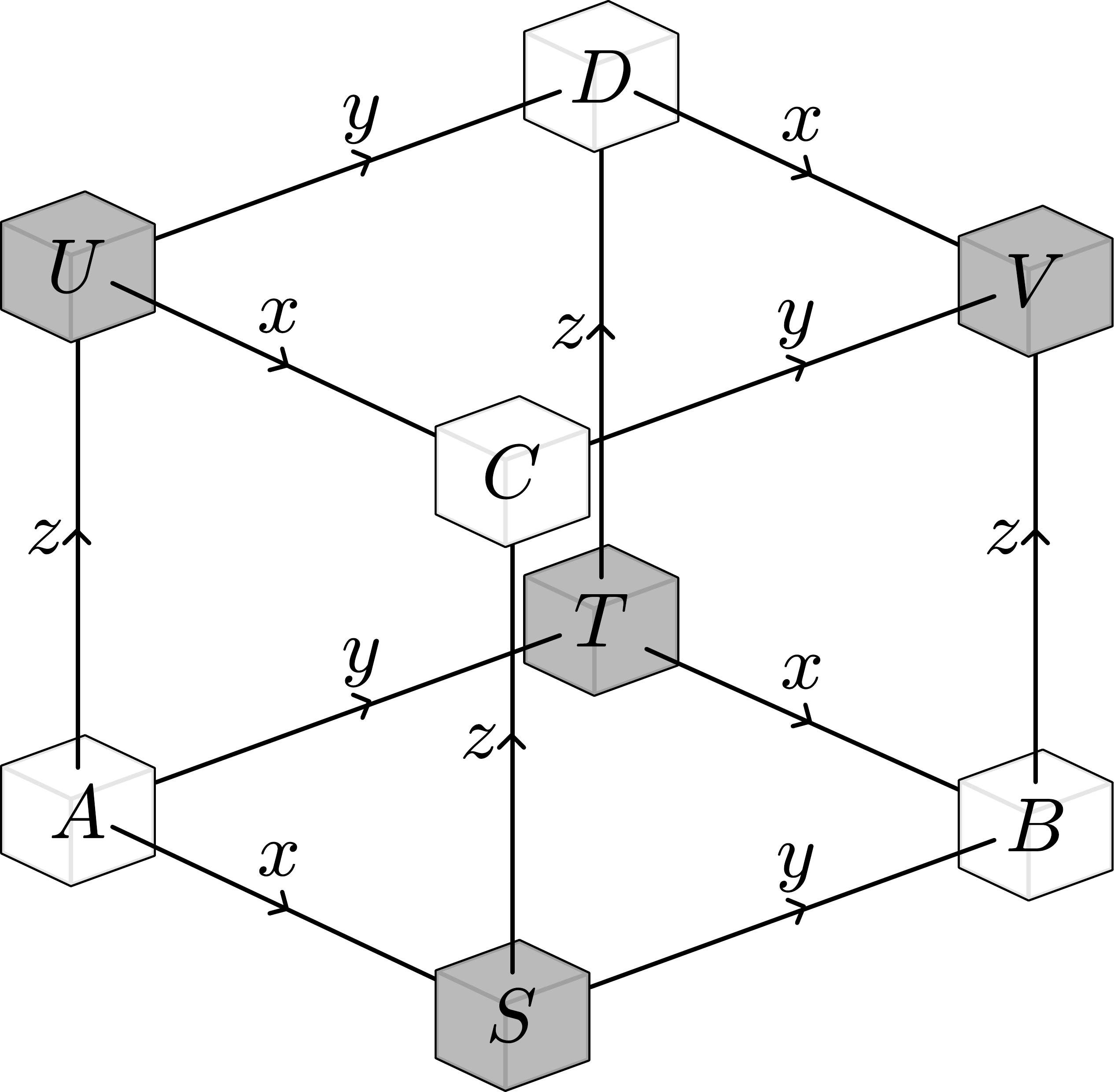}
\caption{3D structure of the XYZ product code: qubits are indexed by the 3-tensors $A, B, C, D$, generators by $S, T, U, V$. The $x$, $y$ and $z$ labels are shorthand for the operators $H_1 \otimes \1 \otimes \1$, $\1 \otimes H_2 \otimes \1$ and $\1 \otimes \1 \otimes H_3$, respectively.}  
\label{fig:cubecube-intro}
\end{figure}

In addition to difficulties to compute the quantum dimension of XYZ product codes, their non CSS nature also makes the understanding of the minimum distance more complex: in particular, it is not possible to restrict the analysis to $X$-type or $Z$-type logical operators similarly to the CSS case, and one must \textit{a priori} consider all possible logical operators. We discuss in Section \ref{sub:CSS} a possible CSS variant of the XYZ product codes that could be better suited for a distance analysis. However, this CSS version involves 4 times as many qubits and also a lot of new generators, and consequently turns out not be much easier to study.
The interest behind this code construction is however that it might break the $\sqrt{N}$ bound for the minimum distance.
The reason for which hypergraph product codes have a $\sqrt{N}$ distance is that there exist logical operators that basically correspond to codewords of one of the two classical codes.
With the XYZ product, the hope is that the third code acting with $Y$-Pauli operators along the third coordinate will force low-weight logical operators to be a product of two out of the three classical codes.
Indeed, the natural logical operators have exactly this form.
The problem then is to understand whether there exist equivalent logical operators (differing by an element of the stabilizer group) with a much lower weight. This is for instance the case for the Chamon code since the minimum distance is only $O(N^{1/2})$.\\

More formally, the XYZ product code construction is a generalization of the hypergraph product code involving a third classical code which will enforce Pauli $Y$-type constraints.
Given three parity-check matrices $H_1$, $H_2$ and $H_3$, we first define a chain complex, depicted in Figure \ref{fig:chaincomplex-intro2}.

\begin{figure}[!h]
\centering
\includegraphics[width=0.65\linewidth]{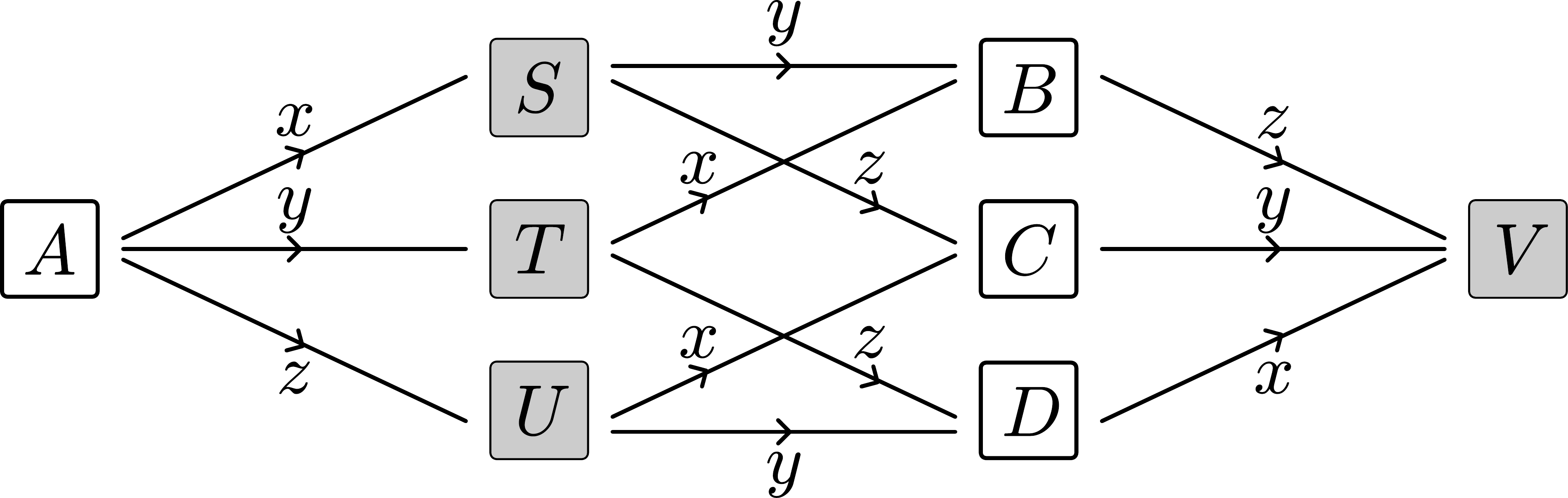}
\caption{\emph{Chain complex} representation of the XYZ product codes.
The $x$, $y$ and $z$ labels are shorthand for the operators $H_1 \otimes \1 \otimes \1$, $\1 \otimes H_2 \otimes \1$ and $\1 \otimes \1 \otimes H_3$, respectively.}  
\label{fig:chaincomplex-intro2}
\end{figure}

We proceed similarly to the hypergraph product code construction.
We identify the resulting vector with 3-tensors
\[
A \in \F_2^{n_1 \times n_2 \times n_3}, \quad
B \in \F_2^{n_1 \times m_2 \times m_3}, \quad
C \in \F_2^{m_1 \times n_2 \times m_3}, \quad
D \in \F_2^{m_1 \times m_2 \times n_3},
\]
which will index the qubits, and 3-tensors
\[
S \in \F_2^{m_1 \times n_2 \times n_3}, \quad
T \in \F_2^{n_1 \times m_2 \times n_3}, \quad
U \in \F_2^{n_1 \times n_2 \times m_3}, \quad
V \in \F_2^{m_1 \times m_2 \times m_3},
\]
which will index the checks.
The total number of qubits is
\begin{align}
\label{eqn:length}
N = \underbrace{n_1 n_2 n_3}_{n_A} +\underbrace{ n_1 m_2 m_3}_{n_B} +\underbrace{ m_1 n_2 m_3}_{n_C} +\underbrace{ m_1 m_2 n_3}_{n_D}.
\end{align}
A general stabilizer element is a tensor over the Pauli group written abstractly as
\begin{align}\label{eqn:gen}
\begin{bmatrix}
{\cal A} \\ {\cal B} \\ {\cal C} \\ {\cal D} \end{bmatrix} = 
\begin{bmatrix}
\uX^\dagger & \uY^\dagger & \uZ^\dagger & \cdot\\
\uY & \uX &\cdot& \uZ^\dagger\\
\uZ & \cdot& \uX & \uY^\dagger\\
\cdot& \uZ & \uY  &\uX^\dagger \end{bmatrix}
\begin{bmatrix} S\\T \\U \\V\end{bmatrix}
=: \Gamma(S,T,U,V).
\end{align}
Here we introduced the operators\footnote{The dimension of the identity operator $\1$ will vary depending on the context.}
\[
\uX = \sigma_1^{H_1 \otimes \1 \otimes \1}, \qquad
\uY = \sigma_2^{\1 \otimes H_2 \otimes \1}, \qquad
\uZ = \sigma_3^{\1 \otimes \1 \otimes H_3}.
\]
We use the convention that an adjoint $\uX^\dag$ denotes the operator $\uX^\dag = \sigma_1^{H_1^T \otimes \1 \otimes \1}$, a product $\uY S$ denotes $\sigma_2^{(\1 \otimes H_2 \otimes \1) S}$, and a sum $\uY S + \uX T$ denotes the \emph{product} $\sigma_2^{(\1 \otimes H_2 \otimes \1) S} \sigma_1^{(H_1 \otimes \1 \otimes \1) T}$.
We call the resulting stabilizer code the \textit{XYZ product code}, and we refer to it by the shorthand ${\cal Q}(H_1,H_2,H_3)$.

\subsection{Our results}

We initiate a systematic study of the XYZ product codes and establish a number of results. 
Unfortunately, we do not manage to establish the code dimension in the most general case, nor do we succeed in proving record-breaking lower bounds on the minimum distance.
In fact, even proving a bound of the form $\omega(N^{1/3})$ for the minimum distance appears to be challenging.
We note that Ref.~\cite{BLT11} claims that such a result for the Chamon code would appear in a future work of one of the authors, but has not as far as we know. 

\subsubsection{Validity of the XYZ product code construction}
A stabilizer code is characterized by its stabilizer group $\mathcal{S} = \langle g_1, \ldots, g_m\rangle$ that should satisfy two properties: it should be Abelian, meaning that any pair of generators $g_i, g_j \in \mathcal{P}^{\otimes N}$ should commute, $[g_i,g_j]=0$, and it should not contain $-\1$. Note that indeed, $- \1 \in \mathcal{S}$ immediately implies that any codeword $|\psi\rangle$ should be stabilized by $-\1$ and therefore $|\psi\rangle = -|\psi\rangle = 0$. 
We prove in Lemma \ref{lem:xyz} that the stabilizer group of the XYZ product codes is Abelian, and this follows directly from the fact that the construction is a 3-fold version of the hypergraph product code construction. 
Proving that the stabilizer group does not contain $-\1$ turns out to be much more challenging, however, and this is already a nontrivial result in the case of the Chamon code \cite{BLT11}. While this question does not show up for CSS construction since a product of $\sigma_1$ or $\sigma_3$ commuting generators can never pick a nontrivial phase, this is no longer the case for general stabilizers involving $\sigma_2$ operators since for instance $\sigma_1 \sigma_2 \sigma_3 = i \1$.

While we do not prove that $-\1 \notin \mathcal{S}$ in the general case of the XYZ product of three arbitrary parity-check matrices $H_1, H_2, H_3$, we establish this fact for a special case of interest where the code dimension is 1 (see Lemma \ref{lem:-1forT}). The corresponding codes are the ones we focus on in most of the paper since they offer the best hope for displaying a large minimum distance.

Moreover we show that in the case where $-\1\in\mathcal{S}$ there is always a way to choose a well defined stabilizer code from $\mathcal{S}$.
This stabilizer code is given by making consistent choices for each elements of $\mathcal{S}$ of whether the code space should belong to its $+1$- or $-1$-eigenspace.
The parameters $\llbracket n,k,d\rrbracket$ of the stabilizer code obtained are independent of the choices for the signs, see Section \ref{sub:minusone}.
We also show that using a standard transformation due to \cite{BTL10}, one can exhibit a CSS version of the XYZ product code which is readily well defined whether or not $-\1\in\mathcal{S}$ and has parameters $\llbracket 4n,2k,2d\rrbracket$, see Section \ref{sub:CSS}.

Our result concerning the dimension in Section~\ref{sec:dim} should be interpreted as usual in the cases where $-\1\not\in\mathcal{S}$.
In a case where $-\1\in\mathcal{S}$ then our results would apply to the well defined stabilizer code resulting from any consistent choice of signs for the generators of the XYZ product construction.

\subsubsection{Dimension of the XYZ product codes}

Consider parity-check matrices $H_i$ of size $m_i \times n_i$ for $i \in \{1,2,3\}$. The dimension of the resulting XYZ product code $\mathcal{Q}(H_1, H_2,H_3)$ is given by the number of physical qubits minus the number of independent generators. If $r$ is the number of ``relations'' between the generators, then the dimension $k$ of the code is given by
\begin{align*}
k
&= (n_1 - m_1)(n_2 - m_2)(n_3 - m_3) + r.
\end{align*}
We show that the number of relations is given by the number of solutions to the following linear system over $\F_2$:
\begin{equation} \label{eq:system-dim-intro}
\begin{alignedat}{3}
H_1^T S &= H_2^T T &&= H_3^T U \\
H_1 T &= H_2 S &&= H_3^T V \\
H_1 U &= H_2^T V &&= H_3 S \\
H_1^T V &= H_2 U &&= H_3 T
\end{alignedat}
\end{equation}
where $H_1$ is shorthand for $H_1 \otimes \1 \otimes \1$ and similarly for $H_2, H_3$.
While the general case seems out of reach for the moment, we compute the dimension of the XYZ product code when the three parity-check matrices are invertible. 
\begin{theo}\label{thm:xyz-dim}
If the parity-check matrices $H_1$, $H_2$ and $H_3$ are invertible, then the dimension of the XYZ product code is equal to the number of solutions of the tensor Sylvester equation
\[
H_1 H_1^T X
= H_2 H_2^T X
= H_3 H_3^T X,
\]
which is equal to
\[
\sum_{i,j,k} \mathrm{deg}(\mathrm{gcd}(p^1_i,p^2_j,p^3_k)),
\]
where $p^\ell_i$ is the characteristic polynomial of the $i$-th Jordan block of $H_i H_i^T$.
\end{theo}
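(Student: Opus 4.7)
The plan has two steps: reduce the system~(\ref{eq:system-dim-intro}) to the tensor Sylvester equation by eliminating variables, then count its solutions via a Gorenstein duality argument after decomposing into cyclic summands.

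\emph{Reduction to Sylvester.} I exploit that $H_1 \otimes \1 \otimes \1$, $\1 \otimes H_2 \otimes \1$ and $\1 \otimes \1 \otimes H_3$ (and their transposes) act on disjoint tensor factors, and hence commute pairwise. Using invertibility, I parametrize solutions of~(\ref{eq:system-dim-intro}) by a single tensor $X := H_1 T = H_2 S = H_3^T V$, so that the second row of~(\ref{eq:system-dim-intro}) holds by definition and $T,S,V$ are determined by $X$. The top row then forces $U = H_3^{-T} H_1^T H_2^{-1} X$ and imposes $H_1 H_1^T X = H_2 H_2^T X$. Substituting into the two bottom rows, most equalities collapse to tautologies after commuting operators, while the surviving constraints give $H_2 H_2^T X = H_3 H_3^T X$. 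This yields a bijection between solutions of~(\ref{eq:system-dim-intro}) and solutions of the tensor Sylvester equation.

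\emph{Counting via Jordan blocks.} Let $M_\ell := H_\ell H_\ell^T$. Using the rational canonical form, decompose $\F_2^{n_\ell} = \bigoplus_i V_\ell^{(i)}$ into cyclic $M_\ell$-modules with $V_\ell^{(i)} \cong \F_2[t]/(p^\ell_i(t))$ and $M_\ell$ acting as multiplication by $t$. Since $M_1,M_2,M_3$ act on disjoint tensor factors, this extends to a decomposition of $\F_2^{n_1}\otimes\F_2^{n_2}\otimes\F_2^{n_3}$ into summands $V_1^{(i)} \otimes V_2^{(j)} \otimes V_3^{(k)}$ preserved by all three operators, so the problem reduces to each summand. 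Identify the $(i,j,k)$-summand with the Artinian $\F_2$-algebra $R := \F_2[x,y,z]/(p^1_i(x), p^2_j(y), p^3_k(z))$, on which $M_1,M_2,M_3$ act as $x,y,z$; then the Sylvester equation on this summand becomes $\mathrm{Ann}_R(x-y,\, y-z)$.

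To compute its dimension, observe that $R$ is a complete intersection, hence Artinian Gorenstein. The annihilator-quotient duality for Gorenstein Artinian rings then gives $\dim_{\F_2} \mathrm{Ann}_R(I) = \dim_{\F_2} R/I$ for any ideal $I$. Taking $I = (x-y, y-z)$, setting $x = y = z =: w$ identifies $R/I$ with $\F_2[w]/(g)$ where $g := \gcd(p^1_i, p^2_j, p^3_k)$, so $\dim_{\F_2} \mathrm{Ann}_R(I) = \deg g$. Summing over $(i,j,k)$ yields the theorem.

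The main obstacle is the Gorenstein duality step: one must verify that each local component of $R$ has a $1$-dimensional socle (which follows either from $R$ being a complete intersection, or directly from $R$ being a tensor product over $\F_2$ of principal Artinian local rings, each with a $1$-dimensional socle) and invoke the resulting perfect socle pairing to conclude $\dim \mathrm{Ann}(I) = \dim R/I$. A more elementary alternative would be to exhibit an explicit basis of $\mathrm{Ann}_R(x-y, y-z)$ indexed by "top-degree" monomials in each cyclic factor, but the bookkeeping becomes delicate when the irreducible factorizations of $p^1_i, p^2_j, p^3_k$ differ.
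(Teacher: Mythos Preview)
Your reduction step is essentially the paper's Claim~\ref{claim:dim} specialized to the fully invertible case: the paper parametrizes by $V$ and reconstructs $S,T,U$ via $(H_iH_i^T)^{-1}$, while you parametrize by $X=H_3^T V$ and reconstruct via $H_i^{-1}$, but the content is the same.

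Your counting step, however, is a genuinely different route. The paper (Theorem~\ref{thm:syl-3}) passes to the algebraic closure $\overline{\F}_2$, puts each $M_\ell=H_\ell H_\ell^T$ in Jordan normal form, and then for each triple of Jordan blocks $J(0,\alpha),J(0,\beta),J(0,\gamma)$ gives an explicit combinatorial description of the solutions as ``uniform planes'' $X_{i,j,k}=x_{i+j+k}$, counting $\min\{\alpha,\beta,\gamma\}$ of them. You instead stay over $\F_2$, decompose via the rational canonical form, and on each summand identify the Sylvester solutions with $\mathrm{Ann}_R(x-y,y-z)$ in the complete intersection $R=\F_2[x,y,z]/(p^1_i,p^2_j,p^3_k)$; then Frobenius/Gorenstein duality gives $\dim\mathrm{Ann}_R(I)=\dim R/I$, and $R/I\cong\F_2[w]/(\gcd)$ finishes. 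This is correct: $R$ is a tensor product of algebras $\F_2[t]/(f)$, each Frobenius, hence $R$ is Frobenius, and the annihilator--quotient duality you invoke is exactly the identity $\mathrm{Hom}_R(R/I,R)\cong(R/I)^*$. The trade-off is that the paper's argument is fully elementary and yields an explicit basis of solutions, while yours is shorter and more conceptual but imports nontrivial commutative algebra; it also avoids the detour through $\overline{\F}_2$ and the auxiliary Lemma~\ref{lem:lin-system-closure}. One small point worth making explicit in your write-up: the $p^\ell_i$ in the theorem statement are Jordan-block polynomials over $\overline{\F}_2$, whereas your cyclic decomposition naturally produces invariant factors over $\F_2$; the two sums agree because $\deg\gcd$ is additive under coprime (CRT) splittings of cyclic summands, which is exactly the passage between the two decompositions.
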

\noindent
This is established by proving a tensor version of the Cecioni-Frobenius theorem (Theorem \ref{thm:syl-3}) which originally counted the number of solutions to the (homogeneous) matrix Sylvester equation $AX = XB$.\\

Since one of our goals is to better understand the distance properties of XYZ product codes, it makes sense to first focus our attention on codes with dimension 1. We provide sufficient conditions for this (see Corollary \ref{cor:T}).
In particular, by the theorem above, it is sufficient that the three parity-check matrices $H_1, H_2, H_3$ are invertible and that the matrices $H_1 H_1^T, H_2 H_2^T, H_3 H_3^T$ have a unique common eigenvalue (each with geometric multiplicity 1).

\subsubsection{Minimum distance of the XYZ product codes}

For an XYZ product code with dimension 1 it is sufficient to consider each of the 3 logical operators ($X$, $Y$, $Z$-logical operators) to compute the minimum distance. 
We first give a $\Omega(N^{1/3})$ lower bound on the distance (Lemma \ref{lem:standard}) via a standard argument. 
Our main result is a proof that the minimum distance is equivalent to an elementary combinatorial problem involving binary 3-tensors.
More precisely, in the slightly simpler case where the parity-check matrices are sparse and symmetric (in addition to the conditions above), the minimum distance is given (up to a constant) by the following quantity (see Theorem \ref{thm:decoupling}):
\begin{align}\label{eqn:dec1}
\min_{(i,j,k) \in (1,2,3) } \min_M \big(  |(H_i^2 + H_j^2) M| + |(H_i^2 + H_k^2) M + R| \big)
\end{align}
where $R$ is the 3-tensor $R_{i,j,k} = \delta_{k,1}$, $|. |$ is the Hamming weight of the tensor, and we minimize over permutations of indices $(i,j,k)$ and binary 3-tensors $M \in \F_2^{n_1 \times n_2 \times n_3}$.
Equivalently, we may ask how closely we can approximate the \textit{heterogeneous} tensor Sylvester equation
\[
H_i^2 M
= H_j^2 M
= H_k^2 M + R
\]
for $(i,j,k) \in (1,2,3)$.
It is clear, by taking $M = 0$, that the minimum distance is at most $O(N^{2/3})$.
Intuitively, it seems unlikely that the minimum in Eqn.~\eqref{eqn:dec1} can be much lower than $N^{2/3}$. For the second term in \eqref{eqn:dec1} to be $o(N^{2/3})$, the tensor $M$ should be dense, that is $|M| = \Omega(N)$. This implies in turn that each slice $M_k$ (perpendicular to the $z$-direction) should be very structured in order to get $| (H_i^2 + H_j^2)M_k| =o(N^{1/3})$.
In that sense it seems likely that for a generic choice of $H_1, H_2, H_3$, the minimum of Eqn.~\eqref{eqn:dec1} is $\Theta(N^{2/3})$, but proving this claim has remained elusive so far. 

Expansion properties of the base codes are often helpful to prove nontrivial bounds on the distance, \textit{e.g.} for quantum expander codes \cite{LTZ15} or for lower bounding the cohomological distance of fiber bundle codes \cite{HHO20}, but seem unable to get us past $\Omega(N^{1/3})$ for the XYZ product codes. 
We also note that the probabilistic method which was successfully applied in \cite{FH14} and \cite{HHO20} for instance, is difficult to implement here, because of the algebraic nature of some of the XYZ product code properties.

\subsubsection{Cyclic XYZ product codes, and 3-dimensional embeddings}

A special class of XYZ product codes that can be analyzed in greater detail corresponds to taking classical cyclic codes, that is codes defined by circulant parity-check matrices. 
Such matrices can be conveniently described by a polynomial formalism since they are polynomials in the matrix
\[ \Omega_{n} = \left[ \begin{smallmatrix}
0 & 0 & \ldots & 0 & 1\\
1 & 0 & &  & 0\\
\vdots & \ddots &&& \vdots\\
0 & & & 1& 0
\end{smallmatrix}\right],\]
satisfying $(\Omega_n)^n=\1$. In other words, a circulant matrix is described by a polynomial over the finite ring $\F_2[x]/(x^n+1)$.
In this setup, the XYZ code becomes translation invariant. In addition, if the parity-check matrices are local in 1 dimension, then the resulting XYZ product code can be embedded on a three-dimensional lattice with local interactions. This is a simple generalization of the Chamon code also suggested in Ref.~\cite{BLT11}.

For these codes, computing the minimum distance \emph{seems} easier. 
A general 3-tensor corresponds to a polynomial 
$P(x,y,z) \in \F_2[x,y,z]/(x^{n_1}+1, y^{n_2}+1, z^{n_3}+1)$.
The interpretation of the polynomial is that the presence of some arbitrary monomial $x^i y^j z^k$ in $P$ means that there is a 1 in the cell indexed by $(i,j,k)$ in the binary tensor. For instance, the natural $Z$-logical operator, corresponding to a full horizontal plane of $\sigma_3$ operators, is described by the polynomial 
$R(x,y,z) = \sum_{i=0}^{n_1-1} \sum_{j=0}^{n_2-1} x^i y^j$.
In that case, the minimum distance is given, up to constant factors, by an optimization problem of the form:
\[ \min_P \big(  |(P_1(x) + P_2(y)) P(x,y,z)| + |(P_1(x) + P_3(z)) P(x,y,z) + R(x,y)| \big).\]
In particular, the general form of polynomials $P(x,y)$ such that $|(P_1(x) +P_2(y)) P(x,y)|$ is small is rather well understood. It is a sum of \emph{fractal operators} $(P_1 + P_2)^{2^p -1}$ since
\[ (P_1(x) + P_2(y)) (P_1(x) + P_2(y))^{2^p -1}= (P_1(x) + P_2(y))^{2^p} = P_1(x^{2^p}) + P_2(y^{2p}))\]
where we used that $P(x)^2 = P(x^2)$ over $\F_2[x]$. 
Since the code is LDPC, we get that
\[
\left|(P_1(x) + P_2(y)) (P_1(x) + P_2(y))^{2^p -1}\right| = O(1).
\]
We note that the existence of such fractal operators (which are also central in the study of Haah's cubic code \cite{haa11} for instance) immediately show that the codes are not locally testable since there exist large weight errors with constant weight syndromes.

In Section \ref{sec:simplest}, we consider a specific instance of the XYZ product code associated to parity-check matrices  $H_i = \1_{n_i} + \Omega_{n_i} + \Omega_{n_i}^T$. If the $n_i$'s are odd and no multiple of 3, we show that the code dimension is 
\[ 4(\gcd(n_1,n_2,n_3)-1) + 1.\]
Choosing the $n_i$'s to be coprime yields a single logical qubit and the optimization problem takes the simple form 
\[ \min_P | (1+xy)(1+x/y)P | + |(1+xz)(1+x/z)P + R|,\]
which can be analyzed in some detail. Unfortunately, it seems that many choices of $n_i$'s lead to a distance $O(N^{1/3})$ suggesting that higher weight polynomials (weight 5) may be required to reach much larger distances.

The existence of fractal operators also hints at the possibility that some of these codes might display a large (logarithmic) energy barrier, and therefore be potentially useful as self-correcting quantum memories \cite{BLP16, ter15}. While we do not have a general answer to these questions at the moment, we prove in Section \ref{sec:simplest} that 3D codes obtained by taking identical circulant parity-check matrices have a constant energy barrier, even if the $n_i$'s are coprime. Such codes are therefore unable to passively store quantum information. 

\subsection{Open questions}

Our work leads to a large number of open questions, and we give a few below:

\begin{itemize}
\item The most obvious question is whether specific instances of XYZ product codes do achieve a minimum distance of $\Theta(N^{2/3})$. If this is the case, is it also possible when restricting to codes that are embeddable in 3 dimensions? If so, this would saturate an upper bound due to Bravyi and Terhal \cite{BT09}. 
It would also be interesting to understand how the distance behaves when the code dimension increases. For instance, it is easy to get quantum codes with constant rate (linear code dimension) by taking full rank rectangle parity-check matrices of size $m \times n$ with $m/n$ bounded away from 0 or 1, but it is even less clear what the distance is in that case. 

\item It would be interesting to also study the lifted product variant of the XYZ product codes, following the work of \cite{PK20}. This could in principle yield codes of distance beyond $N^{2/3}$.

\item It would also be nice to understand whether there exist triples of parity-check matrices such that the XYZ product code $\mathcal{Q}(H_1, H_2, H_3)$ is not a valid stabilizer code, because $-\1$ belongs to the stabilizer group.

\item Another open question is to determine the dimension of the XYZ product code in the general case. In particular, the case where $H_i H_i^T$ is not full rank appears harder to analyze. We comment on the difficulties of getting a full solution to this problem right before Section \ref{sub:sylvester}.

\item The Chamon code initially appeared in literature on quantum many-body systems. Similarly, the models corresponding to cyclic XYZ product codes embeddable in 3 dimensions may display interesting properties. For instance, the appearance of fractal operators is reminiscent of Haah's cubic code and raises the question of whether some XYZ product codes display some form of self-correction, or a logarithmic energy barrier scaling? A first step in this direction would be to show that some variants do not have any string logical operators.

\item Replacing qubits by qudits, or considering XYZ-type products in higher dimensions also appears as a potentially fruitful avenue of research. For instance, a, XYZ-type product in $2D+1$ dimensions can be defined in exactly the same fashion as soon as we have $2D+1$ mutually anticommuting operators, which is possible by replacing each qubit by $D$ qubits. 

\item A question that we did not explore at all is that of decoding XYZ product codes. In particular, it is tempting to try to adapt decoders that work well for hypergraph product codes, for instance the small-set-flip decoder \cite{LTZ15}. Maybe better decoders are also possible for the topological XYZ product codes embedded in 3 dimensions?

\end{itemize}

\subsection*{Organization}  

Section \ref{sec:prelim} introduces in detail the hypergraph product code construction which is essential for the XYZ product code construction that we present in Section \ref{sec:construction}. We also discuss a CSS version of the construction and compare it with the other 3-fold hypergraph product code constructions that have already appeared in the literature. 
Section \ref{sec:dim} is devoted to the study of the dimension of the XYZ product code and Section \ref{sec:decoupling} to its minimum distance. We consider cyclic, translation-invariant instances of the code in Section \ref{sec:cyclic} and study in more detail the simplest such instance in Section \ref{sec:simplest}.

\subsection*{Acknowledgments}
We would like to thank Benjamin Audoux, Alain Couvreur, Denise Maurice, Barbara Terhal, Jean-Pierre Tillich and Gilles Z\'emor for many fruitful discussions on quantum LDPC codes. Most of this work was done while SA was in the CWI-Inria lab.
AL and CV acknowledge support from the ANR through the QuantERA project QCDA and from the Plan France 2030 through the project ANR-22-PETQ-0006.

\newpage
\section{Preliminaries and notations}
\label{sec:prelim}

\subsection{Notations}

\begin{itemize}
\item The Pauli group on one qubit is $\P = \langle X, Y, Z\rangle$
where $X, Y=i XZ, Z$ are the Pauli matrices given by
\[ X = \begin{pmatrix} 0 & 1\\ 1 & 0\end{pmatrix}, \quad  Y =
\begin{pmatrix} 0 & -i\\ i & 0\end{pmatrix}, \quad  Z = \begin{pmatrix}
1 & 0\\ 0 & -1\end{pmatrix}. \]
\item We will overload the variables $X, Y, Z$ and sometimes write
$\sigma_1, \sigma_2, \sigma_3$ instead of $X, Y, Z$ to avoid any
confusion or when it is convenient.
\item The multi-qubit Pauli group on $n$ qubits, $\mathcal{P}_n$,
consists in tensor product of single-Pauli operators.
\item We will denote the $n\times n$ identity matrix as $\1_n$ and often
use just $\1$ where the dimension should be clear from context.
\item We will manipulate tensors of binary numbers as well as tensors of
Pauli operators.
We use standard capital roman letters to denote binary tensors like,
$A$, $B$, $C$, $D$, $S$, $T$, $U$, $V$, $M$, and calligraphic letters
for Pauli tensors like $\mathcal{A}$, $\mathcal{B}$, $\mathcal{C}$,
$\mathcal{D}$, $\mathcal{S}$, $\mathcal{T}$, $\mathcal{U}$, $\mathcal{V}$.
For instance given a binary tensor $A$, we can define a Pauli tensor of
the same dimension $\mathcal{A}$ as
\begin{equation}
    \mathcal{A} = \sigma_1^A,\label{eq:tensornotation}
\end{equation}
where this notation means that for any entry where $A$ is $1$,
respectively $0$, $\mathcal{A}$ has a Pauli $\sigma_1$, respectively
$\1$, at the corresponding entry.

\item The Hamming weight of a binary tensor $A$ is the number of nonzero
entries in $A$ and is denoted $\left \vert A\right \vert$.
\item The Pauli weight of a Pauli tensor $\mathcal{A}$ is the number of
entries that are not the identity and is also denoted as $\left
\vert\mathcal{A}\right \vert$.
In the case of \eqref{eq:tensornotation} for instance the Hamming weight
of $A$ and the Pauli weight of $\mathcal{A}$ coincide.

\item The commutator of two Pauli tensors of the same dimensions,
$\mathcal{A}$ and $\mathcal{B}$, is $-\1$ if they anti-commute and $+\1$
if they commute and is denoted as $[\mathcal{A},\mathcal{B}]$.
Computing it can be done using the following
\begin{equation}
    [\mathcal{A},\mathcal{B}]
=\mathcal{A}\cdot\mathcal{B}\cdot\mathcal{A}^{-1}\cdot\mathcal{B}^{-1},
\end{equation}
where the product is taken element wise and the right hand side is
therefore always either $+\1$ or $-\1$.

\end{itemize}

\subsection{Hypergraph Product code} \label{sec:hypergraph-codes}

As mentioned in the introduction, CSS codes are naturally expressed as
chain complexes of length 3 or subcomplex of length 3 taken from longer
chain complexes.
This point of view hints at constructing codes out of two chain
complexes, as short as length 2, using the homological product of chain
complexes.
Length 2 chain complexes then just correspond to classical codes.
This was first proposed by Tillich and Zemor in \cite{TZ13} under the
name hypergraph product codes.
Generalizations and further studies of this construction have been made
to consider longer initial chain complexes in \cite{BH14, AC19}.

More specifically, we can derive the hypergraph product code from a pair
of classical parity-check matrices $H_1$ and $H_2$.
If we interpret these binary matrices as length 2 chain complexes, then
the hypergraph product code is described
as depicted in Figure \ref{fig:chaincomplex-hpc}.
\begin{figure}[!h]
\centering
\includegraphics[width=0.6\linewidth]{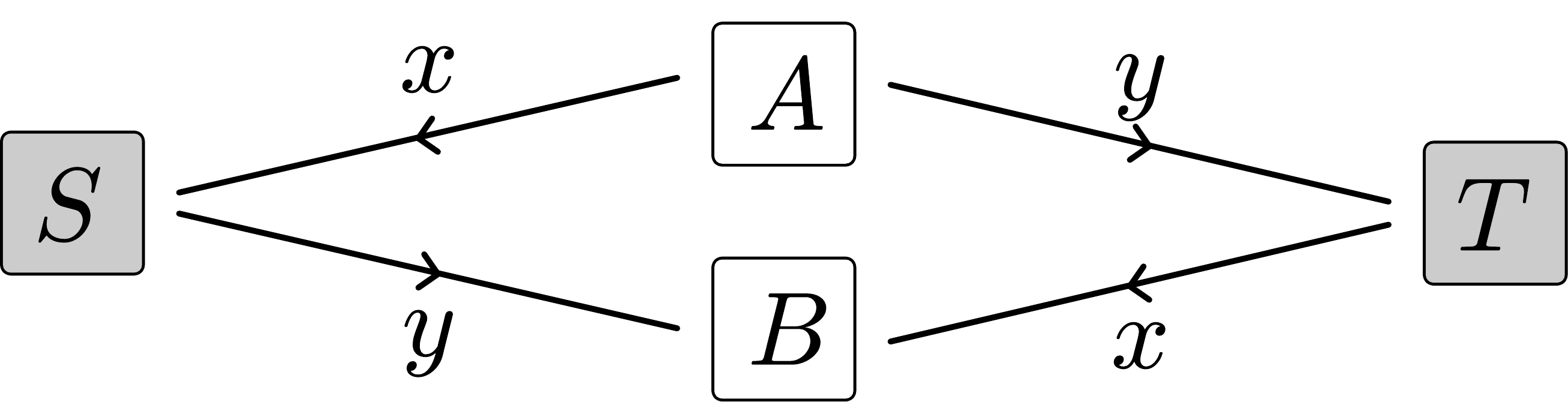}
\caption{``Chain complex'' representation of the hypergraph product
code. The $x$ label corresponds to the operator $H_1 \otimes \1$ and the
$y$ label corresponds to the operator $\1 \otimes H_2$.} 
\label{fig:chaincomplex-hpc}
\end{figure}

We identify the resulting vector spaces with matrices or 2-tensors $A$,
$B$, $S$ and $T$.
If $H_i$ has dimension $m_i \times n_i$, for $i=1,2$, then these tensors
have dimensions
\[
A \in \F_2^{n_1 \times n_2}, \quad
B \in \F_2^{m_2 \times m_3}, \quad
S \in \F_2^{m_1 \times n_2}, \quad
T \in \F_2^{n_1 \times m_2}.
\]
The boundary operators indexed $x$ and $y$ correspond to operators $H_1
\otimes \1$ and $\1 \otimes H_2$.
The boundary operators of the resulting chain complex are hence
$\partial = [H_1 \otimes \1, \1 \otimes H_2]^T$ and $\partial' = [\1
\otimes H_2, H_1 \otimes \1]$, and we have the boundary condition
$\partial' \, \partial = 2 H_1 \otimes H_2 = 0$.

We can derive a quantum code from this chain complex by associating
qubits to the matrix elements of $A$ and $B$, and checks to the matrix
elements of $S$ and $T$.
We will describe a general check on the qubits in $A$ and $B$ by a
tensor over the Pauli group
\[
\begin{bmatrix} {\cal A} \\ {\cal B} \end{bmatrix}
\in \begin{bmatrix} \P^{n_1 \times n_2} \\ \P^{m_1 \times m_2}
\end{bmatrix}.
\]
The checks are generated by the binary tensors $S$ and $T$.
Usually the code is chosen to be of the CSS type by choosing one between
$S$ and $T$ to represent $X$-type checks and the other $Z$-type checks.
Here we are going to consider a Clifford equivalent code which will
generalize more easily to the XYZ product codes with checks of mixed $X$
and $Y$ types.
The checks generated by $S$ correspond to $\sigma_1$ checks on the
qubits in $A$ with pattern $(H_1^T \otimes \1) S$, and $\sigma_2$ checks
on the qubits in $B$ with pattern $(\1 \otimes H_2) S$.
Similarly, the checks from $T$ correspond to $\sigma_1$ checks on $B$
with pattern $(H_1 \otimes \1) T$ and $\sigma_2$ checks on $A$ with
pattern $(\1 \otimes H_2^T) T$.
For a general check $(S,T)$, we use the shorthand
\[
\begin{bmatrix} {\cal A} \\ {\cal B} \end{bmatrix}
= \begin{bmatrix}
\sigma_1^{(H_1^T \otimes \1) S} \; \sigma_2^{(\1 \otimes H_2^T) T} \\
\sigma_2^{(\1 \otimes H_2) S} \; \sigma_1^{(H_1 \otimes \1) T}
\end{bmatrix}.
\]
We will often write this more abstractly as
\[
\begin{bmatrix} {\cal A} \\ {\cal B} \end{bmatrix}
=
\begin{bmatrix}
\uX^\dag & \uY^\dag \\
\uY & \uX
\end{bmatrix}
\begin{bmatrix}
S \\ T
\end{bmatrix}
\]
where we introduce the operators $\uX = \sigma_1^{H_1 \otimes \1}$ and
$\uY = \sigma_2^{\1 \otimes H_2}$.
We let a dagger $\uX^\dag$ denote $\uX^\dag = \sigma_1^{H_1^T
\otimes \1}$, a product $\uX S$ should be interpreted as $\sigma_1^{(H_1
\otimes \1) S}$ and a sum $\uY S + \uX T$ corresponds to a product
$\sigma_2^{(\1 \otimes H_2) S} \; \sigma_1^{(H_1 \otimes \1) T}$.

Pairs of checks indexed by tensors of the type $(S,0)$ and $(S^\prime,0)$ (or $(0,T)$ and $(0,T^\prime)$) always commute as they are the same Pauli type on the same blocks. One then simply needs to check that any pair of checks, indexed by
tensors $(S,0)$ and $(0,T)$, also commutes. 
Note first that for any 2-tensors $A$ and $B$, it holds that 
\[ [\sigma_1^A, \sigma_2^B] = (-1)^{\sum_{i,j} A_{ij} B_{ij}} \1 = (-1)^{ \mathrm{tr} (A^TB)} \1,\]
since the corresponding operators anticommute iff the supports of $A$ and $B$ have an odd overlap.
The commutator of the tensors indexed by $(S,0)$ and $(0,T)$, given by
\[
\begin{bmatrix} {\cal A}_1 \\ {\cal B}_1 \end{bmatrix}
=
\begin{bmatrix}
    \sigma_1^{(H_1^T \otimes \1) S}\\
    \sigma_2^{(\1 \otimes H_2) S}
\end{bmatrix},\qquad \begin{bmatrix} {\cal A}_2 \\ {\cal B}_2 \end{bmatrix}
=
\begin{bmatrix}
    \sigma_2^{(\1 \otimes H_2^T) T} \\
     \sigma_1^{(H_1 \otimes \1) T}
\end{bmatrix},
\]
can be computed as follows:
\begin{align*}
    \left [\begin{bmatrix} {\cal A}_1 \\ {\cal B}_1 \end{bmatrix},
\begin{bmatrix} {\cal A}_2 \\ {\cal B}_2 \end{bmatrix}\right ] &=
(-1)^{\mathrm{tr} (S^T (H_1 \otimes \1) (\1 \otimes H_2^T) T + S^T (\1 \otimes H_2^T)(H_1 \otimes \1) T)}\1 = \1.
\end{align*}

Remark that if the two classical codes $H_1$ and $H_2$ are repetition
codes then one gets the so called $XZZX$ toric code (or more precisely
$XYYX$ here) that has been recently highlighted for its good performance
against all types of Pauli noise \cite{BTB20}.


\section{The XYZ product code construction}
\label{sec:construction}

\subsection{XYZ product code}
\label{sub:homological}

The XYZ product code construction is a generalization of the hypergraph product code involving a third classical code which will enforce Pauli $Y$-type constraints, thus making the quantum code \emph{not CSS}.
Given three parity-check matrices $H_1$, $H_2$ and $H_3$, we again start by defining a chain complex, depicted in Figure \ref{fig:chaincomplex}.

\begin{figure}[!h]
\centering
\includegraphics[width=0.8\linewidth]{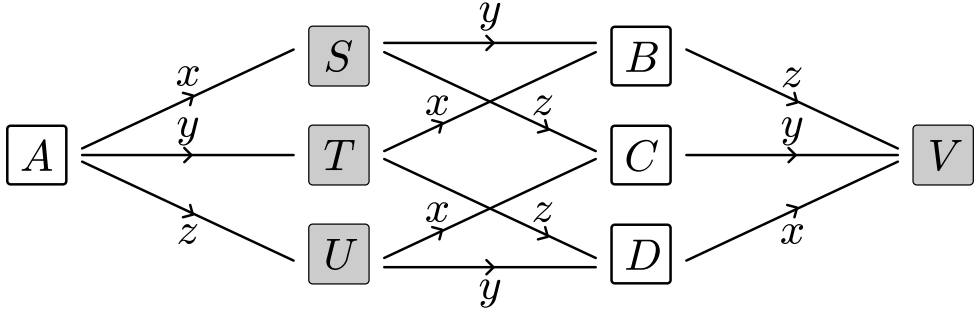}
\caption{``Chain complex'' representation of the XYZ product codes. The $x$ label corresponds to the operator $H_1 \otimes \1 \otimes \1$, the $y$ label to the operator $\1 \otimes H_2 \otimes \1$ and the $z$ label to the operator $\1 \otimes \1 \otimes H_3$.}  
\label{fig:chaincomplex}
\end{figure}

We proceed similarly to the hypergraph product code construction.
We identify the resulting vector spaces with 3-tensors
\[
A \in \F_2^{n_1 \times n_2 \times n_3}, \quad
B \in \F_2^{n_1 \times m_2 \times m_3}, \quad
C \in \F_2^{m_1 \times n_2 \times m_3}, \quad
D \in \F_2^{m_1 \times m_2 \times n_3},
\]
which will index the qubits, and 3-tensors
\[
S \in \F_2^{m_1 \times n_2 \times n_3}, \quad
T \in \F_2^{n_1 \times m_2 \times n_3}, \quad
U \in \F_2^{n_1 \times n_2 \times m_3}, \quad
V \in \F_2^{m_1 \times m_2 \times m_3},
\]
which will index the checks.
The total number of qubits is
\begin{align}
\label{eqn:length}
N = \underbrace{n_1 n_2 n_3}_{n_A} +\underbrace{ n_1 m_2 m_3}_{n_B} +\underbrace{ m_1 n_2 m_3}_{n_C} +\underbrace{ m_1 m_2 n_3}_{n_D}.
\end{align}
A general stabilizer element is a tensor over the Pauli group written abstractly as
\begin{align}\label{eqn:gen}
\begin{bmatrix}
{\cal A} \\ {\cal B} \\ {\cal C} \\ {\cal D} \end{bmatrix} = 
\begin{bmatrix}
\uX^\dagger & \uY^\dagger & \uZ^\dagger & \cdot\\
\uY & \uX &\cdot& \uZ^\dagger\\
\uZ & \cdot& \uX & \uY^\dagger\\
\cdot& \uZ & \uY  &\uX^\dagger \end{bmatrix}
\begin{bmatrix} S\\T \\U \\V\end{bmatrix}
=: \Gamma(S,T,U,V).
\end{align}
Here we introduced the operators\footnote{We repeat that the dimension of the identity operator $\1$ will vary depending on the context.}
\[
\uX = \sigma_1^{H_1 \otimes \1 \otimes \1}, \qquad
\uY = \sigma_2^{\1 \otimes H_2 \otimes \1}, \qquad
\uZ = \sigma_3^{\1 \otimes \1 \otimes H_3}.
\]
We again use the convention that an adjoint $\uX^\dag$ denotes the operator $\uX^\dag = \sigma_1^{H_1^T \otimes \1 \otimes \1}$, a product $\uY S$ denotes $\sigma_2^{(\1 \otimes H_2 \otimes \1) S}$, and a sum $\uY S + \uX T$ denotes the product $\sigma_2^{(\1 \otimes H_2 \otimes \1) S} \sigma_1^{(H_1 \otimes \1 \otimes \1) T}$.
We call the resulting stabilizer code the \textit{XYZ product code}, and we refer to it by the shorthand ${\cal Q}(H_1,H_2,H_3)$.
In Figure \ref{fig:cubecube} we give a slightly alternative representation of the code or chain complex.
It better reflects how we will be able to embed the code into 3D for certain choices of $H_i$'s.

\begin{figure}[!h]
\centering
\includegraphics[width=0.6\linewidth]{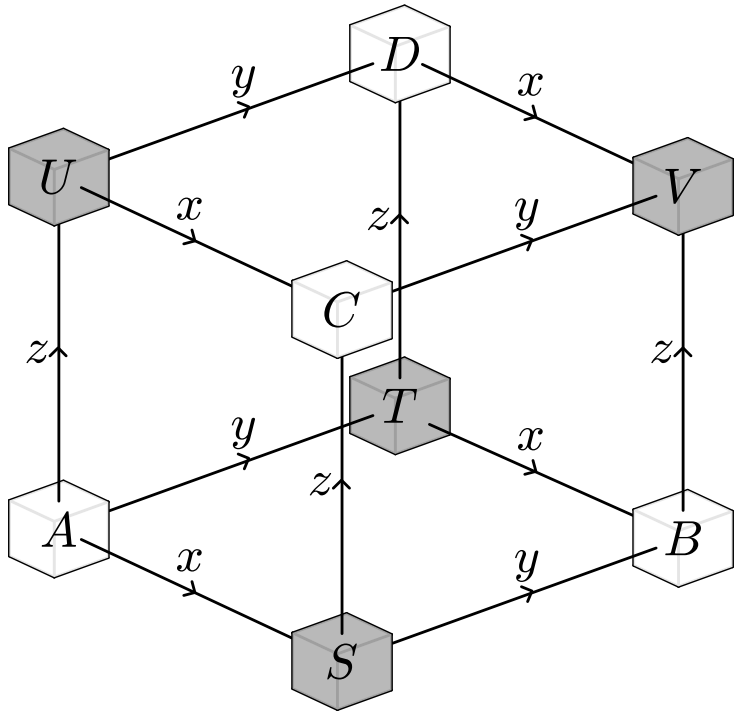}
\caption{3D structure of the XYZ product code: qubits indexed by the 3-tensors $A, B, C, D$, generators by $S, T, U, V$.}  
\label{fig:cubecube}
\end{figure}

Let us establish that the resulting stabilizer group is Abelian.
\begin{lemma}
\label{lem:xyz}
The stabilizer group of the XYZ product code is Abelian.
\end{lemma}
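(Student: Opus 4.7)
My plan is a direct computation of the commutator. Because the map $\Gamma$ of \eqref{eqn:gen} is $\F_2$-linear in $(S,T,U,V)$ and Pauli commutators are bilinear (in the exponents), I can reduce to checking commutation between elementary generators of the four types $\Gamma(S,0,0,0)$, $\Gamma(0,T,0,0)$, $\Gamma(0,0,U,0)$, $\Gamma(0,0,0,V)$. Two elementary generators of the same type carry, on each qubit block $A,B,C,D$, the same single Pauli letter and therefore commute automatically. Only the $\binom{4}{2}=6$ cross-type pairs require a real check.

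Each cross-type pair is handled by the same mechanism, which I illustrate on $g_1=\Gamma(S,0,0,0)$ and $g_2=\Gamma(0,T,0,0)$. Reading off \eqref{eqn:gen}, the Pauli supports on the four blocks $(A,B,C,D)$ are: for $g_1$, a $\sigma_1$-pattern $(H_1^T\otimes\1\otimes\1)S$ on $A$, a $\sigma_2$-pattern $(\1\otimes H_2\otimes\1)S$ on $B$, a $\sigma_3$-pattern $(\1\otimes\1\otimes H_3)S$ on $C$, and nothing on $D$; for $g_2$, a $\sigma_2$-pattern $(\1\otimes H_2^T\otimes\1)T$ on $A$, a $\sigma_1$-pattern $(H_1\otimes\1\otimes\1)T$ on $B$, nothing on $C$, and a $\sigma_3$-pattern $(\1\otimes\1\otimes H_3)T$ on $D$. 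A sign in the commutator can only arise on blocks where both generators are non-identity and carry different Pauli letters, which in this case are only the blocks $A$ and $B$. Writing $\langle X,Y\rangle$ for the entry-wise mod-$2$ inner product of two tensors, one finds
\[
[g_1,g_2] = (-1)^{\langle (H_1^T\otimes\1\otimes\1)S,\,(\1\otimes H_2^T\otimes\1)T\rangle \;+\; \langle (\1\otimes H_2\otimes\1)S,\,(H_1\otimes\1\otimes\1)T\rangle}\,\1.
\]
Moving the matrices to the right-hand argument using the adjoint property, and using that $H_1\otimes\1\otimes\1$ and $\1\otimes H_2\otimes\1$ act on disjoint tensor factors (so they commute), both exponents reduce to $\langle S,(H_1\otimes H_2^T\otimes\1)T\rangle$; their sum is therefore $0\bmod 2$ and the commutator is $+\1$.

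The remaining five cross-type pairs are settled by the same recipe. In every case the commutator sign localizes to exactly two of the four blocks, and on those two blocks the two contributions differ only by the order in which two of the mutually commuting lifted operators from $\{H_1\otimes\1\otimes\1,\;\1\otimes H_2\otimes\1,\;\1\otimes\1\otimes H_3\}$ appear; hence the two contributions coincide and cancel modulo $2$. The specific pattern of daggers in the four columns of $\Gamma$ is precisely what pairs up the two surviving blocks correctly in each of the six cases. From this viewpoint the lemma is the three-dimensional analogue of the computation already carried out for the hypergraph product code in Section~\ref{sec:hypergraph-codes}, and the only mild obstacle is the bookkeeping across six cross-type pairs rather than any conceptual novelty.
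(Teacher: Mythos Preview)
Your proof is correct and rests on the same mechanism as the paper's: the two contributions to each cross-type commutator coincide because the lifted operators $H_1\otimes\1\otimes\1$, $\1\otimes H_2\otimes\1$, $\1\otimes\1\otimes H_3$ commute pairwise, which is exactly the hypergraph product cancellation recalled in Section~\ref{sec:hypergraph-codes}.

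The organizational choice is slightly different. You decompose by \emph{generator type}, reducing to the six unordered pairs among $\{S,T,U,V\}$; the paper instead writes the commutator of two general stabilizer elements as a sum over the six \emph{block pairs} $\{A,B\},\{A,C\},\dots,\{C,D\}$ and invokes the hypergraph product computation on each pair. These are dual bookkeepings of the same twelve cross-terms (three Pauli-pair terms on each of the four blocks), and both collapse to the identity $\langle (H_i\otimes\1)\,\cdot,(1\otimes H_j)\,\cdot\rangle=\langle (1\otimes H_j)\,\cdot,(H_i\otimes\1)\,\cdot\rangle$. Your version has the virtue of being fully explicit: once you restrict to single-column generators, each cross pair is supported on exactly two blocks and the cancellation is visible without further unpacking. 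The paper's version is terser but asks the reader to recognize the hypergraph product structure inside each block pair. Either way there is no conceptual gap between the two arguments.
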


As noted before, one also needs to prove that $-\1$ is not in the stabilizer group to show that the code is readily a well defined stabilizer code. This question turns out to be rather subtle and we will discuss two possible solutions in the following: fixing the $\pm1$ phases of the generators, or considering the CSS version of the XYZ product code (see Sections \ref{sub:minusone} and \ref{sub:CSS} for details).

\begin{proof}
With $\Gamma$ as in \eqref{eqn:gen}, consider two arbitrary elements of the stabilizer
\[
({\cal A}_1, {\cal B}_1, {\cal C}_1, {\cal D}_1)^T = \Gamma (S_1,T_1, U_1, V_1)
\quad \text{ and } \quad
({\cal A}_2, {\cal B}_2, {\cal C}_2, {\cal D}_2)^T = \Gamma (S_2,T_2, U_2, V_2).
\]
We can compute their commutator:
\begin{align*}
[ ({\cal A}_1, {\cal B}_1, {\cal C}_1, {\cal D}_1)^T, ({\cal A}_2, {\cal B}_2, {\cal C}_2, {\cal D}_2)^T]
&=  [ ({\cal A}_1, {\cal B}_1)^T, ({\cal A}_2, {\cal B}_2)^T]
 + [ ({\cal A}_1,{\cal C}_1)^T, ({\cal A}_2, {\cal C}_2)^T] \\
& \quad +[ ({\cal A}_1, {\cal D}_1)^T, ({\cal A}_2, {\cal D}_2)^T]
 +[ ({\cal B}_1, {\cal C}_1)^T, ( {\cal B}_2, {\cal C}_2)^T] \\
& \quad +[ ({\cal B}_1, {\cal D}_1)^T, ( {\cal B}_2, {\cal D}_2)^T]
 +[ ({\cal C}_1, {\cal D}_1)^T, ({\cal C}_2, {\cal D}_2)^T].
\end{align*}
Now each of these small commutators corresponds to a commutator between $X$-type and $Z$-type generators in the standard hypergraph product code construction (see Section \ref{sec:hypergraph-codes}).
Consequently they all vanish. 
\end{proof}

We can alternatively use the notation $({\cal A}, {\cal B}, {\cal C}, {\cal D}) \in \P^{n_A+n_B+n_C+n_D}$ to denote a Pauli error.
Its syndrome can then be abstractly written as
\[
\begin{bmatrix}
S\\T \\U \\V \end{bmatrix} = 
\begin{bmatrix}
\uX & \uY^\dagger & \uZ^\dagger & \cdot\\
\uY & \uX^\dagger &\cdot& \uZ^\dagger\\
\uZ & \cdot& \uX^\dagger & \uY^\dagger\\
\cdot& \uZ & \uY& \uX
\end{bmatrix}
*
\begin{bmatrix}
{\cal A} \\ {\cal B} \\ {\cal C}\\ {\cal D}
\end{bmatrix}
=: \Sigma * \begin{bmatrix}
{\cal A} \\ {\cal B} \\ {\cal C}\\ {\cal D}
\end{bmatrix}.
\]
We call $\Sigma$ the ``parity-check'' operator of the XYZ product code. 
The star product $*$ between a pair of Pauli operators is
\[
\sigma_i *\sigma_j
= \delta_{i\ne j},
\]
and we extend it to a pair of tensors of equal size (say $\uX * {\cal A}$) by taking the star product elementwise.
The product gives 0 where two Pauli operators commute, and it gives 1 where they commute.
This implies that $\uX$ detects only errors that do not commute with $X$ (and similarly for $\uY$, $\uZ$).

Now we can check that the syndrome of a stabilizer vanishes by computing the operator $\Sigma * \Gamma $:
\[\Sigma * \Gamma= \left[\begin{smallmatrix}
\uX * \uX^\dagger + \uY^\dagger  * \uY+ \uZ^\dagger * \uZ & \uX* \uY^\dag + \uY^\dag* \uX & \uX *\uZ^\dag + \uZ^\dag *\uX & \uY^\dag * \uZ^\dag +\uZ^\dag *\uY^\dag \\
\uX^\dagger * \uY + \uY*\uX^\dagger& \uX^\dagger* \uX + \uY^\dagger *\uY + \uZ *\uZ^\dagger & \uY* \uZ^\dagger+\uZ^\dagger*\uY & \uX^\dagger* \uZ^\dagger+\uZ^\dagger *\uX^\dagger \\
\uX^\dagger* \uZ+\uZ *\uX^\dagger & \uY^\dagger* \uZ+ \uZ *\uY^\dagger & \uX^\dagger*\uX + \uY^\dagger* \uY + \uZ* \uZ^\dagger &  \uX^\dagger* \uY^\dagger+\uY^\dagger * \uX^\dagger \\
 \uY* \uZ+\uZ *\uY & \uX* \uZ+\uZ*\uX & \uX * \uY+\uY*\uX, & \uX* \uX^\dagger + \uY *\uY^\dagger + \uZ *\uZ^\dagger 
\end{smallmatrix}\right]=0,
\]
where we used that \textit{e.g.}~$\uX * \uX^T = 0$ (because the $\sigma_1$-operators commute) and that $\uX * \uY + \uY * \uX = 2 \uX * \uY = 0$ (recall that the dimensions of $\uX$ depend on its position in the matrix).


\subsection{The case of $-\1\in\mathcal{S}$}
\label{sub:minusone}
The usual prescription to define a stabilizer code is to pick an Abelian subgroup of the $n$-qubit Pauli group $\mathcal{S}\subset\mathcal{P}_n$ which does not contain $-\1$.
With such a choice it is guaranteed that there exists a common $+1$ eigenspace to all the elements of $\mathcal{S}$ which is defined to be the code space.
As mentioned above, for the XYZ product code construction, we are only able to prove that $-\1\not\in\mathcal{S}$ in some restricted cases.
We show however, that $-\1\in \mathcal{S}$ is not in fact a problem to define a code.

If $\mathcal{S}$ contains $-\1$ then there cannot be a common $+1$-eigenspace to all the elements in $\mathcal{S}$.
Nevertheless, the generators of $\mathcal{S}$ are Hermitian operators, \textit{i.e.}, observables of the system, and the fact that $\mathcal{S}$ is Abelian ensures that they are all simultaneously diagonalizable and can therefore be measured simultaneously.
That $-\1$ is in $\mathcal{S}$ simply prevents the measurement outcomes to all be $+1$.
To still use such a group to define a code one just has to settle on a consistent convention for which elements of $\mathcal{S}$ should yield $+1$ and which ones $-1$ when measured.

To explain in more details how to make this choice consider an Abelian subgroup $\mathcal{S}\subset\mathcal{P}_n$ which contains $-\1$.
We can consider $\mathcal{S}$ up to phases by looking at the quotient subgroup $\mathcal{S}^\pm = \mathcal{S}/\langle -\1\rangle$, since $\langle -\1\rangle$ is a normal subgroup of $\mathcal{S}$.
We can first pick an independent generating set for $\mathcal{S}^\pm$:
\begin{equation}
	\left \langle g^\pm_1,\ldots,g^\pm_{n-k}\right \rangle = \mathcal{S}^\pm.\label{eq:genSpm}
\end{equation}
Then for each generator $g^\pm_i\in\mathcal{S}^\pm$ choose any corresponding representative $g_i\in\mathcal{S}$ (there are two choices $g_i$ and $-g_i$).
They give directly an independent generating set for $\mathcal{S}$: 
\[
	\mathcal{S} = \left \langle-\1,g_1,\ldots,g_{n-k}\right \rangle.
\]
The subgroup $\tilde{\mathcal{S}} = \left \langle g_1,\ldots,g_{n-k}\right \rangle$ is a valid stabilizer subgroup of $\mathcal{S}$ since it does not contain $-\1$ (by the independence assumption on the $g_i$'s) and we can finally define the code to be the stabilizer code stabilized by $\tilde{\mathcal{S}}$.
If the initial $\mathcal{S}$ was given by an original generating set
\[
	\mathcal{S} = \left \langle g_1^{\rm orig.},\ldots,g_m^{\rm orig.}\right \rangle,
\]
for some $m$, then for each of them, either $+g_i^{\rm orig.}\in\tilde{\mathcal{S}}$ or $-g_i^{\rm orig.}\in\tilde{\mathcal{S}}$ and checking which is the case allows to keep the original checks and sets the convention to be adopted for syndrome measurements and error correction.
The dimension of the code is given by the number of independent generators of $\mathcal{S}^\pm$ and is independent of the choice of signs for $\tilde{\mathcal{S}}$, provided that they are consistent.
Similarly the minimal distance is independent of the choice of signs, since commutation relations are not modified by signs, so that the parameters $\llbracket n,k,d \rrbracket$ are always well defined even if $-\1\in\mathcal{S}$.

A practical way to obtain a basis for $\mathcal{S}^\pm$ in \eqref{eq:genSpm} is to use the mapping to CSS codes described below in Section \ref{sub:CSS}, Eq.~\eqref{eq:emulPauli}.
This mapping forgets phases and allows one to simply find a basis for $\mathcal{S}^\pm$ using standard binary linear algebra.

\subsection{A CSS version of the construction}
\label{sub:CSS}

The XYZ product code construction yields a stabilizer code which is not CSS. In particular each generator involves all three Pauli matrices. Besides there is no generic transversal Clifford operation that can transform it into a CSS code since each qubit is \textit{a priori} acted on by different stabilizer generators with the three different types of Pauli.
However, the potentially good asymptotic parameters of the XYZ product code are not tied to this non-CSS nature, and we can in fact derive a CSS code with the same parameters up to constant factors.
A recipe to turn any $\llbracket n,k,d\rrbracket $ stabilizer code into a $\llbracket 4n, 2k, d\rrbracket $ CSS stabilizer code was devised in \cite{BTL10} using a Majorana fermion code as an intermediary code.
The transformation also preserves locality and LDPC properties.
We recall it here directly, skipping the intermediary Majorana fermion code.

The gist of the idea is to emulate the one-qubit Pauli group (ignoring phases) with just one type of Pauli operators acting on four qubits using the following mapping that we denote $\phi_i$ for $i\in\{1,2,3\}$ for each choice of Pauli:
\begin{equation}
\begin{tabular}{l c c c}
	$\phi_i:$& $\mathcal{P}$ &$\longrightarrow$& $\mathcal{P}^{\otimes 4}$\\
	            & $\1$ &$\mapsto$ & $\1\otimes\1\otimes\1\otimes\1$\\
	            & $\sigma_1$ &$\mapsto$ & $\sigma_i\otimes\sigma_i\otimes\1\otimes\1$\\
	            & $\sigma_2$ &$\mapsto$ & $\sigma_i\otimes\1\otimes\sigma_i\otimes\1$\\
	            & $\sigma_3$ &$\mapsto$ & $\sigma_i\otimes\1\otimes\1\otimes\sigma_i$
\end{tabular}.\label{eq:emulPauli}
\end{equation}
Note that the product of the three mapped Pauli operators is the four-body Pauli $\sigma_i\otimes\sigma_i\otimes\sigma_i\otimes\sigma_i$.
This indicates that this operator should be enforced as a stabilizer.
With this stabilizer taken into account, the mapping correctly reproduces the multiplicative law of the group but not the commutation relations which become trivial.
The commutation relation between two mapped elements with a different choice of Pauli for the mapping are correctly reproduced
\begin{align}
	\forall i,j,k,\ell\in\{1,2,3\}\quad\phi_i(\sigma_j)\phi_k(\sigma_\ell) &= (-1)^{\delta_{i\ne k}\delta_{j\ne \ell}}\phi_k(\sigma_\ell)\phi_i(\sigma_j)\label{eq:emulcommutation}\\
	\phi_i(\1)\phi_k(\sigma_\ell) &= \phi_k(\sigma_\ell)\phi_i(\1).\label{eq:emulcommutationid}	
\end{align}
The mapping can be straightforwardly extended to the multi-qubit Pauli group mapping independently each single-qubit Pauli which we also denote $\phi_i:\mathcal{P}^{\otimes n}\longrightarrow\mathcal{P}^{\otimes 4n}$.

We can now describe the transformation from a $\llbracket n,k,d\rrbracket $ stabilizer code $\mathcal{C}$ to a $\llbracket n_\text{CSS}=4n,k_\text{CSS}=2k,d_\text{CSS}=2d \rrbracket $ CSS code $\mathcal{C}_{\text{CSS}}$ in three steps:
\begin{enumerate}
	\item For each qubit of the original code, create four qubits.
	\item Enforce $\sigma_1\otimes\sigma_1\otimes\sigma_1\otimes\sigma_1$ and $\sigma_3\otimes\sigma_3\otimes\sigma_3\otimes\sigma_3$ as stabilizers on each group of four qubits created.\label{step:4body}
	\item For each stabilizer generator of the original code, say $S\in\mathcal{P}^{\otimes n}$, enforce $\phi_1(S)$ and $\phi_3(S)$ as stabilizers.\label{step:stab}
\end{enumerate}

The number of qubits in the new code is indeed $n_\text{CSS}=4n$.
It is a valid stabilizer code per Eqn.~\eqref{eq:emulcommutation} and \eqref{eq:emulcommutationid} and it is CSS since all the stabilizers created with $\phi_1$ are fully $X$-type and the ones created with $\phi_3$ fully $Z$-type.
To compute the dimension of the new code we have to count the number of independent stabilizers.
The four body stabilizers added at step \ref{step:4body} form $2n$ independent stabilizers.
With these stabilizers taken into account, the mapping then correctly reproduces the multiplicative law of the Pauli group, hence step \ref{step:stab} creates twice as many independent stabilizers as in the original code, that is $2(n-k)$.
This makes for $k_\text{CSS} = 4n - 2n -2(n-k) = 2k$ logical qubits.
Finally for the distance, any logical Pauli operator of $\mathcal{C}_\text{CSS}$ has to commute with the four-body checks of step \ref{step:4body} and hence act on an even number of qubits from each group of four representing one original qubit.
Hence it can always be mapped to a logical operator of the original code which has to have weight at least $d$ and we conclude that $d_\text{CSS} = 2d$.

This establishes the following:
\begin{lemma}[\cite{BTL10}]\label{lem:nonCSS2CSS}
A $\llbracket n,k,d\rrbracket$ stablizer code can be mapped to a $\llbracket 4n, 2k, d\rrbracket$ CSS code with generator weights not larger than twice the weights of the original code. 
\end{lemma}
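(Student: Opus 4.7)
The plan is to follow the three-step construction laid out in the paragraphs preceding the lemma. First I would formally define the mapping: to each qubit of $\mathcal{C}$ associate a block of four qubits, then enlarge the stabilizer group by the $2n$ four-body generators $\sigma_1^{\otimes 4}$ and $\sigma_3^{\otimes 4}$ (one of each per block), and finally add $\phi_1(S_\alpha)$ and $\phi_3(S_\alpha)$ for every generator $S_\alpha$ of the original stabilizer $\mathcal{S}$. The CSS property is immediate, since $\phi_1(S_\alpha)$ is supported purely on $\sigma_1$ and $\phi_3(S_\alpha)$ purely on $\sigma_3$, and likewise for the four-body generators. The bound on generator weights is read off directly from \eqref{eq:emulPauli}: a single non-identity Pauli is sent to a two-body Pauli, so the weight of each generator at most doubles.

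Next I would verify that the enlarged group is a valid stabilizer group. Commutation between two four-body generators is clear, and commutation between a four-body generator and any $\phi_i(S)$ follows from the block-wise product structure. For two generators $\phi_i(S)$ and $\phi_j(S')$, identities \eqref{eq:emulcommutation} and \eqref{eq:emulcommutationid} reduce the commutator to that of $S$ and $S'$ in the original code, which vanishes by hypothesis. To rule out $-\1$ in the enlarged group I would argue that, modulo the four-body block stabilizers, the map $\phi_i$ is a faithful group homomorphism $\mathcal{P}^{\otimes n}/\langle -\1\rangle \to \mathcal{P}^{\otimes 4n}/\langle \sigma_1^{\otimes 4},\sigma_3^{\otimes 4}\rangle_{\text{blocks}}$, so any relation yielding $-\1$ would pull back to one in $\mathcal{S}$. (If $-\1 \in \mathcal{S}$ originally, Section \ref{sub:minusone} lets us replace $\mathcal{S}$ by a valid sign-fixed subgroup with the same parameters before applying the construction.)

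For the dimension I would then count independent generators. The $2n$ four-body generators are independent since they act on disjoint blocks with two independent operators each. After quotienting by these, the injectivity of $\phi_i$ modulo phases implies that the $2(n-k)$ added generators $\phi_1(S_\alpha),\phi_3(S_\alpha)$ are independent as well, yielding
\[
k_\text{CSS} \;=\; 4n - 2n - 2(n-k) \;=\; 2k.
\]

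Finally, for the distance I would take any logical $\mathcal{L} \in N(\mathcal{S}_\text{CSS}) \setminus \mathcal{S}_\text{CSS}$ and exploit its commutation with $\sigma_1^{\otimes 4}$ and $\sigma_3^{\otimes 4}$ on each block to show that, modulo the four-body stabilizers, $\mathcal{L}$ can be brought into a canonical form where each block lies in the image of some $\phi_i$. This yields a descent map $\mathcal{L}\mapsto \bar{\mathcal{L}} \in \mathcal{P}^{\otimes n}$ with $|\bar{\mathcal{L}}|$ at most half of $|\mathcal{L}|$. The main obstacle is to check carefully that $\bar{\mathcal{L}}$ is still a nontrivial logical of $\mathcal{C}$, i.e.\ that it normalizes $\mathcal{S}$ but does not lie in $\mathcal{S}$; this requires tracking how block canonicalization interacts with the generators $\phi_i(S_\alpha)$ and using the injectivity statement above. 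Once established, $|\bar{\mathcal{L}}|\geq d$ gives $|\mathcal{L}|\geq d$, proving the claimed distance bound (and in fact yielding the stronger $|\mathcal{L}|\geq 2d$ sketched earlier in the section).
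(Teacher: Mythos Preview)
Your proposal is correct and follows essentially the same approach as the paper: the identical three-step construction, the same dimension count $4n-2n-2(n-k)=2k$, and the same distance argument via commutation with the four-body block stabilizers followed by descent to a logical of the original code. You are somewhat more explicit than the paper in verifying that the enlarged group is a valid stabilizer (commutation and absence of $-\1$) and in flagging the care needed to ensure the descended operator remains a nontrivial logical, but these are elaborations of the same line of reasoning rather than a different route.
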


This results apply directly to XYZ products codes when $-\1\not\in \mathcal{S}$. In the case where $-\1\in\mathcal{S}$ the initial stabilizer code with parameters $\llbracket n,k,d\rrbracket$ to consider is any given by a consistent choice of signs for the generators as explained in Sec.~\ref{sub:minusone}. The mapping to a CSS code with parameters $\llbracket 4n, 2k, 2d\rrbracket$ can be directly applied to the original set of generators as the mapping forgets phases.

\subsection{Comparison with 3-fold Hypergraph Product codes}
\label{sub:comparison}
Other approaches have used the structure given by $\mathbb{F}_2$ chain complexes, including ones build from $D$-fold hypergraph products, in order to build quantum stabilizer codes such as D-dimensional color codes \cite{BM07} or quantum pin codes \cite{VB19} but our approach is completely different here.
The 3-fold hypergraph product code construction described for instance in Refs.~\cite{ZP19,ZP20,QVR20} is maybe the closest to the XYZ product code construction and we now explain how they differ.

The 3-fold hypergraph product code makes use of the same chain complex structure as described in Figure~\ref{fig:chaincomplex}, although not in the same way.
First it discards one of the ends of the complex, say $A$, and associates qubits with only one level of the complex: the blocks $B$, $C$ and $D$ here.
Then one uses the blocks $S$, $T$ and $U$ as $Z$-type stabilizers and block $V$ as $X$-type stabilizers with exactly the same incidence structure but therefore different Pauli operators.

One can also take as an example the difference between the 3D toric code and the Chamon code which are instances of 3-fold hypergraph products and XYZ product codes respectively when the three classical codes are repetition codes.
In that case the chain complex is most conveniently described by a 3D cubic lattice where $A$ are the vertices; $S$, $T$ and $U$ are the edges in the $x$, $y$ and $z$ directions respectively; $B$, $C$ and $D$ are the faces perpendicular to the $x$, $y$ and $z$ directions respectively; and $V$ are the cubes.
This is schematically represented in Figure \ref{fig:3DToricandChamon}.
On this structure the 3D toric code consists in putting qubits on the faces, $Z$-checks on the four faces surrounding each edge and $X$-checks on the six faces surrounding each cube, see Figure~\ref{fig:3DToricandChamon} on the left.
On the other hand, the Chamon code consists in putting qubits on the faces as well as on the vertices and having checks defined by the edges and cubes in the following way. 
Each cube, or edge respectively, is surrounded by six faces, or four faces and two vertices respectively.
The checks act with the Pauli $X$, $Y$ or $Z$ respectively, on the elements away from them in the $x$, $y$ or $z$ direction respectively, see Figure~\ref{fig:3DToricandChamon} on the right.
\begin{figure}[!h]
	\centering
	\includegraphics[width=0.35\linewidth]{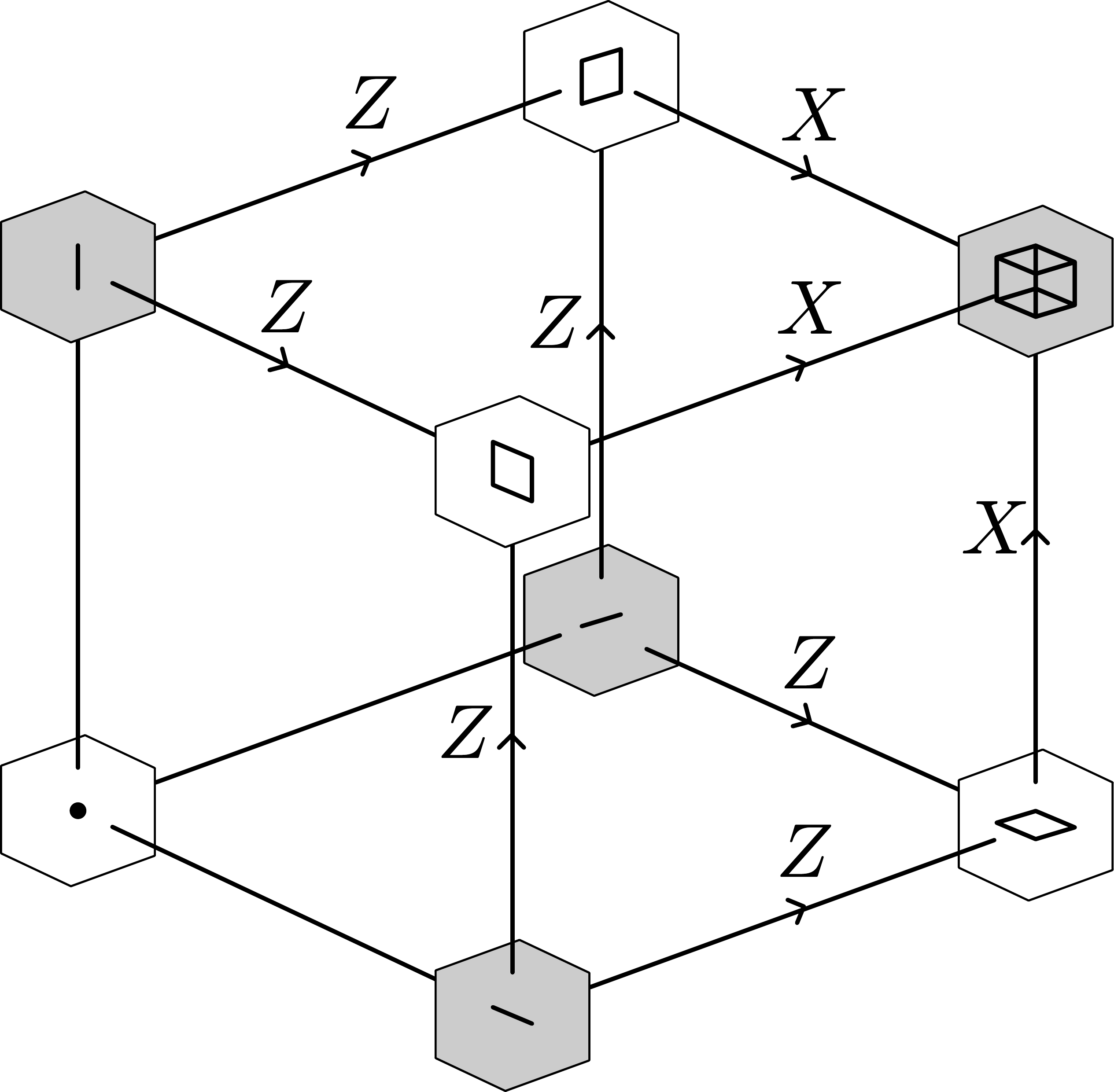}\hfil	\includegraphics[width=0.35\linewidth]{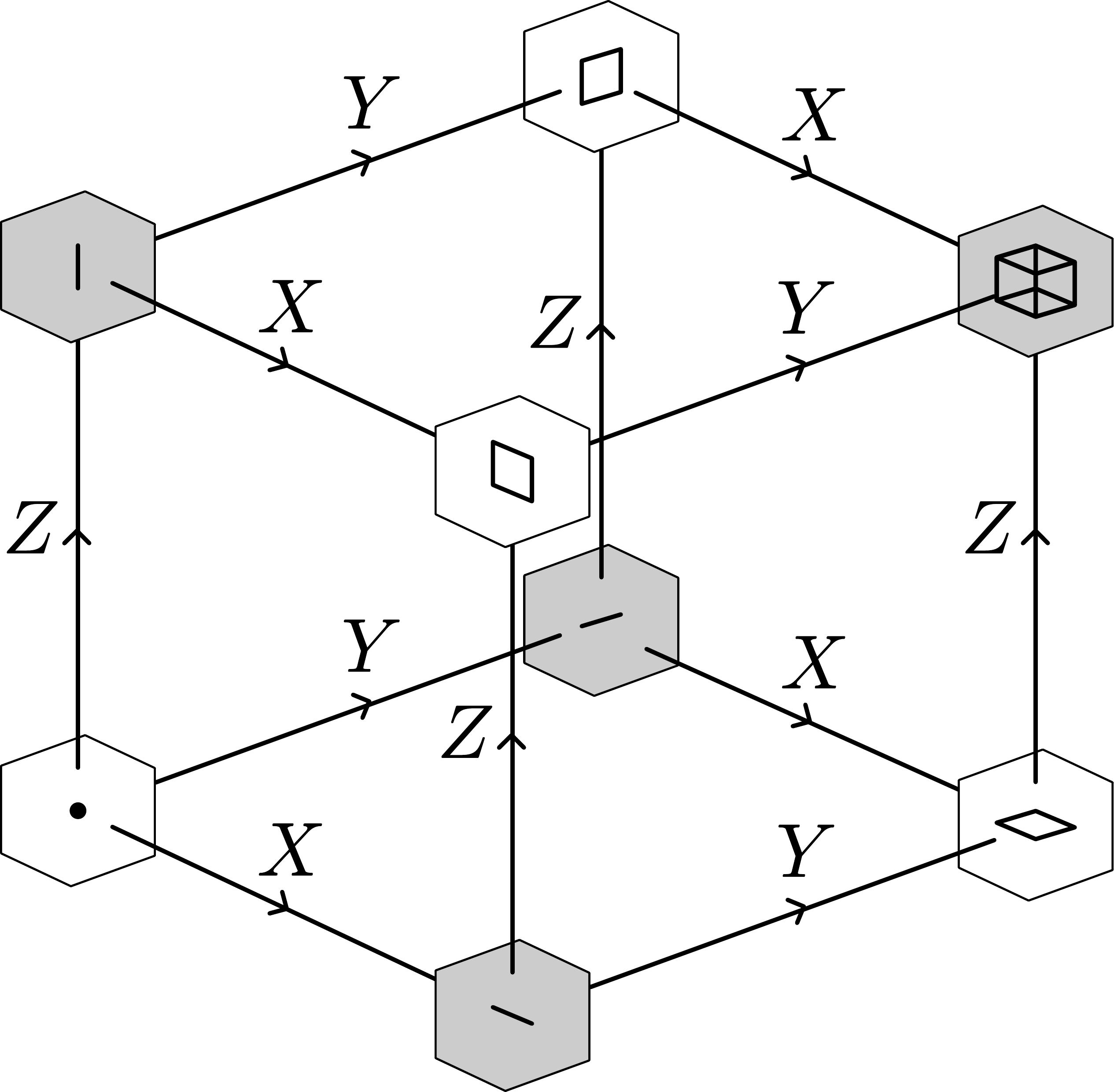}
	\caption{Geometrical interpretation of the blocks when the three classical codes are repetition codes as well as the Pauli operator chosen for the stabilizer of the 3D toric code left and Chamon code right.}  
	\label{fig:3DToricandChamon}
\end{figure}

When the three classical codes are local in 1D then both construction will therefore be local in 3D, using for example this cubic lattice to lay out the qubits. We will study these configurations in detail in Sections \ref{sec:cyclic} and \ref{sec:simplest}.

\section{Dimension of the XYZ product codes}
\label{sec:dim}

In this section, we compute the dimension of the XYZ product codes by finding the number of independent generators. As mentioned before, the XYZ product code is readily well defined if it can be shown that $-\1$ does not belong to the stabilizer group.
If it were the case that $-\1$ is in the group then the following result would apply to the stabilizer code obtained from consistently choosing the signs of the generators to turn it into a stabilizer group (see Section \ref{sub:minusone}) or with a factor $2$ to the mapped CSS version of the code (see Section \ref{sub:CSS}).

If $r$ is the number of ``relations'' between the generators, then the number of independent generators is $n_S + n_T + n_U + n_V - r$.
The dimension $k$ of the code is then given by the formula
\begin{align*}
k
&= n_A + n_B + n_C + n_D - n_S - n_T - n_U - n_V + r \\
&= (n_1 - m_1)(n_2 - m_2)(n_3 - m_3) + r.
\end{align*}
Specific to the XYZ product code is the occurrence of nontrivial relations between generators of different type.
This is because there are generators of $\sigma_1$, $\sigma_2$ as well as $\sigma_3$ type, and we have that $\sigma_1 \sigma_2 \sigma_3 \propto \1$.
This does not occur in CSS constructions.
Specifically we find a relation between a set of generators if the $\sigma_1$-, the $\sigma_2$- and the $\sigma_3$-checks coincide in each of the blocks.
This corresponds to a solution $(S,T,U,V)$ of the ``Pauli system''
\[
\begin{bmatrix}
\uX^\dagger & \uY^\dagger & \uZ^\dagger & \cdot\\
\uY & \uX &\cdot& \uZ^\dagger \\
\uZ & \cdot& \uX & \uY^\dagger \\
\cdot& \uZ & \uY &\uX^\dagger \end{bmatrix}
\begin{bmatrix} S\\T \\U \\V\end{bmatrix}
=
\begin{bmatrix}
(\uX^\dag S) (\uY^\dag T) (\uZ^\dag U) \\
(\uY S) (\uX T) (\uZ^\dag V) \\
(\uZ S) (\uX U) (\uY^\dag V) \\
(\uZ T) (\uY U) (\uX^\dag V)
\end{bmatrix}
\propto \1.
\]
Now consider the shorthand notation $H_1 = H_1 \otimes \1_{\ell_2} \otimes \1_{\ell_3}$, $H_2 = \1_{\ell_1} \otimes H_2 \otimes \1_{\ell_3}$ and $H_3 = \1_{\ell_1} \otimes \1_{\ell_2} \otimes H_3$, where $\ell_i \in \{n_i,m_i\}$ will be chosen so that the dimensions match, similar to the previous section.
Then \textit{e.g.}~the condition $(\uX^\dag S) (\uY^\dag T) (\uZ^\dag U) \propto \1$ becomes equivalent to $H_1^T S = H_2^T T = H_3^T U$.
By the same argument, we can reexpress the ``Pauli system'' using the linear system (over $\mathbb{F}_2$)
\begin{equation} \label{eq:system-dim}
\begin{alignedat}{3}
H_1^T S &= H_2^T T &&= H_3^T U \\
H_1 T &= H_2 S &&= H_3^T V \\
H_1 U &= H_2^T V &&= H_3 S \\
H_1^T V &= H_2 U &&= H_3 T
\end{alignedat}
\end{equation}
We only managed to determine the number of solutions to this linear system (and hence the code dimension) under certain constraints on the parity-check matrices.
The following theorem is a special case (which follows from combining Corollary \ref{cor:dim} with Theorem \ref{thm:syl-3}), and it nicely demonstrates the algebraic nature of the code family.
We let $p^\ell_i$ denote the characteristic polynomial of the $i$-th Jordan block (over the algebraic closure $\overline{\mathbb{F}}_2$) of $H_\ell H_\ell^T$, for $\ell=1,2,3$.
\begin{reptheo}{thm:xyz-dim}
If the parity-check matrices $H_1$, $H_2$ and $H_3$ are invertible, then the dimension of the XYZ product code is equal to the number of solutions of the tensor Sylvester equation
\[
H_1 H_1^T X
= H_2 H_2^T X
= H_3 H_3^T X,
\]
which is equal to
\[
\sum_{i,j,k} \mathrm{deg}(\mathrm{gcd}(p^1_i,p^2_j,p^3_k)).
\]
\end{reptheo}

Using the more elementary notion of \textit{similarity invariant} from abstract algebra \cite{serre2002matrices}, this can be equivalently expressed as
\[
\sum_{i,j,k} \mathrm{deg}(\mathrm{gcd}(h^1_i,h^2_j,h^3_k)),
\]
where $h^\ell_1,\dots,h^\ell_n$ are the similarity invariants of $H_\ell H_\ell^T$ for $\ell=1,2,3$.
From Theorem \ref{thm:xyz-dim} we get the following corollary, which provides a useful condition for the code to have dimension 1.
\begin{corol}\label{cor:T}
If the parity-check matrices $H_1$, $H_2$ and $H_3$ are invertible and the matrices $H_1 H_1^T, H_2 H_2^T$ and $H_3 H_3^T$ have a unique common eigenvalue 1, each with geometric multiplicity 1, then the XYZ product code has dimension 1.
\end{corol}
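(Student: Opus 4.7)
The plan is a direct application of Theorem \ref{thm:xyz-dim}. Since $H_1, H_2, H_3$ are invertible, the theorem yields
\[
k \;=\; \sum_{i,j,k} \deg\bigl(\gcd(p^1_i, p^2_j, p^3_k)\bigr),
\]
where $p^\ell_i$ is the characteristic polynomial of the $i$-th Jordan block of $H_\ell H_\ell^T$ over $\overline{\F}_2$. Each such $p^\ell_i$ has the form $(x - \lambda^\ell_i)^{s^\ell_i}$ with $\lambda^\ell_i$ the eigenvalue of the block and $s^\ell_i \geq 1$ its size, so the task reduces to evaluating this sum in the special setting of the corollary.

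The main step is the elementary observation that the gcd of three polynomials $(x-\lambda_1)^{s_1}, (x-\lambda_2)^{s_2}, (x-\lambda_3)^{s_3}$ is a nonconstant polynomial only when $\lambda_1 = \lambda_2 = \lambda_3$, in which case it equals $(x-\lambda)^{\min(s_1,s_2,s_3)}$ and has degree $\min(s_1,s_2,s_3)$. Consequently, triples $(i,j,k)$ whose three Jordan blocks do not share a common eigenvalue contribute $0$ to the sum, and the sum is supported on triples of blocks corresponding to a common eigenvalue of $H_1 H_1^T$, $H_2 H_2^T$ and $H_3 H_3^T$.

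Applying the hypothesis, the only eigenvalue common to the three matrices is $\lambda = 1$, and in each of them this eigenvalue has geometric multiplicity $1$, so each $H_\ell H_\ell^T$ contributes exactly one Jordan block for $\lambda = 1$, of some size $s_\ell$. The whole sum thus collapses to a single term, namely $\min(s_1, s_2, s_3)$, and the simplicity built into the hypothesis pins each $s_\ell$ to $1$, giving $k = 1$. There is no substantive obstacle: the content of the corollary is a bookkeeping exercise on top of Theorem \ref{thm:xyz-dim}, and the only point that deserves a line of justification is the reduction of the gcd computation to the common-eigenvalue case, which is a standard fact about polynomials of the form $(x-\lambda)^s$ over a field.
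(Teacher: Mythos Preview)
Your approach is exactly the one the paper intends: the corollary is stated without proof as an immediate consequence of Theorem~\ref{thm:xyz-dim}, and you correctly reduce the sum $\sum_{i,j,k}\deg(\gcd(p^1_i,p^2_j,p^3_k))$ to the single term $\min(s_1,s_2,s_3)$ coming from the unique triple of Jordan blocks at the common eigenvalue~$1$.

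There is, however, a genuine gap in your final sentence. Geometric multiplicity~$1$ means each $H_\ell H_\ell^T$ has exactly one Jordan block for the eigenvalue~$1$; it does \emph{not} force the size $s_\ell$ of that block to be~$1$. A single Jordan block of size~$2$ already has geometric multiplicity~$1$ but $s_\ell=2$. So from the hypothesis as literally stated you only obtain $k=\min(s_1,s_2,s_3)$, not $k=1$. What is actually needed to conclude is that at least one of the matrices has eigenvalue~$1$ with \emph{algebraic} multiplicity~$1$ (equivalently, its Jordan block at~$1$ has size~$1$). This is arguably a slight imprecision in the corollary's wording; in every place the paper invokes it (the class~$\mathcal{T}$, the modified Chamon code, the cyclic codes of Section~\ref{sec:simplest}) the relevant matrices $H_\ell H_\ell^T$ are explicitly shown to be diagonalizable over~$\overline{\F}_2$, so geometric and algebraic multiplicities coincide and the conclusion goes through. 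Your write-up should either add this diagonalizability assumption or flag the issue, rather than asserting that the ``simplicity built into the hypothesis'' pins each $s_\ell$ to~$1$.
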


Before going into the proofs, we comment on some obstacles towards a full solution of the dimension question. We feel that also in the general, non-invertible case the dimension will be largely determined by the tensor Sylvester equation. Indeed, if we consider \eqref{eq:system-dim} over the complex numbers instead of over $\mathbb{F}_2$, then we can use the Moore-Penrose pseudo-inverse in the proof of Claim \ref{claim:dim} to show that a solution to the Sylvester equation always implies a solution to the full system. Over $\mathbb{F}_2$, however, there does not exist in general a Moore-Penrose inverse, and the argument breaks down. This is closely tied to the fact that over $\mathbb{F}_2$ potentially $\mathrm{rank}(H_1) > \mathrm{rank}(H_1 H_1^T)$ (\textit{e.g.}, if $H_1 = [1 \, 1]$), see \cite{wu1998existence}. As a consequence, the Sylvester equation (involving terms $H_\ell H_\ell^T$) no longer captures the ``full picture''. Indeed, if say $H_1 = [1 \, 1], H_2=H_3=0$, then the (trivial) Sylvester equation has a nonzero solution whereas the original system does not. In some sense, this suggests that homological tools (in addition to the algebraic ones) might be useful. Indeed, the extremal case $H_1 H_1^T = 0$ for $H_1 \neq 0$ describes precisely the boundary condition of a chain complex, so that the rank deficiency of $H_1 H_1^T$ might signal nontrivial homological properties.\\

Let us also mention here what needs to be proven to establish that the non-CSS version of an XYZ product code is a valid stabilizer code. As already pointed out, this would follow from the fact that $-\1$ does not belong to the stabilizer group. 
Any relation among the generators corresponds to a choice of $(S,T,U,V)$ such that \eqref{eq:system-dim} holds. Assuming such a choice and denoting by $n_A, n_B, n_C, n_D$ the Hamming weight of the tensors appearing in each of the 4 equations and $g_S, g_T, g_U, g_V$ the products of the generators of type $S, T, U, V$ among the relation, we want to check that 
\[ g_S g_T g_U g_V = \1.\]
Note that such a relation involves $n_A$ products of type $\sigma_1 \sigma_2 \sigma_3 = i\1$ in $A$, $n_B$ products of type $\sigma_2 \sigma_1 \sigma_3 = -i \1$ in $B$, $n_C$ products of type $\sigma_3 \sigma_1 \sigma_2 = i\1$ in $C$ and $n_D$ products of type $\sigma_3 \sigma_2 \sigma_1 = -i\1$ in $D$. 
The global phase of $g_S g_T g_U g_V $ is therefore
\[ i^{n_A - n_B + n_C - n_D}\]
which will be equal to 1 if and only if
\begin{align}\label{eq:suff-1}
n_A - n_B+n_C - n_D = 0 \mod 4.
\end{align}
Establishing that this holds for any relation among the generators would imply that the XYZ product code is a valid stabilizer code.

\subsection{Code dimension and a tensor Sylvester equation}
\label{sub:sylvester}
The algebraic nature of the code becomes apparent when trying to analyze the linear system in Eqn.~\eqref{eq:system-dim}.
In particular, we get the following necessary condition on any solution of the system, which is a variant on the (homogeneous) Sylvester equation.
Similar equations hold for the other tensors.
\begin{lemma}[Tensor Sylvester Equation]
If $(S,T,U,V)$ is a solution of \eqref{eq:system-dim} then necessarily
\begin{equation} \label{eq:sylv}
H_1 H_1^T V
= H_2 H_2^T V
= H_3 H_3^T V.
\end{equation}
\end{lemma}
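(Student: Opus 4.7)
The plan is a short chase through the linear system \eqref{eq:system-dim}, exploiting the fact that the operators $H_1 = H_1\otimes\1\otimes\1$, $H_2 = \1\otimes H_2\otimes\1$ and $H_3 = \1\otimes\1\otimes H_3$ (and their transposes) all commute with each other pairwise since they act on disjoint tensor factors. That commutation is what turns the four-equation system into the homogeneous tensor Sylvester equation \eqref{eq:sylv}.

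First, I would isolate the block of equations involving $V$ on the left. The fourth row of \eqref{eq:system-dim} reads
\[
H_1^T V \;=\; H_2 U \;=\; H_3 T .
\]
Left-multiplying by $H_1$ and using $H_1 H_2 = H_2 H_1$ and $H_1 H_3 = H_3 H_1$ gives
\[
H_1 H_1^T V \;=\; H_2 (H_1 U) \;=\; H_3 (H_1 T) .
\]
Next, I would substitute $H_1 U$ and $H_1 T$ using the other rows of \eqref{eq:system-dim}: the third row gives $H_1 U = H_2^T V$, and the second row gives $H_1 T = H_3^T V$. Plugging these in yields
\[
H_1 H_1^T V \;=\; H_2 H_2^T V \;=\; H_3 H_3^T V ,
\]
which is exactly \eqref{eq:sylv}.

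There is no real obstacle here — the argument is a two-line manipulation. The only thing to be careful about is bookkeeping of the underlying tensor dimensions (the symbol $H_i$ stands for $H_i$ tensored with the appropriate identities so that each product is well-typed), but once one accepts the shorthand this is automatic. The analogous Sylvester equations for $S$, $T$ and $U$ follow by the same pattern: start from one of the four rows, multiply by the transpose of the factor that stands alone, and substitute using the remaining rows. This symmetry is ultimately a reflection of the $\mathbb{Z}/2\times\mathbb{Z}/2$ symmetry of the XYZ product construction permuting the blocks $(A,B,C,D)$ and $(S,T,U,V)$.
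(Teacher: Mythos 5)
Your proof is correct and takes essentially the same approach as the paper: a short chase through the system \eqref{eq:system-dim} using the commutativity of $H_1\otimes\1\otimes\1$, $\1\otimes H_2\otimes\1$, $\1\otimes\1\otimes H_3$ and their transposes. The only cosmetic difference is that you multiply all of row four by $H_1$ and substitute via rows two and three in parallel, whereas the paper chains first from $H_1H_1^TV$ to $H_2H_2^TV$ (via $U$ and row three) and then from $H_2H_2^TV$ to $H_3H_3^TV$ (via $S$ and row two); both derivations use the same ingredients and are equally valid.
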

\begin{proof}
If $(S,T,U,V)$ is a solution of \eqref{eq:system-dim}, then $H_1 H_1^T V = H_1 H_2 U = H_2 H_1 U = H_2 H_2^T V$.
Similarly, $H_2 H_2^T V = H_2 H_3 S = H_3 H_2 S = H_3 H_3^T V$.
\end{proof}

Now let $s$ denote the number of independent nonzero solutions to \eqref{eq:sylv}.
We can show that, under additional conditions on the parity-check matrices, we can get an explicit expression for the number of solutions as a function of $s$.
We denote $k_\ell = \mathrm{dim}(\ker(H_\ell))$ and $k^T_\ell = \mathrm{dim}(\ker(H_\ell^T))$ for $\ell=1,2,3$.

\begin{claim} \label{claim:dim}
If $H_2 H_2^T$ and $H_3 H_3^T$ are invertible, then the number of independent nonzero solutions to \eqref{eq:system-dim} is given by
\[
s + k^T_1 k_2 k_3.
\]
\end{claim}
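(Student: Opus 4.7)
The plan is to count solutions of \eqref{eq:system-dim} by decomposing along the projection map $\rho : (S,T,U,V) \mapsto V$. By the preceding lemma, $\mathrm{Im}(\rho)$ is contained in the space $\Sigma$ of solutions of \eqref{eq:sylv}, which has dimension $s$. I would establish (i) $\dim \ker \rho = k_1^T k_2 k_3$ and (ii) $\mathrm{Im}(\rho) = \Sigma$; rank--nullity then yields the claimed count $s + k_1^T k_2 k_3$.

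For (i), set $V = 0$. The second, third and fourth rows of \eqref{eq:system-dim} immediately give $H_2 S = H_3 S = 0$, $H_1 T = H_3 T = 0$, and $H_1 U = H_2 U = 0$. Combining the top-row equation $H_3^T U = H_1^T S$ with $H_3 S = 0$ yields
\[
H_3 H_3^T U = H_3 H_1^T S = H_1^T H_3 S = 0,
\]
and invertibility of $H_3 H_3^T$ forces $U = 0$; an analogous argument using $H_2 H_2^T$ gives $T = 0$. The surviving constraints on $S$ are $S \in \ker(H_1^T \otimes \1 \otimes \1) \cap \ker(\1 \otimes H_2 \otimes \1) \cap \ker(\1 \otimes \1 \otimes H_3) = \ker(H_1^T) \otimes \ker(H_2) \otimes \ker(H_3)$, a subspace of dimension $k_1^T k_2 k_3$. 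One then checks that any such $S$ (with $T=U=V=0$) is indeed a solution.

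For (ii), given $V \in \Sigma$, I would exhibit an explicit preimage using the right inverse $\pi_\ell := H_\ell^T (H_\ell H_\ell^T)^{-1}$ of $H_\ell$ (well-defined for $\ell \in \{2,3\}$, satisfying $H_\ell \pi_\ell = \1$). I propose
\[
S_0 = (\1 \otimes \pi_2 \otimes H_3^T) V, \quad T_0 = (H_1^T \otimes \1 \otimes \pi_3) V, \quad U_0 = (H_1^T \otimes \pi_2 \otimes \1) V.
\]
Four of the eight equations collapse at once using only $H_\ell \pi_\ell = \1$: $H_2 S_0 = H_3^T V$, $H_2 U_0 = H_1^T V$, $H_3 T_0 = H_1^T V$, and $H_1^T S_0 = H_3^T U_0 = (H_1^T \otimes \pi_2 \otimes H_3^T) V$. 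The remaining four equations require the Sylvester condition \eqref{eq:sylv} in an essential way.

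The pattern for these Sylvester-dependent checks is to convert an $H_i H_i^T$ factor on one coordinate into an $H_j H_j^T$ factor on another via \eqref{eq:sylv}, then simplify using $\pi_\ell H_\ell H_\ell^T = H_\ell^T$ or $H_\ell \pi_\ell = \1$. For instance, $H_1 T_0 = (H_1 H_1^T \otimes \1 \otimes \pi_3) V = (\1 \otimes \1 \otimes \pi_3)(H_1 H_1^T \otimes \1 \otimes \1) V = (\1 \otimes \1 \otimes \pi_3 H_3 H_3^T) V = H_3^T V$ by Sylvester. The most delicate step, and the main technical hurdle, is $H_1^T S_0 = H_2^T T_0$: after expanding $\pi_\ell = H_\ell^T (H_\ell H_\ell^T)^{-1}$, both sides become $(H_1^T \otimes H_2^T \otimes H_3^T)$ applied respectively to $(\1 \otimes \mu_2 \otimes \1) V$ and $(\1 \otimes \1 \otimes \mu_3) V$, where $\mu_\ell := (H_\ell H_\ell^T)^{-1}$; equality then follows by applying $\mu_2 \mu_3$ (which commutes as it acts on different coordinates) to \eqref{eq:sylv}. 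This confirms that \eqref{eq:sylv} captures exactly the obstruction to extending $V$ to a full solution, completing the surjectivity and hence the count.
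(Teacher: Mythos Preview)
Your proof is correct and follows essentially the same approach as the paper: both count solutions by fibering over $V$, exhibit an explicit preimage for each $V$ satisfying the Sylvester equation using the right inverses $H_\ell^T(H_\ell H_\ell^T)^{-1}$ (your $S_0,T_0,U_0$ coincide with the paper's $S,T,U$ once one uses $\mu_2 V = \mu_3 V$), and identify the fiber over $V=0$ with $\ker(H_1^T)\otimes\ker(H_2)\otimes\ker(H_3)$. Your derivation of $T=U=0$ in the kernel step, via $H_3 H_3^T U = H_1^T H_3 S = 0$, is in fact more explicit than the paper's, which simply asserts $T' \in \ker H_2^T$ and $U' \in \ker H_3^T$ without spelling out the intermediate manipulation.
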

\begin{proof}
We already showed that \eqref{eq:sylv} is a necessary condition.
However, it is also a sufficient condition on $V$ since any solution $V$ to \eqref{eq:sylv} describes a solution to \eqref{eq:system-dim} by setting
\[
S = H_3^T (H_3 H_3^T)^{-1} H_2^T V, \quad
T = H_3^T (H_3 H_3^T)^{-1} H_1^T V, \quad
U = H_2^T (H_2 H_2^T)^{-1} H_1^T V,
\]
which is straightforward to verify.
Now assume that we have an alternative solution of the form $(S+S',T+T',U+U',V)$, then necessarily $S' \in \ker(H_1^T) \cap \ker(H_2) \cap \ker(H_3)$, $T' \in \ker(H_1) \cap \ker(H_2^T) \cap \ker(H_3) = \varnothing$ and $U' \in \ker(H_1) \cap \ker(H_2) \cap \ker(H_3^T) = \varnothing$ (using that by our assumption $H_2^T$ and $H_3^T$ have full rank).
The claim follows by noting that $\mathrm{dim}(\ker(H_1^T) \cap \ker(H_2) \cap \ker(H_3)) = k^T_1 k_2 k_3$ as a consequence of the tensor structure.
\end{proof}
\noindent
By permutation symmetry, we reach a similar conclusion if the pairs $H_1 H_1^T$ and $H_2 H_2^T$ or $H_1 H_1^T$ and $H_3 H_3^T$ are invertible.
In fact, by considering the Sylvester equation on $S$, $T$ or $U$ instead of on $V$, we get any of the 12 possibilities (invertibility of pairs of the form $H_\ell H_\ell^T$ or $H_\ell^T H_\ell$).
We have the following corollary.
\begin{corol} \label{cor:dim}
If $H_2 H_2^T$ and $H_3 H_3^T$ are invertible, then the dimension of the code is given by
\[
(n_1 - m_1)(n_2 - m_2)(n_3 - m_3) + s + k_1^T k_2 k_3.
\]
\end{corol}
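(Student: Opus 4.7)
The plan is to combine two ingredients already established in this section: the counting identity
\[
k = (n_1 - m_1)(n_2 - m_2)(n_3 - m_3) + r,
\]
where $r$ is the number of independent relations among the generators, together with Claim \ref{claim:dim}, which evaluates $r$ explicitly under the hypothesis of the corollary. In this sense the corollary is really a bookkeeping repackaging: all the substantive work has already been done in Claim \ref{claim:dim}, and before it in the reduction of the ``Pauli system'' to the $\mathbb{F}_2$-linear system \eqref{eq:system-dim}.

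Concretely, first I would recall that a relation among the generators is by definition a 4-tuple $(S,T,U,V)$ for which $\Gamma(S,T,U,V) \propto \1$. Block by block, using $\sigma_1 \sigma_2 \sigma_3 \propto \1$, this is equivalent to $(S,T,U,V)$ solving \eqref{eq:system-dim}, so $r$ is precisely the $\mathbb{F}_2$-dimension of the solution space of that linear system, which is the quantity counted by Claim \ref{claim:dim}. Applying the claim, whose hypothesis is exactly the invertibility of $H_2 H_2^T$ and $H_3 H_3^T$ assumed here, yields $r = s + k_1^T k_2 k_3$; substituting into the identity above gives the announced formula.

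The only point worth a brief comment, rather than an obstacle, is the possibility that $-\1 \in \mathcal{S}$. The $\mathbb{F}_2$-system \eqref{eq:system-dim} forgets phases, so the formula should be read as giving the dimension of any stabilizer code obtained from a consistent choice of signs as in Section \ref{sub:minusone} (or, up to an overall factor of $2$, of the CSS emulation of Section \ref{sub:CSS}). Since neither the identity for $k$ nor Claim \ref{claim:dim} depends on those signs, no further argument is required, and the corollary follows immediately. This also explains why the subsequent proof of Theorem \ref{thm:xyz-dim} can focus on the purely algebraic content, namely the explicit evaluation of $s$ via the tensor Sylvester / Cecioni--Frobenius machinery.
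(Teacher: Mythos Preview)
Your proposal is correct and matches the paper's approach: the corollary is stated immediately after Claim \ref{claim:dim} without a separate proof, precisely because it is the direct substitution of $r = s + k_1^T k_2 k_3$ from that claim into the counting identity $k = (n_1-m_1)(n_2-m_2)(n_3-m_3) + r$. Your remarks on the $-\1 \in \mathcal{S}$ caveat are also in line with the paper's discussion at the start of Section~\ref{sec:dim}.
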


\subsection{A tensor Cecioni-Frobenius theorem}
In this section we will determine the number of solutions to the tensor Sylvester equation \eqref{eq:sylv} over some field $K$.
To this end we derive a tensor generalization of the \textit{Cecioni-Frobenius theorem} which concerned the (homogeneous) Sylvester equation $AX = XB$.

For square matrices $A \in K^{n \times n}$, $B \in K^{m \times m}$ and $C \in K^{\ell \times \ell}$, consider the shorthand $A = A \otimes \1_m \otimes \1_\ell$, $B = \1_n \otimes B \otimes \1_\ell$, $C = \1_n \otimes \1_m \otimes C$.
We define the tensor Sylvester equation (for 3-tensors) as
\begin{equation} \label{eq:syl-3}
A X
= B X
= C X,
\end{equation}
where $X \in K^{m \times n \times \ell}$.
We are interested in the number of independent solutions of this equation.
We will prove the following tensor Cecioni-Frobenius theorem, where $p^A_i,p^B_i,p^C_i$ denote the characteristic polynomials of the $i$-th Jordan block of $A,B,C$, respectively (over the algebraic closure $\overline K$).
\begin{theo}[Tensor Cecioni-Frobenius] \label{thm:syl-3}
The number of independent solutions of $AX = BX = CX$ is
\[
\sum_{i,j,k=1}^n \mathrm{deg}(\mathrm{gcd}(p^A_i,p^B_j,p^C_k)).
\]
\end{theo}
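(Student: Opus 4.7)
The plan is to reduce, via Jordan decomposition, to triples of single Jordan blocks acting on the three tensor factors, and then to solve the resulting nilpotent problem by an explicit parametrization of solutions.

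First, kernel dimensions are preserved under extension of scalars, so I would work over $\overline K$. Conjugating each of $A, B, C$ into Jordan form via the substitution $X \mapsto (P_A \otimes P_B \otimes P_C)^{-1} X$ preserves the solution space. Writing $J_A = \bigoplus_i J_{A,i}$ with invariant subspace $V_{A,i}$, and analogously for $J_B, J_C$, the ambient tensor space decomposes as $\bigoplus_{i,j,k} V_{A,i} \otimes V_{B,j} \otimes V_{C,k}$, and each summand is jointly invariant under the three operators (each acts trivially on two of the factors). Hence the kernel decomposes as a direct sum over block triples, and it suffices to count solutions in each.

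In a single block with Jordan pieces $J_{A,i} = \lambda I + N_a$, $J_{B,j} = \mu I + N_b$, $J_{C,k} = \nu I + N_c$, the first equation rewrites as $(N_a \otimes I \otimes I - I \otimes N_b \otimes I) X = (\mu - \lambda) X$. The left-hand operator is nilpotent, so when $\mu \neq \lambda$ the only solution is $X = 0$, and similarly if $\nu \neq \mu$. This matches $\gcd((x-\lambda)^a, (x-\mu)^b, (x-\nu)^c) = 1$ when the eigenvalues do not coincide. When they do, subtracting the common scalar reduces the block problem to $N_a X = N_b X = N_c X$ on $K^a \otimes K^b \otimes K^c$, where the target count is $\min(a,b,c)$. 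Adopting the convention $N_d e_\ell = e_{\ell-1}$ with $e_0 = 0$, the nilpotent equations translate to
\begin{align*}
X_{i+1,j,k} = X_{i,j+1,k} = X_{i,j,k+1}
\end{align*}
for all indices in range, under the boundary conventions $X_{a+1,\cdot,\cdot} = X_{\cdot,b+1,\cdot} = X_{\cdot,\cdot,c+1} = 0$. These equations preserve the level sum $n = i+j+k$ and, via a connectivity check on each level set, force $X_{i,j,k} = f(i+j+k)$ for some scalar function $f$. The boundary conditions, applied at every face, then force $f(n) = 0$ for all $n \geq \min(a,b,c)+3$, whereas any $f$ supported on $\{3, \ldots, \min(a,b,c)+2\}$ yields a genuine solution. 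This produces exactly $\min(a,b,c) = \deg\gcd(x^a, x^b, x^c)$ free parameters per block, and summing over block triples yields the theorem.

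The main obstacle is the nilpotent count itself: establishing connectivity of each level set under the move graph (required to collapse $X$ to a function of the level sum), and determining precisely which levels are killed by boundary propagation, both require a careful combinatorial argument. In particular, ruling out additional zero levels beyond $\min(a,b,c)+2$ requires exhibiting a genuine nonzero solution for each allowed value of $f$, which amounts to verifying that the explicit ansatz $X_{i,j,k} = f(i+j+k)$, truncated to the allowed levels, satisfies every equation (both interior and boundary).
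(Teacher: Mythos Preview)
Your proposal is correct and follows essentially the same route as the paper: pass to the algebraic closure, conjugate to Jordan form, decompose over block triples, eliminate the unequal-eigenvalue case, and count solutions in the nilpotent case via the level-set parametrization $X_{i,j,k}=f(i+j+k)$ with $f$ supported on $\min(a,b,c)$ levels. The only minor variation is your treatment of the unequal-eigenvalue case via nilpotency of $N_a\otimes I-I\otimes N_b$ (hence no nonzero eigenvector with nonzero eigenvalue), whereas the paper instead argues that $AX=BX$ implies $p(A)X=p(B)X$ for all polynomials $p$ and then applies the characteristic polynomial of $A$; both arguments are standard and equally short.
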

\noindent
Our proof runs somewhat parallel to the analysis in \cite{gantmacher1959theory} for the regular Cecioni-Frobenius theorem.
We start with the following elementary fact:
\begin{lemma} \label{lem:lin-system-closure}
The number of independent solutions of a linear system over some field $K$ is equal to the number of independent solutions of the linear system considered over the algebraic closure $\overline K$.
\end{lemma}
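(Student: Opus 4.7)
The plan is to reduce the statement to the standard fact that the rank of a matrix is invariant under field extension, which combined with rank--nullity immediately yields equality of the two solution-space dimensions.

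First I would write the linear system as $Ax = 0$ with $A \in K^{m \times n}$; the Sylvester-type equations of interest here are homogeneous, but the same argument handles the inhomogeneous case via the Rouch\'e--Capelli criterion. The set of solutions is $\ker A$, regarded either as a $K$-subspace of $K^n$ or (after tensoring up) as an $\overline K$-subspace of $\overline K^n$. By rank--nullity, its dimension in either setting equals $n - \operatorname{rank}(A)$, so it suffices to prove $\operatorname{rank}_K(A) = \operatorname{rank}_{\overline K}(A)$.

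The invariance of rank under field extension is elementary and admits two standard proofs. The Gaussian-elimination argument observes that row reduction uses only field operations (addition, multiplication, and division by nonzero scalars), all of which are closed under $K \hookrightarrow \overline K$; thus the reduced row-echelon form of $A$, and in particular the number of pivots, is the same whether the computation is performed in $K$ or in $\overline K$. Alternatively, $\operatorname{rank}(A)$ equals the largest $r$ such that some $r \times r$ minor of $A$ is nonzero, and each such minor is a polynomial expression in the entries of $A \in K^{m \times n}$, so its vanishing is detected identically in $K$ and in $\overline K$.

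The lemma then follows at once: dimensions agree, hence so do the numbers of independent solutions. There is no genuine obstacle here; this is a standard fact from linear algebra, recorded explicitly because Theorem \ref{thm:syl-3} will be proved over $\overline K$ (where Jordan normal forms are available) and the conclusion must subsequently be transferred back to $K$.
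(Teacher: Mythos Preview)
Your proposal is correct and takes essentially the same approach as the paper: both reduce to the invariance of matrix rank under field extension, invoke Rouch\'e--Capelli for the inhomogeneous case, and justify rank invariance via the fact that row reduction to echelon form uses only field operations already available in $K$. Your additional remark that rank can equivalently be characterized via nonvanishing minors is a nice alternative justification not given in the paper, but the overall argument is the same.
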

\begin{proof}
The number of independent solutions of a linear system $A x = b$ over some field $K$ is equal to the dimension of the kernel of the augmented matrix $[A|b]$ (Rouch\'e-Capelli theorem), which is equivalent for any field containing the coefficients of $A$ and $b$.
One way of seeing this is that the rank follows from reducing $[A|b]$ to its reduced row echelon form, and all operations in this transformation can be performed over the original field.
\end{proof}

By this lemma we can prove the tensor Cecioni-Frobenius theorem over a field $K$ by proving it over the algebraic closure $\overline K$.
This is helpful because over $\overline K$ the matrices $A,B,C$ are similar to their Jordan normal forms: \textit{i.e.}, there exist invertible $P_A,P_B,P_C$ such that
\[
A
= P_A \Big( \bigoplus_i J^A_i \Big) P_A^{-1}, \quad
B
= P_B \Big( \bigoplus_j J^B_j \Big) P_B^{-1}, \quad
C
= P_C \Big( \bigoplus_k J^C_k \Big) P_C^{-1},
\]
where $\{J^A_i\},\{J^B_j\},\{J^C_k\}$ correspond to the Jordan blocks of $A$, $B$, $C$, respectively.
Through the change of variables $X = (P_A \otimes P_B \otimes P_C) Y$, the Sylvester equation \eqref{eq:syl-3} becomes equivalent to
\[
\big( \bigoplus_i J^A_i \big) \, Y
= \big( \bigoplus_j J^B_j \big) \, Y
= \big( \bigoplus_k J^C_k \big) \, Y.
\]
Now partition $Y$ into blocks $Y_{ijk}$ according to the ranges of the Jordan blocks of $A,B,C$.
More specifically, let the intervals $n_i^A \subset [n]$, $n_j^B \subset [m]$, $n_k^C \subset [\ell]$ denote to the rows of $J_i^A$, $J_j^B$ and $J_k^C$, respectively (as part of the larger matrices $\bigoplus_i J_i^A$, etc).
Then the block $Y_{ijk}$ follows by restricting the first index of $Y$ to $n_i^A$, the second index to $n_j^B$ and the third index to $n_k^C$.
The equation then splits up into blocks: we effectively need that
\[
J^A_i \, Y_{ijk}
= J^B_j \, Y_{ijk}
= J^C_k \, Y_{ijk},
\quad \forall i,j,k.
\]
This reduces the problem to the tensor Sylvester equation for Jordan blocks.

Let $J(\lambda,\alpha)$ denote a Jordan block of eigenvalue $\lambda$ and dimension $\alpha$, and consider the equation
\begin{equation} \label{eq:syl-jordan}
J(\lambda,\alpha) X
= J(\mu,\beta) X
= J(\nu,\gamma) X,
\end{equation}
where we again omit tensor notation.
\begin{lemma}
Equation \eqref{eq:syl-jordan} has no nonzero solution unless $\lambda = \mu = \nu$.
\end{lemma}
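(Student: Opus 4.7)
The plan is to reduce each pairwise equation to a statement about a nilpotent operator having a nonzero scalar eigenvalue. Writing the Jordan blocks in the standard form $J(\lambda,\alpha) = \lambda \1_\alpha + N_\alpha$, $J(\mu,\beta) = \mu \1_\beta + N_\beta$, $J(\nu,\gamma) = \nu \1_\gamma + N_\gamma$, where each $N$ is the canonical nilpotent with $N^k = 0$ for $k$ large enough, I would rewrite the first equality of \eqref{eq:syl-jordan} (in the tensor-extended notation) as
\[
(\lambda - \mu)\, X
= \bigl( \1_\alpha \otimes N_\beta \otimes \1_\gamma
- N_\alpha \otimes \1_\beta \otimes \1_\gamma \bigr)\, X.
\]

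The key observation is that the operator on the right-hand side is nilpotent: the two summands act on different tensor factors, so they commute, and both are nilpotent, so their difference is nilpotent as well (concretely, $(N_\alpha \otimes \1 - \1 \otimes N_\beta)^{\alpha+\beta-1} = 0$ by the binomial expansion, which is valid because the terms commute, and then every summand contains a factor $N_\alpha^a \otimes N_\beta^b$ with $a \geq \alpha$ or $b \geq \beta$). Call this nilpotent operator $L$. The equation then says $L X = (\lambda - \mu) X$, so $X$ is an eigenvector of $L$ with eigenvalue $\lambda - \mu$. Iterating, $L^k X = (\lambda - \mu)^k X$, and since $L^k = 0$ for large enough $k$, this forces either $X = 0$ or $\lambda = \mu$.

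The same argument applied to the pair $J(\mu,\beta) X = J(\nu,\gamma) X$ (or equivalently $J(\lambda,\alpha) X = J(\nu,\gamma) X$) yields $\mu = \nu$ (respectively $\lambda = \nu$) whenever $X \neq 0$. Combining these gives $\lambda = \mu = \nu$, which is exactly the statement of the lemma.

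I do not anticipate any real obstacle here: the only subtlety is keeping the tensor positions straight when translating the abbreviated notation (where $J(\lambda,\alpha)$ really means $J(\lambda,\alpha) \otimes \1_\beta \otimes \1_\gamma$, etc.) into an honest operator identity, and then noting that commutativity of the two nilpotent pieces — which comes for free because they live on distinct tensor factors — is what licenses the binomial expansion used to confirm nilpotency.
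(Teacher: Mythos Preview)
Your proof is correct and takes a genuinely different route from the paper's. The paper first establishes the intermediate fact that $AX = BX$ (for commuting $A,B$) implies $p(A)X = p(B)X$ for every polynomial $p$, then applies the characteristic polynomial $p_A(x) = (x-\lambda)^\alpha$ to conclude that $p_A(B)$ is singular, forcing $\lambda$ to be an eigenvalue of $B$. You instead subtract the scalar parts directly and recognize the resulting equation as saying that $X$ is an eigenvector, with eigenvalue $\lambda-\mu$, of the nilpotent operator $\1 \otimes N_\beta \otimes \1 - N_\alpha \otimes \1 \otimes \1$; since nilpotent operators have only the eigenvalue $0$, you are done. Your argument is slightly more self-contained and avoids the induction on polynomial degree, while the paper's version has the minor advantage of isolating the reusable identity $p(A)X = p(B)X$. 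Both rely on the same underlying point, namely that the operators on different tensor factors commute.
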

\begin{proof}
Assume that there exists a nonzero solution $X \neq 0$.
Note that $A X = B X = C X$ implies that $p(A) X = p(B) X = p(C) X$ for any polynomial $p$.
Indeed, if say $A X = B X$ then $A^2 X = A B X = B A X = B^2 X$, and by induction $A^k X = B^k X$ for any integer $k$ (and hence $p(A) X = p(B) X$ for any polynomial $p$).
The same reasoning implies that $p(A) X = p(C) X$.

Now if we let $p_A$ be the characteristic polynomial of $A$, then
\[
p_A(A) X
= p_A(B) X
= p_A(C) X
= 0.
\]
Since $X \neq 0$ and $p_A = (x-\lambda)^\alpha$, this implies that $\det(p_A(B)) = 0 = \det(B-\lambda I)^\alpha$.
$B$ must hence have an eigenvalue $\lambda$, and so $\mu = \lambda$.
An analogous reasoning proves that $\nu = \lambda$.
\end{proof}

This shows that we should only consider the special case $J(\lambda,\alpha) X = J(\lambda,\beta) X = J(\lambda,\gamma) X$.
Since $J(\lambda,\alpha) = \lambda I_\alpha + J(0,\alpha)$, this is equivalent to
\begin{equation} \label{eq:syl-jordan-hom}
J(0,\alpha) X
= J(0,\beta) X
= J(0,\gamma) X.
\end{equation}
\begin{lemma}
The number of independent solutions of \eqref{eq:syl-jordan-hom} is $\min\{\alpha,\beta,\gamma\}$.
\end{lemma}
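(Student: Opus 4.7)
The plan is to translate the tensor Sylvester equation into its componentwise form and analyze which 3-tensors $X_{ijk}$ satisfy it. With the convention that $(J(0,n) v)_i = v_{i+1}$ for $i < n$ and $(J(0,n) v)_n = 0$, the equation $J(0,\alpha) X = J(0,\beta) X = J(0,\gamma) X$ becomes, at position $(i,j,k)$,
\[
X_{i+1,j,k} = X_{i,j+1,k} = X_{i,j,k+1},
\]
where an index exceeding its bound contributes $0$.

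I would first use the ``interior'' equations (those involving $i<\alpha$, $j<\beta$, $k<\gamma$ as appropriate) to show that $X_{i,j,k}$ depends only on the sum $s = i+j+k$ across the whole index cube. Iterating $X_{i+1,j,k} = X_{i,j+1,k}$ shows $X$ is constant on anti-diagonals in each $(i,j)$ slice, and combining with $X_{i+1,j,k} = X_{i,j,k+1}$ upgrades this to constancy on the hyperplanes $\{i+j+k = s\}$. Thus one may write $X_{i,j,k} = g(i+j+k)$ for some scalar function $g$.

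Next I would extract the boundary conditions. Equating $(J(0,\alpha)X)_{\alpha,j,k} = 0$ with $(J(0,\beta)X)_{\alpha,j,k} = X_{\alpha,j+1,k}$ for $j<\beta$ gives $X_{\alpha,j',k} = 0$ for all $j' \geq 2$, and translating to $g$ this forces $g(s) = 0$ for every $s \geq \alpha + 3$ (the smallest achievable $s$ being $\alpha + 2 + 1$). The symmetric arguments from the $\beta$- and $\gamma$-boundaries give $g(s)=0$ for $s \geq \beta+3$ and $s \geq \gamma+3$, so altogether $g(s) = 0$ for every $s \geq m + 3$, where $m := \min\{\alpha,\beta,\gamma\}$.

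Finally I would count: the function $g$ is unconstrained on the range $s \in \{3, 4, \dots, m+2\}$ and forced to vanish elsewhere, giving exactly $m$ free parameters. For the converse direction, I would verify directly that any choice of values on $\{3,\dots,m+2\}$, extended by zero and defining $X_{i,j,k} := g(i+j+k)$, satisfies all three component equations: the interior relations reduce to $g(s+1) = g(s+1)$, and every boundary relation reduces to $g(s) = 0$ for some $s \geq m+3$. I do not expect any substantial obstacle; the only care required is (i) in justifying that the ``constant on hyperplanes'' conclusion holds at every index, not just in a strict interior, and (ii) in checking that the three families of boundary equations combine to produce precisely the sharp cutoff $s \geq m+3$ rather than a weaker bound.
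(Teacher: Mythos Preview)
Your proposal is correct and follows essentially the same route as the paper's proof: translate to the componentwise shift equations, deduce that $X_{i,j,k}$ depends only on $i+j+k$, and then read off the vanishing $g(s)=0$ for $s\geq m+3$ from the boundary equations. The only cosmetic difference is that the paper assumes without loss of generality that $\gamma=\min\{\alpha,\beta,\gamma\}$ and extracts the cutoff from a single boundary, whereas you treat the three boundaries symmetrically; your additional remarks on the converse and on care at the boundary are accurate and slightly more explicit than what the paper writes.
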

\begin{proof}
Without loss of generality, assume that $\gamma = \min\{\alpha,\beta,\gamma\}$.
The tensor operators $J(0,\alpha) \otimes I \otimes I$, $I \otimes J(0,\beta) \otimes I$ and $I \otimes I \otimes J(0,\gamma)$ effectively work as $x$-, $y$- and $z$-\textit{shifts} on the entries of $X$, respectively.
Indeed, componentwise the equation becomes
\begin{equation} \label{eq:plane}
X_{i+1,j,k}
= X_{i,j+1,k}
= X_{i,j,k+1},
\quad \forall i \in [\alpha], j \in [\beta], k \in [\gamma],
\end{equation}
where we set $X_{i,j,k} = 0$ whenever $i > \alpha$, $j > \beta$ or $k > \gamma$.

Now pick any $(i,j,k)$ and set $x = X_{i,j,k}$.
By \eqref{eq:plane} we get that $X_{i',j',k'} = x$ for every $(i',j',k')$ such that $i'+j'+k' = i+j+k$.
Any solution can hence be described by a (linear combination of) uniform planes:
\[
X_{i,j,k}
= x_{i+j+k},
\]
for some set of values $\{x_\ell\}_{\ell \geq 3}$.
Moreover, we must have that $x_\ell = 0$ whenever $\ell > \gamma+2$, since we set $X_{1,1,k} = 0$ whenever $k>\gamma$.
This implies that the number of independent solutions is $\gamma = \min\{\alpha,\beta,\gamma\}$.
\end{proof}

Putting everything together, we decomposed any solution $Y$ as a direct sum of blocks $Y_{ijk}$.
A block $Y_{ijk}$ can be nonzero only if the Jordan blocks $J_i^A,J_j^B,J_k^C$ correspond to the same eigenvalue.
With $\alpha_i,\beta_j,\gamma_k$ the dimensions of these blocks, respectively, the number of independent solutions for $Y_{ijk}$ in this case is $\min\{\alpha_i,\beta_j,\gamma_k\} = \mathrm{deg}(\mathrm{gcd}(J_i^A,J_j^B,J_k^C))$.
Summing this up over all blocks, we get that the total number of independent solutions is
\[
\sum_{i,j,k}
\mathrm{deg}(\mathrm{gcd}(J_i^A,J_j^B,J_k^C)).
\]
This proves the tensor Cecioni-Frobenius theorem.

\subsection{Dimension of the modified Chamon code}
The Chamon code on a lattice of size $n_1 \times n_2 \times n_3$ has dimension $4 \gcd(n_1,n_2,n_3)$ \cite{BLT11}.
It is described by the XYZ product code with circulant matrices $H_i = \1 + \Omega_{n_i}$, for $i = 1,2,3$, where again $\Omega_{n_i} \ket{x} = \ket{x + 1 \mod n_i}$.
If we are interested in the minimum distance of the Chamon code, then in principle we have to lower bound the weight of all $4^4-1 = 255$ logical operators.
Here we present a slight modification of the code which results in dimension 1 (and hence only 3 logical operators).
Specifically, we will consider the $(n_i-1) \times n_i$ parity-check matrices $H_i$ obtained by deleting the last row of $\1 + \Omega_{n_i}$, for $i = 1,2,3$:
\[
[H_i]_{k,\ell}
= [\1 + \Omega_{n_i}]_{k,\ell} \quad \text{for} \quad k \in [n_i-1],\; \ell \in [n_i].
\]
We refer to the resulting XYZ code $\mathcal{Q}(H_1, H_2, H_3)$ as the modified Chamon code.
We can prove the following lemma.
\begin{lemma} \label{lem:mod-chamon}
For $H_1, H_2, H_3$ defined as above,
\begin{itemize}
\item
$H_i^T$ has rank $n_i - 1$.
\item
If $n_i$ is odd then $H_i H_i^T$ is invertible.
\item
If $\gcd(n_i,n_j) = 1$ then the intersection of the spectra of $H_i H_i^T$ and $H_j H_j^T$ are disjoint.
\end{itemize}
\end{lemma}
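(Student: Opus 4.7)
The plan is to treat the three items in sequence using elementary linear algebra over $\F_2$ and its algebraic closure $\overline{\F_2}$. For the first item, I would observe that the rows of the full circulant $\1 + \Omega_{n_i}$ sum to zero (each column carries exactly two $1$'s, one from $\1$ and one from $\Omega_{n_i}$, lying in adjacent rows), so the rank is at most $n_i - 1$. Inspecting the tridiagonal band structure of the remaining $n_i - 1$ rows that constitute $H_i$, one sees that reading off a hypothetical vanishing linear combination at column $n_i - 1$ forces the coefficient of the first row to vanish, and the cascade through columns $0, 1, \ldots, n_i - 2$ forces all subsequent coefficients to vanish. Hence $\mathrm{rank}(H_i) = n_i - 1$, and the same holds for $H_i^T$.

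For the second item, I would use that $\Omega_{n_i}$ is a permutation matrix and hence $\Omega_{n_i} \Omega_{n_i}^T = \1$, so $(\1 + \Omega_{n_i})(\1 + \Omega_{n_i})^T = \Omega_{n_i} + \Omega_{n_i}^T$ over $\F_2$. Consequently $H_i H_i^T$ equals the leading $(n_i - 1) \times (n_i - 1)$ principal submatrix of $\Omega_{n_i} + \Omega_{n_i}^T$ which, after dropping the two corner entries of the circulant, is the tridiagonal matrix $A_{n_i - 1}$ with $0$'s on the diagonal and $1$'s on the sub- and super-diagonals, i.e.\ the adjacency matrix of the path $P_{n_i - 1}$. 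Expanding $\det(A_m)$ along the last row yields the two-term recurrence $\det(A_m) = \det(A_{m-2})$ over $\F_2$ with $\det(A_0) = 1$ and $\det(A_1) = 0$, so $\det(A_m) = 1$ iff $m$ is even, equivalently $H_i H_i^T$ is invertible iff $n_i$ is odd.

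For the third item, I would identify the spectrum of $A_{n_i - 1}$ over $\overline{\F_2}$ through its eigenvector recurrence. An eigenvalue $\rho$ with eigenvector $(v_1, \ldots, v_{n_i - 1})$ satisfies $v_{k+1} + v_{k-1} = \rho v_k$ with vanishing boundary $v_0 = v_{n_i} = 0$. For $\rho \neq 0$ the characteristic equation $t^2 + \rho t + 1 = 0$ has two distinct roots $t$ and $t^{-1}$ in $\overline{\F_2}$; substituting $v_k = \alpha(t^k + t^{-k})$ into $v_{n_i} = 0$ gives $t^{2 n_i} = 1$, which in characteristic $2$ (using $x^{2n_i} + 1 = (x^{n_i} + 1)^2$) is equivalent to $t^{n_i} = 1$, with $t \neq 1$ to preserve $t \neq t^{-1}$. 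The degenerate case $\rho = 0$ admits the solution $v_k = \beta k$, compatible with $v_{n_i} = 0$ iff $n_i$ is even. Therefore a common eigenvalue of $H_i H_i^T$ and $H_j H_j^T$ is either $\rho = 0$, forcing $n_i$ and $n_j$ both even and contradicting $\gcd(n_i, n_j) = 1$, or $\rho = t + t^{-1} = s + s^{-1}$ for some nontrivial $n_i$-th root $t$ and $n_j$-th root $s$; but then $s \in \{t, t^{-1}\}$ makes $t$ simultaneously an $n_i$-th and $n_j$-th root of unity, so its order divides $\gcd(n_i, n_j) = 1$, forcing $t = 1$, again a contradiction.

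The main delicate point is the characteristic $2$ degenerate case $\rho = 0$ of the path-graph spectrum, which needs the separate $v_k = \beta k$ treatment arising from the double root $t = 1$ of $t^2 + 1$. Beyond that, every step reduces to a routine cofactor expansion, the two-term linear recurrence, or the clean identification of $2n_i$-th roots of unity with $n_i$-th roots of unity in characteristic $2$.
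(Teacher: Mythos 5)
Your proof is correct, and on items (2) and (3) it takes a genuinely different route from the paper's. The paper computes the characteristic polynomial of $H_i H_i^T$ by a first-row cofactor expansion, recognizes the recurrence $p_{n+1}(x) = x\,p_n(x) + p_{n-1}(x)$ as that of the Fibonacci polynomials, and then invokes the arithmetic identities $F_k \mid F_\ell \Leftrightarrow k \mid \ell$ (so $F_2(x) = x$ does not divide $F_{n_i}$ when $n_i$ is odd, giving invertibility) and $\gcd(F_k, F_\ell) = F_{\gcd(k,\ell)}$ (so coprime $n_i, n_j$ give coprime characteristic polynomials and hence disjoint spectra). You instead specialize the determinant recurrence to $x=0$ for item (2), and for item (3) you solve the eigenvector recurrence of the path-graph adjacency matrix over $\overline{\F_2}$ directly, parametrizing the spectrum as $\rho = t + t^{-1}$ for nontrivial $n_i$-th roots of unity $t$ (with $\rho = 0$ possible only when $n_i$ is even), and then reduce disjointness to the fact that the order of a common root of unity must divide $\gcd(n_i,n_j)$. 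What each approach buys: the paper's version is compact once the Fibonacci-polynomial gcd identity is granted and yields the characteristic polynomial in closed form; yours is more self-contained (no imported polynomial identities), gives an explicit description of the spectrum in terms of roots of unity, and carefully handles the characteristic-$2$ degeneracy $\rho = 0$ via the double root of $t^2 + 1$, which is exactly the delicate point you flagged. Both arguments are sound and prove the lemma.
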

\begin{proof}
For the first bullet, note that the first $n_i-1$ rows of $H_i^T$ describe a Jordan block of size $n_i-1$ and eigenvalue 1.
Such a Jordan block has rank $n_i-1$, and therefore $H_i^T$ has rank $n_i-1$.
For the second and third bullets, note that $H_i H_i^T = \omega_{n_i-1} + \omega^T_{n_i-1}$, where $\omega_k$ is the $k \times k$ matrix with 1 on its upper diagonal 0 elsewhere.
Now denote the characteristic polynomial of $H_i H_i^T$ as $p_{n_i}(x) = \mathrm{det}(x \1_{n_i-1} + \omega_{n_i-1} + \omega_{n_i-1}^T)$ (for $n_i \geq 2$).
If we expand the determinant along the first row, we see that
\[
p_{n+1}(x)
= x p_n(x) + p_{n-1}(x).
\]
This recurrence holds for all $n \geq 1$ if we set $p_0(x) = 0$ and $p_1(x) = 1$.
The solution to this recurrence relation are the so-called \textit{Fibonacci polynomials} $p_{n+1}(x) = F_{n+1}(x)$.
As a consequence, the characteristic polynomial of $H_i H_i^T$ is exactly the Fibonacci polynomial $F_{n_i}(x)$.
A property of these polynomials is that $F_k$ divides $F_\ell$ over $\F_2$ if and only if $k$ divides $\ell$.
In particular, this implies that if $n_i$ is odd ($\gcd(n_i,2)=1$) then $F_2(x) = x$ does not divide the characteristic polynomial of $H_i H_i^T$, and hence the matrix is invertible.
In addition, $\gcd(F_k,F_\ell) = 1$ whenever $\gcd(k,\ell) = 1$, and this completes the proof.
\end{proof}

\begin{corol}
If $n_2$ and $n_3$ are odd and $\gcd(n_2,n_3) = 1$, then the modified Chamon code has dimension 1.
\end{corol}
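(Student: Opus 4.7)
The plan is to apply Corollary \ref{cor:dim} directly and show that each of the three terms in the resulting dimension formula collapses.

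Since $n_2$ and $n_3$ are odd, the second bullet of Lemma \ref{lem:mod-chamon} guarantees that $H_2 H_2^T$ and $H_3 H_3^T$ are invertible, so the hypothesis of Corollary \ref{cor:dim} is met and
\[
k = (n_1 - m_1)(n_2 - m_2)(n_3 - m_3) + s + k_1^T k_2 k_3.
\]
By construction $m_i = n_i - 1$, so each factor $n_i - m_i$ equals $1$ and the first term contributes exactly $1$. For the third term, the first bullet of Lemma \ref{lem:mod-chamon} says that $H_1^T$ has rank $n_1 - 1 = m_1$, which is full column rank; hence $k_1^T = 0$ and the term vanishes.

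The core step is to prove $s = 0$, where $s$ is the dimension of the solution space of the tensor Sylvester equation $H_1 H_1^T V = H_2 H_2^T V = H_3 H_3^T V$ with $V \in \F_2^{m_1 \times m_2 \times m_3}$. Each $H_\ell H_\ell^T$ is square of size $m_\ell \times m_\ell$, so Theorem \ref{thm:syl-3} applies and gives
\[
s = \sum_{i,j,k} \deg\bigl(\gcd(p^1_i, p^2_j, p^3_k)\bigr),
\]
where $p^\ell_i$ denotes the characteristic polynomial of the $i$-th Jordan block of $H_\ell H_\ell^T$ over $\overline{\F_2}$. Each summand divides $\gcd(p^2_j, p^3_k)$. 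The third bullet of Lemma \ref{lem:mod-chamon} says that $\gcd(n_2, n_3) = 1$ forces the spectra of $H_2 H_2^T$ and $H_3 H_3^T$ to be disjoint, so no Jordan block of $H_2 H_2^T$ shares an eigenvalue with any Jordan block of $H_3 H_3^T$; consequently $\gcd(p^2_j, p^3_k)$ is a nonzero constant for all $j,k$, every summand is zero, and $s = 0$. Combining the three contributions yields $k = 1 + 0 + 0 = 1$.

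No serious obstacle is expected: the algebraic heavy lifting is already packaged in Lemma \ref{lem:mod-chamon} and Theorem \ref{thm:syl-3}. The only conceptual point worth flagging is that the behaviour of $H_1$ is irrelevant to the argument—the coprimality hypothesis $\gcd(n_2, n_3) = 1$ enters only through the spectral separation of $H_2 H_2^T$ and $H_3 H_3^T$, which is enough to kill every term in the tensor Cecioni-Frobenius sum regardless of the Jordan structure of $H_1 H_1^T$.
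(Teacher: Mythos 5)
Your proof is correct and follows essentially the same route as the paper's: invoke Corollary \ref{cor:dim} (built on Claim \ref{claim:dim}), observe that the first term equals $1$ since each $n_i - m_i = 1$, that $k_1^T = 0$ by the full-column-rank of $H_1^T$, and that $s = 0$ via Theorem \ref{thm:syl-3} combined with the spectral disjointness of $H_2 H_2^T$ and $H_3 H_3^T$ from Lemma \ref{lem:mod-chamon}. Your closing remark that the coprimality hypothesis only needs to separate two of the three spectra is a nice clarification, implicit but not spelled out in the paper.
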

\begin{proof}
From Section \ref{sec:dim} we know that the dimension of the XYZ product code is given by
\[
(n_1 - m_1)(n_2 - m_2)(n_3 - m_3) + r,
\]
with $r$ the number of solutions to system \eqref{eq:system-dim}.
For the modified Chamon code this becomes $1 + r$.
Since $H_2 H_2^T$ and $H_3 H_3^T$ are invertible, we can use Claim \ref{claim:dim}, which says that
\[
r
= s + k_1^T k_2 k_3,
\]
with $s$ the number of solutions to the tensor Sylvester equation \eqref{eq:sylv}, and $k_i,k_i^T$ the dimension of the kernel of $H_i,H_i^T$, respectively.
From Lemma \ref{lem:mod-chamon} we have that $k_1^T = 0$, so that $r = s$.
By the tensor Cecioni-Frobenius theorem (Theorem \ref{thm:syl-3}) we get that $s = 0$ if there is no common eigenvalue among the matrices $H_1H_1^T$, $H_2 H_2^T$ and $H_3 H_3^T$.
This is ensured by the last bullet in Lemma \ref{lem:mod-chamon}.
\end{proof}

\section{Minimum distance of the XYZ product code}
\label{sec:decoupling}

In this section, we study the minimum distance of XYZ product codes that encode a single logical qubit. 
While some of these results (such as the lower bound presented in Section \ref{sub:lower-dmin}) might hold in more generality, they are much easier to show in the case where there is a single logical qubit. The reason for this is that if there is a single logical qubit, there are only three logical operators ($X$, $Y$ or $Z$-logical operators) whose weight we need to bound to get the distance.
To further simplify things, we will restrict to triples $H_1, H_2, H_3$ in some particular set $\cal T$ of interest:
\begin{defn}
The set of triples $\mathcal{T}$ contains all triples $(H_1,H_2,H_3)$ of invertible matrices of odd dimension that uniquely fix the all-ones vector $u = (1,1,\ldots, 1)^T$, and such that the intersection of the spectra of $H_1 H_1^T, H_2 H_2^T$ and $H_3 H_3^T$ is otherwise empty.
\textit{I.e.},
\[
H_i \in GL(n_i, \F_2),\quad
\gcd(n_i,2)=1,\quad
H_i x = H_i^T x = x \Leftrightarrow x = u,\quad
\bigcap_{i=1}^3 \mathrm{Spec}(H_i H_i^T) = \{1\}.
\]
\end{defn}
By Corollary \ref{cor:T} the resulting codes have dimension equal to 1.
We will show how to reduce the problem of computing their minimum distance (up to constant factors) to computing the optimum value of a simple combinatorial problem over a binary tensor of order 3 (see Theorem \ref{thm:decoupling}).

Before that, we show that XYZ product codes obtained from triples on $\mathcal{T}$ are well-defined stabilizer codes. 
\begin{lemma} \label{lem:-1forT} The stabilizer group associated with the XYZ product code of $(H_1, H_2, H_3) \in \mathcal{T}$ does not contain $-\1$. 
\end{lemma}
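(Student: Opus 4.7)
The strategy is to exploit the dimension-1 property together with the explicit parametrization of solutions to the linear system \eqref{eq:system-dim}. Since for $(H_1,H_2,H_3)\in\mathcal{T}$ the matrices are invertible and the code has dimension 1 (by Corollary \ref{cor:T}), there is a unique nonzero relation among the generators (up to scalar multiplication). If I can identify this relation explicitly and verify that the global phase it produces is $+\1$, then by linearity no nontrivial product of generators can equal $-\1$.

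First I would pin down the unique solution of the tensor Sylvester equation $H_1 H_1^T X = H_2 H_2^T X = H_3 H_3^T X$. By definition of $\mathcal{T}$, the only common eigenvector of $H_1H_1^T$, $H_2H_2^T$, $H_3H_3^T$ (for eigenvalue $1$, the only common spectral point) is the rank-one tensor $X = u \otimes u \otimes u$, where $u$ is the all-ones vector. Then, using the lift exhibited in the proof of Claim~\ref{claim:dim}, namely
\[
S = H_3^T(H_3 H_3^T)^{-1} H_2^T V,\quad T = H_3^T(H_3 H_3^T)^{-1} H_1^T V,\quad U = H_2^T(H_2 H_2^T)^{-1} H_1^T V,
\]
and the simplifications $H_i^T(H_iH_i^T)^{-1}=H_i^{-1}$ and $H_iu = H_i^T u = u \Rightarrow H_i^{-1} u = u$, I can show that the unique nontrivial relation is $(S,T,U,V) = (X,X,X,X)$, i.e.\ the all-ones tensor in each block.

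Next I would compute the four weights $n_A, n_B, n_C, n_D$ that control the phase $i^{n_A - n_B + n_C - n_D}$ via \eqref{eq:suff-1}. For this relation the common tensors $H_1^T S$, $H_1 T$, $H_1 U$ and $H_1^T V$ are all equal to the all-ones tensor of the relevant block. Since all $H_i$ are invertible we have $m_i = n_i$, so each block has volume $n_1 n_2 n_3$, giving $n_A = n_B = n_C = n_D = n_1 n_2 n_3$. Therefore $n_A - n_B + n_C - n_D = 0 \equiv 0 \pmod{4}$, so the product of generators in this relation equals $+\1$.

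Finally, since the space of relations is 1-dimensional over $\mathbb{F}_2$, the only nontrivial relation is the one above, and it produces $+\1$. Hence $-\1 \notin \mathcal{S}$. The only subtle step is identifying the unique solution to the full system \eqref{eq:system-dim} from the unique Sylvester solution, which is handled by the invertibility assumption ensuring $k_1^T = 0$ in Claim~\ref{claim:dim}. The other steps are direct computations enabled by the highly structured fixed vector $u$.
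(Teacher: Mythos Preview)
Your proof is correct and follows essentially the same approach as the paper: identify the unique nontrivial relation among the generators (the all-ones tensor in each block, i.e.\ the product of all generators), compute the weights $n_A=n_B=n_C=n_D=n_1n_2n_3$, and conclude via \eqref{eq:suff-1} that the phase is $i^0=+1$. Your version is more explicit in deriving the relation through the Sylvester solution $u\otimes u\otimes u$ and the lift of Claim~\ref{claim:dim}, whereas the paper simply asserts that the unique relation is ``the product of all generators'', but the argument is the same.
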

\begin{proof}
In this setup, there is a single relation among the generators: the product of all the generators is proportional to the identity operator. In particular, the number of occurrences of products $\sigma_1 \sigma_2 \sigma_3$ is equal to $n_1 n_2 n_3$ in each of the sets $A, B, C, D$, showing that Eqn.~\eqref{eq:suff-1} is satisfied and therefore the stabilizer does not contain $-\1$.
\end{proof}

\subsection{Logical operators}

The logical operators of an XYZ product code ${\cal Q}(H_1,H_2,H_3)$ with $(H_1,H_2,H_3) \in {\cal T}$ take a particularly simple form.
With $u = (1,1,\dots,1)^T$ and $e_1 = (1,0,\dots,0)^T$, they can be represented as
\[
{\cal X}
= \begin{bmatrix}
\sigma_1^{e_1 \otimes u \otimes u} \\
0 \\ 0 \\
\sigma_1^{e_1 \otimes u \otimes u}
\end{bmatrix}, \qquad
{\cal Y}
= \begin{bmatrix}
\sigma_2^{u \otimes e_1 \otimes u} \\
0 \\
\sigma_2^{u \otimes e_1 \otimes u} \\
 0
\end{bmatrix}, \qquad
{\cal Z}
= \begin{bmatrix}
\sigma_3^{u \otimes u \otimes e_1} \\
\sigma_3^{u \otimes u \otimes e_1} \\
0 \\ 0
\end{bmatrix}.
\]
This corresponds to pairs of slices of $\sigma_i$ operators.

\begin{figure}[!h]
\centering
\includegraphics[width=0.4\linewidth]{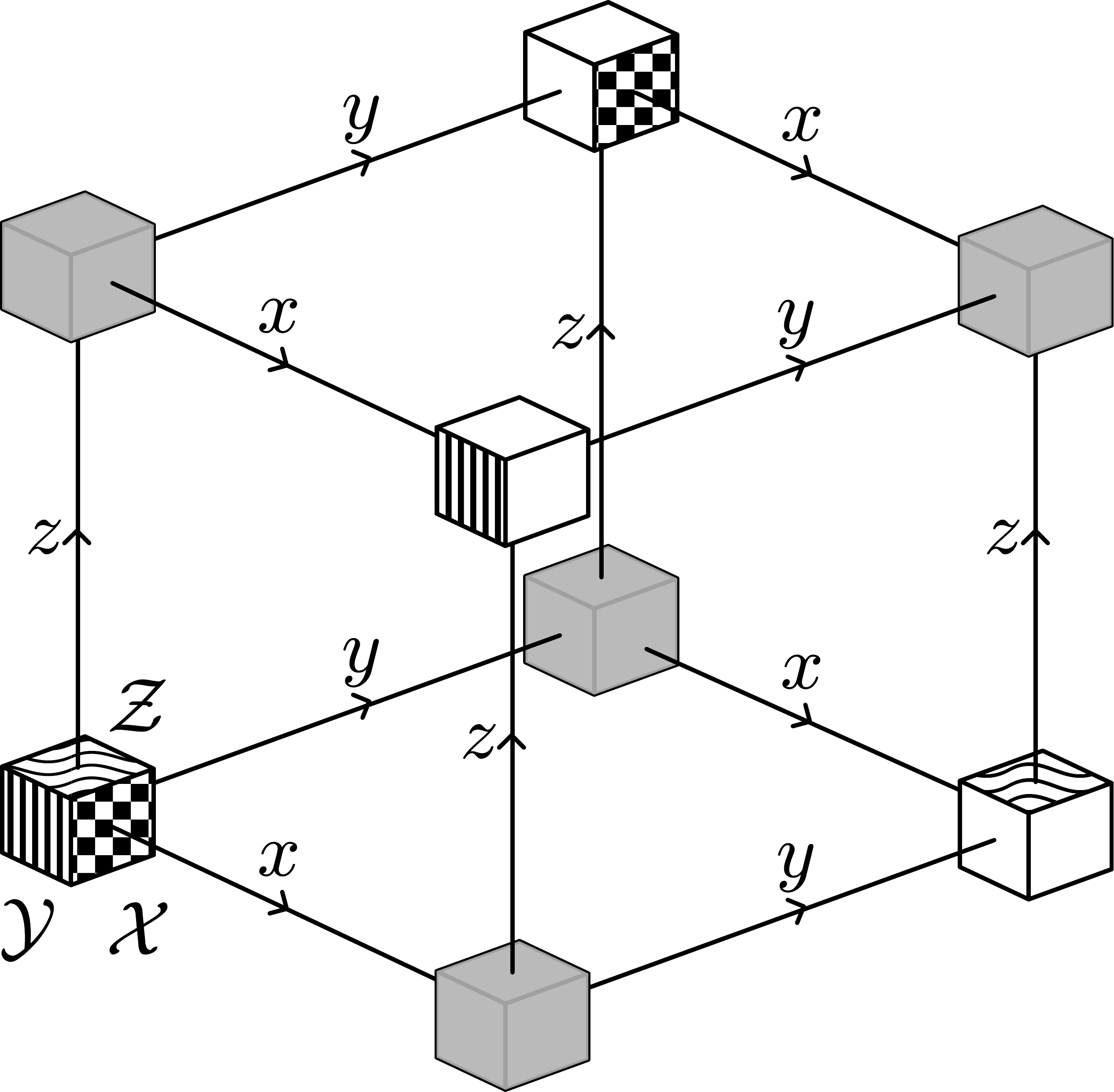}
\caption{Representation of the logical errors. For clarity of exposition, slight variants on the logical $\cal X$ and $\cal Z$ are shown (exchanging $e_1$ for $e_{n_i}$ in the definition).}
\label{fig:logicals}
\end{figure}

\noindent
To see that they are fixed by the stabilizer group, it suffices to verify that
\[
\Sigma * {\cal X}
=
\begin{bmatrix}
\uX * \sigma_1^{e_1 \otimes u \otimes u} \\
\uY * \sigma_1^{e_1 \otimes u \otimes u} + \uZ^\dag * \sigma_1^{e_1 \otimes u \otimes u} \\
\uZ * \sigma_1^{e_1 \otimes u \otimes u} + \uY^\dag * \sigma_1^{e_1 \otimes u \otimes u} \\
\uX * \sigma_1^{e_1 \otimes u \otimes u}
\end{bmatrix}
=
\begin{bmatrix}
0 \\
(\sigma_2 \ast \sigma_1)^{e_1 \otimes (H_2 u) \otimes u} + (\sigma_3 * \sigma_1)^{e_1 \otimes u \otimes (H_3^T u)} \\
(\sigma_3 * \sigma_1)^{e_1 \otimes u \otimes (H_3 u)} + (\sigma_2 * \sigma_1)^{e_1 \otimes (H_2^T u) \otimes u} \\
0
\end{bmatrix}
 = 0,
\]
using that $H_2 u = H_2^T u = H_3 u = H_3^T u$.
The same argument applies for ${\cal Y}$ and ${\cal Z}$.
Next we verify that they anticommute pairwise.
To this end, notice that the intersection of any pair of logical operators corresponds to the intersection of exactly two perpendicular slices, which has Hamming weight $n_i$ for some $i$.
By our assumption that the $n_i$'s are odd, this implies that they anticommute.
Finally, we note that this implies that they do not belong to the stabilizer group, because any pair of elements in the stabilizer group commutes.

\subsection{First lower bound on the minimum distance}
\label{sub:lower-dmin}

Using a standard trick we can get a simple lower bound on the minimum distance when $(H_1,H_2,H_3) \in {\cal T}$.
Recall that we can prove a lower bound on the distance by proving a lower bound on the support of a logical error.

First we note that in fact the logical operator $\cal X$ has $2 n_1$ disjoint representatives:
\[
{\cal X}
\in \left\{
\begin{bmatrix}
\sigma_1^{e_i \otimes u \otimes u} \\
0 \\ 0 \\
\sigma_1^{e_i \otimes u \otimes u}
\end{bmatrix}, \;
\begin{bmatrix}
0 \\
\sigma_1^{e_i \otimes u \otimes u} \\
\sigma_1^{e_i \otimes u \otimes u} \\
0
\end{bmatrix}
: i = 1,2,\dots,n_1
\right\}.
\]
This corresponds to picking pairs of slices in $A$ and $D$ or $B$ and $C$, and translating them in the $x$-direction.
We get a similar set of representations for $\cal Y$ and $\cal Z$.

Now let $\cal E$ denote a logical error.
Then $\cal E$ has to anticommute (and hence intersect) with \textit{each} of the representatives of either $\cal X$, $\cal Y$ or $\cal Z$, since otherwise $\cal E$ is in the stabilizer group.
Now assume \textit{e.g.}~that it intersects with each of the $2 n_1$ representatives of $\cal X$.
Since these representatives are all disjoint, this implies that $\cal E$ must have Hamming weight at least $2 n_1$.
By a similar reasoning for $\cal Y$ and $\cal Z$, we get the following lemma.
\begin{lemma} \label{lem:standard}
If $(H_1,H_2,H_3) \in \cal T$, then the XYZ product code ${\cal Q}(H_1,H_2,H_3)$ has minimum distance
\[
d
\geq 2 \min\{n_1,n_2,n_3\}.
\]
\end{lemma}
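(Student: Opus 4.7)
The plan is to execute the standard disjoint-representatives counting argument that is already hinted at in the paragraphs preceding the lemma. For each coordinate direction $i\in\{1,2,3\}$, I would exhibit $2n_i$ pairwise disjoint representatives of one of the three logical operators ${\cal X},{\cal Y},{\cal Z}$, and then use disjointness to force the weight of any logical error.

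Concretely, focusing on ${\cal X}$, I would consider the $2n_1$ candidates
\[
\begin{bmatrix}\sigma_1^{e_i\otimes u\otimes u}\\0\\0\\\sigma_1^{e_i\otimes u\otimes u}\end{bmatrix}
\quad\text{and}\quad
\begin{bmatrix}0\\\sigma_1^{e_i\otimes u\otimes u}\\\sigma_1^{e_i\otimes u\otimes u}\\0\end{bmatrix},
\qquad i=1,\dots,n_1,
\]
and certify that each is equivalent to the canonical ${\cal X}$ modulo the stabilizer group. For the $n_1$ translates inside a fixed block-pair such as $(A,D)$, this would follow from multiplying ${\cal X}$ by stabilizers $\Gamma(S,T,U,V)$ whose $S,T,U,V$ tensors factor through $u\otimes u$ in the $(y,z)$ slots: since $u$ is fixed by $H_2,H_2^T,H_3,H_3^T$, the $\uY$- and $\uZ$-parts of these generators restrict to full $(y,z)$-slices and can be paired up with other generators to cancel everything except a shift in the $x$-direction on the $(A,D)$ pair. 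To swap ${\cal X}$ between the block-pairs $(A,D)$ and $(B,C)$ I would use the unique global relation among the generators guaranteed by Corollary~\ref{cor:T} and Lemma~\ref{lem:-1forT}: rearranging this relation expresses the $(B,C)$-supported representative as a product of the $(A,D)$-supported one with stabilizers.

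Once the representatives are in hand, disjointness is immediate: the slices $e_i\otimes u\otimes u$ for distinct $i$ have disjoint supports within any fixed block, and the block-pairs $(A,D)$ and $(B,C)$ are themselves disjoint. I would then apply the standard step. Any logical error ${\cal E}\in N({\cal S})\setminus{\cal S}$ must anticommute with at least one of ${\cal X},{\cal Y},{\cal Z}$, for otherwise it would commute with everything in $\langle{\cal S},{\cal X},{\cal Y},{\cal Z}\rangle=N({\cal S})$ (using that the code has dimension one), forcing ${\cal E}\in{\cal S}$. Since ${\cal E}$ commutes with all stabilizers, anticommutation with ${\cal X}$ is equivalent to anticommutation with each of the $2n_1$ representatives above, and each anticommutation requires at least one nontrivial qubit in the corresponding support; pairwise disjointness then yields $|{\cal E}|\ge 2n_1$. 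The analogous arguments for ${\cal Y}$ and ${\cal Z}$ give $|{\cal E}|\ge 2n_2$ and $|{\cal E}|\ge 2n_3$, so in all three cases $|{\cal E}|\ge 2\min\{n_1,n_2,n_3\}$.

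The main obstacle I anticipate is the equivalence step, specifically the swap between the $(A,D)$ and $(B,C)$ supports. The $x$-translates within a fixed block-pair are a routine manipulation, but the swap requires identifying and using the single global relation among the $S,T,U,V$ generators supplied by the dimension-$1$ hypothesis; this is the only place where the algebraic structure of $\cal T$ (invertibility, odd dimension, unique common eigenvalue of the $H_iH_i^T$) enters nontrivially into the distance bound.
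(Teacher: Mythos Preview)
Your proposal is correct and follows the same disjoint-representatives counting argument as the paper. The only difference is that you over-engineer the equivalence step: the paper does not construct explicit stabilizer elements connecting the representatives, nor invoke the global relation among the generators. It suffices to verify directly that each of the $2n_1$ candidates lies in $N(\mathcal{S})$ (the syndrome computation is identical to the one already carried out for the canonical ${\cal X}$, using $H_2 u = H_2^T u = H_3 u = H_3^T u = u$) and that each anticommutes with ${\cal Y}$ and ${\cal Z}$ (the intersection of any such slice with a perpendicular ${\cal Y}$- or ${\cal Z}$-slice has odd size $n_2$ or $n_3$). Since the code has dimension~$1$, this already forces every candidate into the coset of ${\cal X}$ modulo $\mathcal{S}$, so no explicit stabilizer-equivalence needs to be exhibited and the swap between the $(A,D)$ and $(B,C)$ block-pairs is automatic.
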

\noindent
In fact, this bound is tight as soon as two of the three parity-check matrices coincide. 
\begin{lemma}
Let $H, H'$ be parity-check matrices, with $H$ square of dimension $n$. The minimum distance of the XYZ product code $\mathcal{Q}(H, H, H')$ is upper bounded by $n$. 
\end{lemma}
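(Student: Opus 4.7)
\emph{Plan.} The goal is to construct an explicit logical Pauli operator of weight at most $n$ on the code $\mathcal{Q}(H,H,H')$. Because $H_1 = H_2 = H$ is square of size $n$, every block $A,B,C,D$ (and every check block $S,T,U,V$) has dimensions of the form $n \times n \times *$, and the map that swaps the first two tensor coordinates is a symmetry of the code (up to simultaneously swapping $\sigma_1 \leftrightarrow \sigma_2$). This built-in symmetry, combined with a single diagonal identity over $\F_2$, should be enough to produce a logical of the desired weight.

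\emph{Key identity.} The algebraic core of the construction is that for any tensor $M \in \F_2^{n \times n \times *}$ of diagonal type $M_{i,j,k} = \delta_{i,j} f(k)$ one has
\[
(H \otimes \1 \otimes \1)\, M \;=\; (\1 \otimes H^T \otimes \1)\, M,
\]
since both sides equal the tensor $N_{i,j,k} = H_{i,j} f(k)$. This identity fails for general $M$ and relies precisely on the equality $H_1 = H_2 = H$; it is what allows the $S$- and $T$-type syndromes induced by $\mathcal{L}$ on different blocks to cancel pairwise.

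\emph{The construction.} Fix a slice $k_0$ and let $M_{i,j,k} = \delta_{i,j}\delta_{k,k_0}$, a tensor of Hamming weight $n$. I would form $\mathcal{L}$ as $\sigma_1^{M}$ on block $A$ together with the matching pattern $\sigma_1^{M}$ on block $D$ (possible since $A$ and $D$ have the same dimensions thanks to $H$ being square). Since $\sigma_1$ commutes with the $\sigma_1$-content of the $S$- and $V$-type generators, the $S$- and $V$-syndromes vanish trivially. For the $T$-syndrome, the contribution from $A$ comes from the $\sigma_2$-content of the $T$-generators and equals the tensor $H_{j,i}\delta_{k,k_0}$ by the diagonal identity; the contribution from $D$ comes from the $\sigma_3$-content of the same $T$-generators and equals the very same tensor, so the two cancel mod $2$. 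An analogous argument, using the same identity together with a choice of $k_0$ adapted to the $H_3$-action on the third coordinate (or, if necessary, a small compensating pattern on one auxiliary block), handles the $U$-syndrome. Nontriviality is then checked by pairing $\mathcal{L}$ against the standard $\mathcal{Z}$ operator introduced earlier: the supports intersect on the diagonal line $\{(i,i,k_0) : i \in [n]\}$ of block $A$, where $\sigma_1$ anticommutes with $\sigma_3$ at exactly $n$ sites, giving a global anticommutation and hence $\mathcal{L} \notin \mathcal{S}$.

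\emph{Main obstacle.} The diagonal identity cleanly kills the $H$-dependent syndromes ($S$ and $T$), but the $H_3$-dependent ones ($U$ and $V$) need a separate argument because no hypothesis is placed on $H'$. Making the $U$-syndrome vanish for arbitrary $H'$ while keeping the overall weight at $n$ (rather than a possibly larger $O(n)$) is the delicate step; the expected resolution is to choose the slice $k_0$ so that the third-coordinate action of $H_3$ on $e_{k_0}$ is controlled, and to possibly replace the $\sigma_1^M$ on $D$ by a single stabilizer multiplication that reduces the effective support to only one of the two blocks. This pairing of the diagonal identity with a judicious use of $H_3$ is the crux of the proof.
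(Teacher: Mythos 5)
Your plan has the right shape---a diagonal tensor $\1\otimes e_{k_0}$ of weight $n$ placed on two qubit blocks, using the identity $(H\otimes\1\otimes\1)(\1\otimes e_{k_0})=(\1\otimes H^T\otimes\1)(\1\otimes e_{k_0})$ that exploits $H_1=H_2=H$---but the choice of Pauli type $\sigma_1$ is the wrong one, and the claimed cancellation of the $T$-syndrome does not happen. The $T$-check detects the $\sigma_1$-error on block $A$ through its $\sigma_2$-content, i.e.\ through the operator $\1\otimes H_2\otimes\1$ acting in the \emph{second} tensor coordinate, and detects the $\sigma_1$-error on block $D$ through its $\sigma_3$-content, i.e.\ through the operator $\1\otimes\1\otimes H_3^T$ acting in the \emph{third} tensor coordinate. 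The first contribution equals $(H^T)_{i,j}\,\delta_{k,k_0}$, supported entirely on the slice $k=k_0$, while the second equals $\delta_{i,j}\,(H'^T)_{k,k_0}$, a diagonal in the first two coordinates spread over the $k$-support of a row of $H'$. These are genuinely different tensors; your assertion that "the contribution from $D$ \ldots equals the very same tensor" is false except for trivial $H'$. The diagonal identity only equates operators acting in the first two tensor coordinates, so it gives no leverage on the $H'$-action in the third. The $U$-syndrome fails by the mirror-image argument, and no choice of $k_0$ or added stabilizer can force a slice-supported tensor to equal an $H'$-spread one without blowing up the weight.

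The root cause is that with $\sigma_1$ you make the $\uX$-type checks ($S$ and $V$) commute for free, but then the two remaining checks ($T$ and $U$) each see \emph{both} $H$-content (in one of the first two coordinates) \emph{and} $H'$-content (in the third), so the $H_1=H_2$ symmetry cannot save you. The paper instead places $\sigma_3^{\1\otimes e_1}$ on blocks $A$ and $B$. With $\sigma_3$, the checks whose syndrome would involve $H'$ on these two blocks ($U$ on $A$, $V$ on $B$) act there with $\sigma_3$; since $\sigma_3*\sigma_3=0$, those syndromes vanish identically, independently of $H'$. The only remaining syndromes, $S$ and $T$, each receive exactly two contributions---one from $A$ via $H_1$ on coordinate $1$, one from $B$ via $H_2^T$ on coordinate $2$---and these cancel precisely by your diagonal identity because $H_1=H_2=H$. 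The key idea you missed is therefore not the diagonal identity but the choice of Pauli: one must pick $\sigma_3$ so that the $H'$-dependent checks become harmless automatically, leaving the diagonal identity to handle only the two checks that involve $H$.
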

\begin{proof}
Consider the 3-dimensional tensor $\1 \otimes e_1$ corresponding to a single horizontal layer equal to the identity $\1_n$ and $(n_3-1)$ horizontal layers identically equal to zero: 
\[ \1 \otimes e_1 (i,j,k) := \delta_{i,j} \delta_{k,1}.\]
We take as a logical error
\[
\begin{bmatrix}
{\cal A} & {\cal B} & {\cal C} & {\cal D}
\end{bmatrix}^T
=
\begin{bmatrix}
\sigma_3^{\1 \otimes e_1}
& \sigma_3^{\1 \otimes e_1}
& 0 & 0
\end{bmatrix}^T,
\]
\textit{i.e.}, the product of $Z$-Pauli operators on cells of index $(i,i,1)$ for $i \in [n]$ for qubits both in $A$ and $B$.
This operator has intersection 1 (and hence anticommutes) with both $\cal X$ and $\cal Y$.
On the other hand, it is fixed by the stabilizer group:
\[
\Sigma *
\begin{bmatrix}
{\cal A} \\ {\cal B} \\ {\cal C} \\ {\cal D}
\end{bmatrix}
= \begin{bmatrix}
(\sigma_1 * \sigma_3)^{H \otimes e_1} + (\sigma_2 * \sigma_3)^{H \otimes e_1} \\
(\sigma_2 * \sigma_3)^{H^T \otimes e_1} + (\sigma_1 * \sigma_3)^{H^T \otimes e_1} \\
0 \\ 0
\end{bmatrix}
= 0. \qedhere
\]
\end{proof}

\subsection{A decoupling argument}
\label{subsec:decoupling}
If $n_1$, $n_2$ and $n_3$ are of the same order, then the previous argument gives a lower bound of the form $\Omega(n_i) = \Omega(N^{1/3})$.
Here we outline a strategy to proving a lower bound of the form $\Omega(N^{2/3})$.
We do so by minimizing the weight of the logical operators.
Without loss of generality, we can assume that the minimum weight of a logical operator is reached for a $Z$-logical operator.
This can always be enforced by suitably permuting the three parity-check matrices, and we will therefore need to consider all 6 permutations in the following.

Throughout the argument we will again assume that $(H_1,H_2,H_3) \in \cal T$, but also that the row and column weights of the $H_i$'s are bounded by some $w \in O(1)$ (which is of course necessary to obtain an LDPC code).
We will also write a Pauli tensor $\cA$ as 
\[
\cA
=
\begin{bmatrix} A_x & A_y & A_z\end{bmatrix}^T
\]
to mean that $\cA = (\sigma_X)^{A_x} (\sigma_Y)^{A_y} (\sigma_Z)^{A_z}$.
In other words, $A_x$ is a binary tensor and we apply a Pauli-$X$ on each element of the support of $A_x$. This decomposition is not unique: for instance, for any binary tensor $M$, we have 
\[
\begin{bmatrix} A_x & A_y & A_z\end{bmatrix}^T
=
\begin{bmatrix} A_x + M & A_y + M & A_z + M\end{bmatrix}^T.
\]

\begin{lemma} \label{lem:pauli}
The (Pauli) weight of the Pauli operator $\cA$ is given by
\[
|\cA|
= \frac{1}{2}( | A_x + A_y|  + | A_x + A_z| + | A_y + A_z|).
\]
\end{lemma}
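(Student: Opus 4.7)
The plan is to reduce the identity to a pointwise check, since both sides are sums over indices of the underlying tensors. First I would fix an arbitrary index $p$ in the common domain of $A_x, A_y, A_z$ and set $(a,b,c) = ((A_x)_p, (A_y)_p, (A_z)_p) \in \F_2^3$. The contribution of position $p$ to the right-hand side is
\[
\tfrac{1}{2}\bigl( (a+b) + (a+c) + (b+c) \bigr) = a+b+c \pmod{\text{integer sum}},
\]
but more carefully, since the Hamming weights $|A_x+A_y|$, etc., count bits where the two tensors differ, the contribution of $p$ to the right-hand side equals $\tfrac{1}{2}\bigl(\mathbbm{1}[a\neq b] + \mathbbm{1}[a\neq c] + \mathbbm{1}[b\neq c]\bigr)$.

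Next I would compute the Pauli acting at position $p$ from the decomposition $\sigma_1^a \sigma_2^b \sigma_3^c$, using $\sigma_1\sigma_2\sigma_3 \propto \1$ and the standard product rules $\sigma_1\sigma_2 \propto \sigma_3$, etc. The contribution of position $p$ to $|\cA|$ is $0$ if this product is proportional to $\1$, and $1$ otherwise. A direct enumeration of the $8$ cases shows that the product is proportional to $\1$ precisely when $a=b=c$ (cases $(0,0,0)$ and $(1,1,1)$), and is a nontrivial Pauli in the other six cases. In the two trivial cases, all three parities $a\oplus b$, $a\oplus c$, $b\oplus c$ vanish, so the right-hand side also contributes $0$. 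In each of the six nontrivial cases, exactly two of these three parities equal $1$, so the right-hand side contributes $\tfrac{1}{2}\cdot 2 = 1$, matching the Pauli weight.

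Summing this equality across all positions of the tensor yields the claimed identity. I do not anticipate a serious obstacle: the only subtlety is tracking the fact that the decomposition of $\cA$ is not unique (as noted in the paper, adding an arbitrary $M$ to all three of $A_x, A_y, A_z$ leaves $\cA$ unchanged), but the right-hand side is manifestly invariant under $A_\ast \mapsto A_\ast + M$ since $(A_x+M)+(A_y+M) = A_x+A_y$, so this ambiguity is harmless.
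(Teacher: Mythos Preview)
Your proof is correct and takes essentially the same approach as the paper: both identify that a position contributes to $|\cA|$ precisely when the three bits $(a,b,c)$ are not all equal, and both verify this matches the symmetric-difference count. The paper phrases the computation via set-theoretic identities (inclusion--exclusion on the supports $S_x,S_y,S_z$, then rewriting as $\tfrac12(|S_x\triangle S_y|+|S_x\triangle S_z|+|S_y\triangle S_z|)$), whereas you do a direct pointwise case check; this is a stylistic difference only.
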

\begin{proof}
Let subsets $S_x,S_y,S_z$ correspond to the supports of $A_x,A_y,A_z$, respectively.
Then the number of nontrivial Pauli operators is given by
\begin{align*}
|S_x \cup S_y \cup S_z| - |S_x \cap S_y \cap S_z|\
&= | S_x | + | S_y | + | S_z | - | S_x \cap S_y | - | S_x \cap S_z | - | S_y \cap S_z | \\
&= \frac{1}{2} ( | S_x \triangle S_y | + | S_x \triangle S_z | + | S_y \triangle S_z |),
\end{align*}
with $\triangle$ the symmetric difference.
The lemma follows by noting that \textit{e.g.}~$| S_x \triangle S_y | = | A_x + A_y |$.
\end{proof}

Now we can write the most general form of a $Z$-logical operator as the sum of our previously defined logical operator $\cal Z$ (two horizontal slices of Pauli-$Z$ in the sets $A$ and $B$), which we will here denote by $R$, and of an arbitrary stabilizer element parametrized by tensors $(S,T,U,V)$:
\begin{align}\label{eqn:logical}
\cA = \begin{bmatrix} x S \\ y T \\ z U + R \end{bmatrix} \quad
\cB = \begin{bmatrix} x T \\ y S \\ z V + R \end{bmatrix} \quad
\cC = \begin{bmatrix} x U \\ y V \\ z S \end{bmatrix} \quad
\cD = \begin{bmatrix} x V \\ y U \\ z T \end{bmatrix},
\end{align}
where we use the shorthand $x = H_1 \otimes \1 \otimes \1$, $y = \1 \otimes H_2 \otimes \1$ and $z = \1 \otimes \1 \otimes H_3$.
For simplicity we assume that the parity-check matrices $H_i$ are all symmetric, but we will later remove this assumption.
The minimum distance $d_{\min}$ of the code is now given by the minimum weight of this Pauli operator over all choices of $S,T,U,V$:
\[
d_{\min}
= \min_{S,T,U,V} | \cA | + | \cB | + | \cC | + | \cD |.
\]
Using Lemma \ref{lem:pauli} we can rewrite this as
\[
2 d_{\min}
= \min_{S,T,U,V} \sum_{E \in \{A, B, C, D\} }
	| E_{xy}| + | E_{xz}| + | E_{yz}|,
\]
where $A_{xy} := x S + y T, \, A_{xz} := x S + z U + R, \, A_{yz} := y T + z U + R$, etc. 
We will denote by
\[
{W}
= {W}(S,T,U,V)
:= \sum_{E \in \{A, B, C, D\} } | E_{xy}| + | E_{xz}| + | E_{yz}|,
\]
the quantity that we wish to minimize.

While we can independently optimize over the tensors $S,T,U,V$, the problem is that the individual terms (\textit{e.g.}~$| A_{xy} | = |x S + z U + R|$) depend on pairs of tensors.
We will now argue how to lower bound the sum by a sum of ``decoupled'' terms.
To start, consider the term $| A_{xz} | + | C_{xz}|$ of $W$.
By assumption, the row and column weights of the $H_i$'s (and hence the $x,y,z$ operators) are bounded by $w$.
This implies that $|x E| \leq w | E |$, for any tensor $E$.
Applied to $A_{xz}$ and $C_{xz}$, this gives:
\begin{align*}
| x S + z U + R |
& \geq \frac{1}{w} | x^2 S + x z U + x R |
= \frac{1}{w} |x^2 S + x z U + R |\\
|x U + z S|
&\geq \frac{1}{w} | x z U + z^2 S |,
\end{align*}
where we further exploited\footnote{As a side remark: this step is crucial for our purpose but it does not hold for the Chamon code \cite{cha05}. For that code, we would get $|x R| = O(n)$ and therefore this approach would not be able to prove any lower bound on the minimum distance of the Chamon code better than $\Omega(N^{1/3})$.} that $x R = R$.
A triangle inequality finally yields: 
\[
| A_{xz} | + | C_{xz}|
\geq | A_{xz} + C_{xz}|
= \frac{1}{w} | (x^2 + z^2) S + R |.
\]
As promised, this gives a term that only depends on a single tensor $S$.
A similar analysis allows to also decouple the following pairs:
\[
\begin{gathered}
| A_{yz} | + | D_{yz}| \geq \frac{1}{w} | (y^2 + z^2) T + R |, \;\;
| B_{xz} | + | D_{xz}| \geq \frac{1}{w} | (x^2+z^2) T + R |, \\
| B_{yz} | + | C_{yz}| \geq \frac{1}{w} | (y^2+z^2) S + R |.
\end{gathered}
\]
The remaining terms of $W$ can finally be bounded by
\[
| A_{xy} |+ | B_{xy} | \geq \frac{1}{2w} ( | (x^2+y^2) S | + | (x^2+y^2) T | ), \;\;
| C_{xy} |+ | D_{xy} | \geq \frac{1}{2w} ( | (x^2+y^2) U | + | (x^2+y^2) V | ).
\]
Putting it all together, we get that
\begin{align*}
W(S,T,U,V)
\geq &\frac{1}{w} \big( |(x^2 + z^2) S + R | + | (y^2+z^2) S + R | 
+ | (x^2+z^2) T + R | + | (y^2 + z^2) T + R | \big) \\
&+ \frac{1}{2w} \big( | (x^2+y^2) S | + | (x^2+y^2) T | + | (x^2+y^2) U | + | (x^2+y^2) V | \big).
\end{align*}
Now if we minimize the right hand side over $S,T,U,V$ independently, then we can simply set $U = V = 0$ and $S = T$ to get
\[
\min_{S,T,U,V} W(S,T,U,V)
\geq \min_s \frac{1}{w} ( 2|(x^2 + z^2) S + R | + 2| (y^2+z^2) S + R | + | (x^2+y^2) S | ).
\]
We have thus managed to decouple the four variables and reduced the optimization problem to a single (tensor) variable problem.
This proves the following lemma (after replacing the shorthands $x,y,z$ by their explicit forms).
\begin{lemma}
\label{lem:decoupling}
Let $\{ H_1, H_2, H_3\} \in \mathcal{T}$  such that $H_i = H_i^T$. Let $w$ be an upper bound on the row weight of $H_i$. The minimum distance of the XYZ product code $\mathcal{Q}(H_1,H_2,H_3)$ satisfies\begin{align*}
 d_{\min} &\geq \min_{(ijk)\in (123)} \min_M \frac{1}{w} \left( |(H_i^2+H_k^2)M + R|  +|(H_j^2+H_k^2)M + R| + \frac{1}{2} |(H_i^2+H_j^2) M|\right),\\
 &\geq \min_{(ijk)\in (123)} \min_M \frac{3}{4w} \left( |(H_i^2+H_k^2)M + R|  + |(H_i^2+H_j^2) M|\right)
\end{align*}
\end{lemma}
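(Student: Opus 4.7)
The plan is to follow the decoupling blueprint sketched in Subsection~\ref{subsec:decoupling} and then turn the inequality into the more symmetric form given in the second line of the statement.

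First I would fix the convention that the minimum weight logical is of $Z$-type: since applying a permutation $(ijk)\in\mathrm{Sym}(1,2,3)$ to $(H_1,H_2,H_3)$ amounts to relabelling the three logicals $\cX,\cY,\cZ$, working out the bound for one of them and then taking the minimum over all six permutations handles every case. Parameterize an arbitrary representative of the $Z$-logical as the natural representative $R$ plus a generic stabilizer element, giving the Pauli tensors $\cA,\cB,\cC,\cD$ of \eqref{eqn:logical}. By Lemma~\ref{lem:pauli}, the Pauli weight is
\[
2d_{\min} \;=\; \min_{S,T,U,V}\sum_{E\in\{A,B,C,D\}}\bigl(|E_{xy}|+|E_{xz}|+|E_{yz}|\bigr) \;=:\; \min_{S,T,U,V} W(S,T,U,V).
\]

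Next I would decouple the twelve terms in $W$ into single-tensor terms. The key micro-steps are: bound $|xE|\leq w|E|$ using the row-weight hypothesis, and exploit that $xR=yR=R$ (this is where $R$ being a full slice of $\sigma_3$'s, together with symmetry of the $H_i$, is crucial). For example, for the pair $|A_{xz}|+|C_{xz}|$, premultiplying the first argument by $x$ and the second by $z$ and applying the triangle inequality yields
\[
|A_{xz}|+|C_{xz}|\;\geq\;\tfrac{1}{w}\bigl|(x^2+z^2)S+R\bigr|.
\]
Repeating this for the three other ``mixed'' pairs and for the two purely $xy$-pairs $(A_{xy},B_{xy})$ and $(C_{xy},D_{xy})$ produces a lower bound for $W$ that is a sum of terms each depending on a single one of $S,T,U,V$. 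Setting $U=V=0$ and $S=T=:M$ is then optimal and gives
\[
W \;\geq\; \tfrac{2}{w}\bigl(|(x^2+z^2)M+R|+|(y^2+z^2)M+R|\bigr)+\tfrac{1}{w}|(x^2+y^2)M|,
\]
and dividing by two yields the first inequality of the lemma for the identity permutation; taking the minimum over the six permutations covers the general case.

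The second inequality is a clean consequence of the triangle inequality. Writing $a=|(H_i^2+H_k^2)M+R|$, $b=|(H_j^2+H_k^2)M+R|$, $c=|(H_i^2+H_j^2)M|$, the identity $((H_i^2+H_k^2)M+R)+((H_j^2+H_k^2)M+R)=(H_i^2+H_j^2)M$ gives $b\geq |c-a|$. A short case split on whether $a\geq c$ or $c\geq a$ then yields $a+b+c/2\geq \tfrac{3}{4}(a+c)$, which is exactly the second displayed inequality.

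I expect the only delicate point to be the decoupling step: one must pick the right pairing of terms so that (i) the $R$ contributions survive in the pairs where $xR=R$ or $yR=R$ (this is where the decoupling would fail for the original Chamon code, as the footnote warns), and (ii) the resulting expressions collapse to the combinations $(H_i^2+H_j^2)M$ that appear in the statement. Everything else is a mechanical bookkeeping exercise using $|xE|\leq w|E|$ and the triangle inequality, together with the symmetry assumption $H_i=H_i^T$ (which lets us suppress transposes when applying $x,y,z$ on both sides).
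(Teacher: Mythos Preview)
Your proposal is correct and follows essentially the same route as the paper: the same pairing of the twelve terms, the same use of $|xE|\le w|E|$ together with $xR=yR=R$, and the same specialization $U=V=0$, $S=T=M$. Your derivation of the second inequality via $b\ge |c-a|$ and a case split is a minor variant of the paper's argument (which instead uses $a+b\ge c$ directly inside $a+b=\tfrac34 a+\tfrac14(a+b)$), but both are one-line triangle inequality manipulations leading to the same bound.
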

\noindent
The second inequality follows from a simple triangle inequality:
\begin{align*}
|(x^2+z^2)M &+ R| + |(x^2+z^2)M + R| \\
&\geq \frac{3}{4} |(x^2+z^2)M + R| + \frac{1}{4} (|(x^2+z^2)M + R| + |(y^2+z^2)M + R| ) \\
&\geq \frac{3}{4} |(x^2+z^2)M + R| + \frac{1}{4} (|(x^2+y^2)M| ) .
\end{align*}
The same analysis goes through essentially without modification in the non-symmetric case, where we get the bounds
\begin{align*}
| A_{xz} | + | C_{xz} | &\geq \frac{1}{w} | (x x^T + z^T z) s+r |, \qquad
| A_{yz} | + | D_{yz} | \geq \frac{1}{w} | (y y^T + z^T z) t+r |, \\
| B_{xz} | + | D_{xz} | &\geq \frac{1}{w}| (x^T x+z^T z) t + r |, \qquad
| B_{yz} | + | C_{yz} | \geq \frac{1}{w}| (y^T y+z^T z) s + r |,
\end{align*}
and we ignore the remaining terms in $W$.
\begin{lemma}[Non-symmetric version]
 The minimum distance of the XYZ product code associated with $(H_1,H_2,H_3)\in \mathcal{T}$ is bounded by
\begin{align*}
 d_{\min}
 \geq \min_{(ijk)\in (123)} \min_{S,T}
 \frac{1}{2w} \Big( &|(H_i H_i^T+H_k^T H_k)S + R|  +|(H_j^T H_j+H_k^TH_k)S + R| \\
 & \quad + |(H_i^T H_i+H_k^T H_k)T + R| + |(H_j H_j^T +H_k^TH_k)T + R| \Big).
 \end{align*}
\end{lemma}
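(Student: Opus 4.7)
The plan is to mimic the decoupling argument from Lemma \ref{lem:decoupling}, but to track carefully which transposes of $H_1,H_2,H_3$ appear. Write a general $Z$-logical operator as in \eqref{eqn:logical}, except that now $x,y,z$ are merely shorthands for $H_1\otimes\1\otimes\1$, $\1\otimes H_2\otimes\1$, $\1\otimes\1\otimes H_3$, and the adjoints $x^T,y^T,z^T$ (coming from $\uX^\dag,\uY^\dag,\uZ^\dag$ in \eqref{eqn:gen}) appear in the appropriate slots of $\cA,\cB,\cC,\cD$. As before, use Lemma \ref{lem:pauli} to write $2 d_{\min}=\min_{S,T,U,V} W$ where $W$ is the sum of $|E_{xy}|+|E_{xz}|+|E_{yz}|$ for $E\in\{A,B,C,D\}$.

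The next step is to decouple the four cross pairs $(A_{xz},C_{xz})$, $(B_{xz},D_{xz})$, $(A_{yz},D_{yz})$, $(B_{yz},C_{yz})$. For each pair, apply to one term an operator of the form $H_i$ and to the other an operator of the form $H_k^T$ (chosen so that the mixed cross terms $H_i H_k^T U$ and $H_k^T H_i U$ agree, after which they cancel in a triangle inequality). Here the key algebraic fact is that $H_i\otimes\1\otimes\1$ commutes with $\1\otimes\1\otimes H_k^T$ since they act on different tensor factors; the same holds for all pairs of mixed operators. Combined with the row-weight bound $|H\cdot E|\leq w|E|$ and the invariance identities $H_iR=H_i^TR=R$ (which hold because the all-ones vector is fixed by each $H_i$ and $H_i^T$, per the definition of $\mathcal T$), this gives exactly the four inequalities already displayed just before the lemma statement, e.g.
\[
|A_{xz}|+|C_{xz}|\geq \frac{1}{w}\,|(H_1H_1^T+H_3^T H_3)\,S+R|,
\]
and analogously the three others involving $H_2^T H_2+H_3^T H_3$ on $S$, $H_1^T H_1+H_3^T H_3$ on $T$, and $H_2 H_2^T+H_3^T H_3$ on $T$.

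Summing the four decoupled inequalities and dropping the nonnegative contributions from the $E_{xy}$ pairs yields a lower bound on $W$ that depends only on $S$ and $T$ (the variables $U,V$ drop out), hence on $2 d_{\min}$. Dividing by $2$ produces the prefactor $\tfrac{1}{2w}$ displayed in the lemma. Finally, the minimum $\min_{(ijk)\in(123)}$ is obtained by repeating the exact same decoupling for the other two nontrivial logical operators $\mathcal X$ and $\mathcal Y$ in place of $\mathcal Z$: by permuting the roles of the three parity-check matrices, one may always reduce to bounding the weight of the smallest logical among $\mathcal X,\mathcal Y,\mathcal Z$, which contributes one of the six permutations.

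The only technical obstacle, and it is really a bookkeeping one rather than a conceptual one, is to verify that the transposes land in precisely the right positions so that after multiplying $A_{xz}$ by $H_1$ (or $H_1^T$) and $C_{xz}$ by $H_3^T$ (or $H_3$) the cross terms $H_1 H_3^T U$ and $H_3^T H_1 U$ actually coincide, rather than being merely ``similar''. This boils down to commutativity of factors acting on different tensor indices, so it works identically in the non-symmetric case; no new idea is needed beyond the symmetric proof of Lemma \ref{lem:decoupling}.
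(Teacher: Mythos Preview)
Your proposal is correct and follows exactly the paper's approach: the paper itself only states that ``the same analysis goes through essentially without modification in the non-symmetric case'' and records the four decoupled inequalities you describe, dropping the $E_{xy}$ terms. Your tracking of the transposes, use of commutativity across tensor factors, and the identities $H_iR=H_i^TR=R$ are precisely the bookkeeping the paper alludes to.
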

\noindent
In the rest of this manuscript, we will focus on the symmetric case. 

\subsubsection*{Tightness of the Argument}
We now show that the argument above is essentially tight.
For a tensor $M$ (with dimensions equal to the tensor $S$), consider the choice
\[
S = xy M, \quad
T = x^2 M, \quad
U = yz M, \quad
V = xz M.
\]
With these, we get the Pauli tensors
\[
\cA
= \begin{bmatrix} x^2 y M\\ x^2 y M  \\ y z^2 M + R \end{bmatrix}, \qquad
\cB
= \begin{bmatrix} x^3 M\\ x y^2 M   \\ xz^2 M + R \end{bmatrix}, \qquad
\cC
= \begin{bmatrix} xyz M\\ xyzM    \\ xyz M \end{bmatrix}, \qquad
\cD
= \begin{bmatrix} x^2 z M \\ y^2z M   \\ x^2zM \end{bmatrix},
\]
which represents a valid logical $Z$ operator. 
Using Lemma \ref{lem:pauli} we can bound its weight as follows (again using that $x R = y R = R$):
\begin{align*}
|\mathcal{A}|
&= | y((x^2 + z^2)M + R)|
\leq w |(x^2 + z^2) M + R |\\
|\mathcal{B}|
&= \frac{1}{2}(| x( x^2+y^2) M| + |x((x^2+z^2)M + R)| + |x((y^2+z^2)M + R)|) \\
&\leq \frac{w}{2} \left(  | ( x^2+y^2) M| + |(x^2+z^2)M + R| + |(y^2+z^2)M + R|\right)\\
|\mathcal{C}|
& =0\\
|\mathcal{D}|
&= |z(x^2+y^2) M|
\leq w |(x^2+y^2) M|
\end{align*}
so that
\[
| \cA |+| \cB |+| \cC |+| \cD |
\leq \frac{w}{2} ( 3| ( x^2+y^2) M| + 3|(x^2+z^2)M + R| + |(y^2+z^2)M + R| ).
\]
This gives us an \textit{upper bound} on the minimum distance as a function of the tensor $M$.
If now we let $M$ be the tensor that minimizes the first inequality in Lemma \ref{lem:decoupling}, we get the following.
\begin{theo}[Decoupling Theorem] \label{thm:decoupling}
Let $(H_1, H_2, H_3) \in \mathcal{T}$ be symmetric matrices with row weight upper bounded by $w$.
Define $d^*$ as the solution of the following combinatorial problem:
\begin{align}\label{eqn:d*}
d^* :=  \min_{(ijk)\in (123)} \min_M  \left( |(H_i^2+H_k^2)M + R|  +|(H_j^2+H_k^2)M + R| + \frac{1}{2} |(H_i^2+H_j^2) M|\right).
\end{align}
Then the minimum distance $d_{\min}$ of the XYZ product code $\mathcal{Q}(H_1, H_2, H_3)$ satisfies
\[ \frac{d^*}{w} \leq d_{\min} \leq \frac{3}{2}w d^*.\]
\end{theo}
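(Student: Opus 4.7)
The plan is to treat the two inequalities separately. The lower bound $d^{*}/w \le d_{\min}$ is already in hand: it is exactly the first inequality of Lemma \ref{lem:decoupling} once the minimum over $M$ and over the six permutations of $(H_1,H_2,H_3)$ is taken on the right-hand side, matching the definition of $d^{*}$. So no further work is needed for that direction.

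For the upper bound I would construct, for every binary 3-tensor $M$, an explicit representative of the $Z$-logical coset and control its Pauli weight. Starting from the canonical $Z$-logical $R$ and adding the stabiliser element associated with the choices
\[
S = xyM,\qquad T = x^{2}M,\qquad U = yzM,\qquad V = xzM,
\]
I obtain via \eqref{eqn:logical} a Pauli tensor $(\cA,\cB,\cC,\cD)$ whose three $\sigma_i$-components have a very rigid structure. I expect that in $\cC$ all three components collapse to $xyzM$, so that $|\cC|=0$ by Lemma \ref{lem:pauli}, while in each of $\cA$ and $\cD$ two of the three components coincide and the block weight reduces to a single Hamming term. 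In $\cB$ all three pairwise differences survive and produce the three generic summands.

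The remaining step is to rewrite the surviving differences in the three quantities $|(x^{2}+y^{2})M|$, $|(x^{2}+z^{2})M+R|$ and $|(y^{2}+z^{2})M+R|$ that appear in the definition of $d^{*}$. This uses the two algebraic features of triples in $\mathcal{T}$: (i) $xR=yR=R$, which holds because $H_i u=H_i^{T}u=u$, and which lets me pull an $H$-operator outside of any difference of the form $A_x+A_z$ so that $R$ remains a clean additive summand; and (ii) the row-weight bound $|H_iN|\le w|N|$, which converts the resulting $|xN|$-type expressions into $w|N|$. Combining these with Lemma \ref{lem:pauli} should yield the pointwise estimate
\[
|\cA|+|\cB|+|\cC|+|\cD|\ \le\ \tfrac{w}{2}\bigl(3|(x^{2}+y^{2})M|+3|(x^{2}+z^{2})M+R|+|(y^{2}+z^{2})M+R|\bigr),
\]
valid for every $M$ and every assignment of $(x,y,z)$ to a permutation of $(H_1,H_2,H_3)$, and hence an upper bound on $d_{\min}$.

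The main obstacle is the final bookkeeping of constants: the $\bigl(\tfrac{3w}{2},\tfrac{3w}{2},\tfrac{w}{2}\bigr)$-weighted sum produced by the construction is not pointwise proportional to the $\bigl(\tfrac12,1,1\bigr)$-weighted sum defining $d^{*}$. To squeeze the claimed prefactor $\tfrac{3w}{2}$ out of the comparison I would plug in the tensor $M$ and permutation that attain $d^{*}$ and then use the triangle inequality $|(x^{2}+y^{2})M|\le|(x^{2}+z^{2})M+R|+|(y^{2}+z^{2})M+R|$, which follows from the identity $(x^{2}+y^{2})M=\bigl((x^{2}+z^{2})M+R\bigr)+\bigl((y^{2}+z^{2})M+R\bigr)$ over $\F_2$ (using $R+R=0$). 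This trades the unwanted $|(x^{2}+y^{2})M|$ term for the two $R$-containing terms and, together with the freedom to reorder which $H_i$ plays the role of $x,y,z$ so that the special index $k$ matches the optimiser in $d^{*}$, lets us absorb the asymmetric coefficients into the prefactor $\tfrac{3w}{2}$ and conclude.
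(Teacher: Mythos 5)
Your proof matches the paper's approach essentially line-for-line: the lower bound is indeed exactly the first inequality of Lemma~\ref{lem:decoupling}, and the upper bound comes from the explicit stabiliser shift $S=xyM,\ T=x^2M,\ U=yzM,\ V=xzM$. Your block-by-block weight predictions are also exactly what the paper computes --- $\cA$ and $\cD$ each have two coinciding $\sigma_i$-components (so only one symmetric difference survives), $\cC$ collapses entirely, and $\cB$ retains all three terms --- giving the estimate
\[
|\cA|+|\cB|+|\cC|+|\cD|\ \le\ \tfrac{w}{2}\bigl(3|(x^2+z^2)M+R| + |(y^2+z^2)M+R| + 3|(x^2+y^2)M|\bigr).
\]

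However, the final constant bookkeeping that you flag as ``the main obstacle'' is a genuine unresolved point, and the fix you propose does not close it. Writing $a=|(x^2+z^2)M+R|$, $b=|(y^2+z^2)M+R|$, $c=|(x^2+y^2)M|$ and plugging the $d^*$-minimiser into the display above, you need $\tfrac{w}{2}(3a+b+3c)\le\tfrac{3w}{2}(a+b+\tfrac{c}{2})$, which reduces to $c\le\tfrac{4}{3}b$. The triangle identity $(x^2+y^2)M=\bigl((x^2+z^2)M+R\bigr)+\bigl((y^2+z^2)M+R\bigr)$ gives only $c\le a+b$, which is weaker, and the $x\leftrightarrow y$ swap gives the symmetrised bound $\tfrac{w}{2}\bigl(2\min(a,b)+a+b+3c\bigr)$, which needs $c\le\tfrac{4}{3}\max(a,b)$ --- still not implied by $c\le a+b$. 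What one can honestly extract from this construction plus the triangle inequality is $d_{\min}\le 3w\,d^*$. The paper itself is silent on this step (it simply says to ``let $M$ be the tensor that minimizes the first inequality in Lemma~\ref{lem:decoupling}''), so you have not missed anything the paper supplies, but your claim that the triangle inequality plus reordering ``lets us absorb the asymmetric coefficients into the prefactor $\tfrac{3w}{2}$'' should be softened: it gives a worse constant, and the stated $\tfrac{3}{2}w$ appears to require either an argument about the structure of the optimal $M$ or should be read as ``some universal multiple of $w$,'' in the spirit of the paper's remark that replacing two of the three terms in the definition of $d^*$ only changes it by a constant factor.
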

Note that defining $d^*$ with any two of the three terms in \eqref{eqn:d*} only changes the value by a constant factor.

\subsection{Some obstructions to a lower bound}

\subsubsection{Permutation invariance}
Here we show that the distance of an XYZ product code with $(H_1, H_2, H_3) \in \mathcal{T}$ is invariant upon arbitrary permutations of rows and columns of the parity-check matrices.
This indicates that we cannot to improve the distance of the code by considering \textit{e.g.}~random permutations of the rows or columns.
\begin{lemma}[Permutation invariance]\label{lem:perm-inv}
Let $\pi_i, \tau_i$ be permutations over $[n_i]$ for $i=1,2,3$, and consider any triple $( H_1, H_2, H_3 ) \in \mathcal{T}$.
Then the distance of codes $\mathcal{Q}(H_1,H_2,H_3)$ and $\mathcal{Q}(\pi_1 H_1\tau_1,\pi_2 H_2 \tau_2,\pi_3 H_3\tau_3)$ coincide.
\end{lemma}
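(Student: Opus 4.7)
The plan is to exhibit an explicit qubit relabeling (a physical permutation of the $N$ qubits) under which the stabilizer code $\mathcal{Q}(\pi_1 H_1 \tau_1, \pi_2 H_2 \tau_2, \pi_3 H_3 \tau_3)$ becomes literally equal, as a stabilizer subgroup of the $N$-qubit Pauli group, to $\mathcal{Q}(H_1,H_2,H_3)$ after a corresponding relabeling of the generators. Since a qubit permutation preserves Pauli weights, this immediately yields equality of the minimum distances.

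The first step is to exploit the hypothesis $(H_1,H_2,H_3)\in\mathcal{T}$: each $H_i$ is invertible, hence square with $m_i=n_i$. Consequently $\pi_i$ and $\tau_i$ both act on a space of dimension $n_i$, and every coordinate of each tensor block $A,B,C,D$ and $S,T,U,V$ has a well-defined type: a column-type coordinate (indexed by a column of some $H_l$) should be relabeled by $\tau_l$, while a row-type coordinate (indexed by a row of $H_l$) should be relabeled by $\pi_l$. Concretely, I would define the qubit relabeling $\Phi$ on $A\in\F_2^{n_1\times n_2\times n_3}$ by $(i,j,k)\mapsto(\tau_1(i),\tau_2(j),\tau_3(k))$, on $B\in\F_2^{n_1\times m_2\times m_3}$ by $(i,j,k)\mapsto(\tau_1(i),\pi_2(j),\pi_3(k))$, and analogously on $C$ and $D$. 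A parallel relabeling is defined on the check tensors $S,T,U,V$ by the same rule.

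The central step is to verify, entry by entry in the generator matrix $\Gamma$ of equation \eqref{eqn:gen}, that $\Phi$ (combined with the check-side relabeling) sends the generators of $\mathcal{Q}(\pi_1 H_1 \tau_1, \pi_2 H_2 \tau_2, \pi_3 H_3 \tau_3)$ to those of $\mathcal{Q}(H_1,H_2,H_3)$. For example, the contribution $\sigma_1^{((\pi_1 H_1 \tau_1)^T\otimes\1\otimes\1)S'}$ to $\mathcal{A}'$ factors using $(\pi_1 H_1 \tau_1)^T=\tau_1^T H_1^T \pi_1^T$: the factor $\pi_1^T$ on the right is absorbed by the relabeling of the first (row-type) coordinate of $S$, while the factor $\tau_1^T$ on the left matches the relabeling of the first (column-type) coordinate of $A$. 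Each of the seven remaining nonzero entries of $\Gamma$ is handled identically, with one instance of $\tau_l$ or $\pi_l$ coming from the transformed $H_l$ cancelling against the corresponding coordinate relabeling. It follows that $\Phi$ induces a bijection of stabilizer groups, and since $\Phi$ is a qubit permutation, it preserves Pauli weights; in particular the weight of every logical operator is preserved, so the two minimum distances coincide.

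The main obstacle is purely bookkeeping: consistently tracking which block and which coordinate use which permutation across the eight entries of $\Gamma$, and keeping the direction of transposes straight. Conceptually, the only content of the argument is that the XYZ product construction is covariant under permutations of the rows and columns of the parity-check matrices, a covariance that is immediate from the matrix identity $(\pi H\tau)^T=\tau^T H^T\pi^T$ together with the natural tensor action of permutation matrices on the four qubit blocks and four check blocks.
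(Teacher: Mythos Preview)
Your proposal is correct and follows essentially the same idea as the paper's proof: both exploit that permuting rows and columns of the $H_i$'s amounts to permuting the coordinates of the eight tensor blocks, using the identity $(\pi H\tau)^T=\tau^T H^T\pi^T$. The difference is only in framing. The paper fixes a minimum-weight $Z$-logical representative $(\mathcal{A},\mathcal{B},\mathcal{C},\mathcal{D})$, defines $S',T',U',V'$ by the appropriate coordinate permutations, and then checks blockwise that the resulting $(\mathcal{A}',\mathcal{B}',\mathcal{C}',\mathcal{D}')$ is a valid $Z$-logical of the permuted code with the same weight (so one direction of the inequality is explicit, the other follows by symmetry of the construction). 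Your version instead packages the same computation as a global qubit permutation $\Phi$ (together with a check-side relabeling) exhibiting the two stabilizer groups as literally equal after relabeling, from which distance equality is immediate without having to single out a logical representative or invoke symmetry. Both arguments reduce to the same eight entrywise checks on $\Gamma$; yours is a bit more conceptual, the paper's a bit more explicit, but there is no substantive difference.

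One small remark: the paper additionally notes that the permuted triple again lies in $\mathcal{T}$, which justifies that both codes really are dimension-$1$ stabilizer codes. Your argument does not need this to conclude equality of distances (code equivalence suffices), but it is worth stating so that the statement is not vacuous.
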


\begin{proof}
First, note that if $(H_1, H_2, H_3) \in \mathcal{T}$, then $(\pi_1 H_1\tau_1,\pi_2 H_2 \tau_2,\pi_3 H_3\tau_3) \in \mathcal{T}$ since multiplying $H_i$ by permutation matrices leaves it invertible, with the same row and column weights.
Moreover, $(\pi_i H_i \tau_i)(\pi_i H_i \tau_i)^T = \pi_i H_i H_i^T \pi_i^T$ has the same spectrum as $H_i H_i^T$.
Assume without loss of generality that the minimum distance of $\mathcal{Q}(H_1, H_2, H_3)$ corresponds to a $Z$-logical operator and let $(S,T,U,V)$ be the choice of generators leading to the minimum weight representative of the logical error, similarly to \eqref{eqn:logical}.
The logical operator admits the form (replacing the shorthands $x,y,z$ by their explicit forms):
\begin{align*}
A = \begin{bmatrix} H_1^T S\\ H_2^T T   \\ H_3^T U+R \end{bmatrix} \quad
B = \begin{bmatrix} H_1T \\ H_2 S \\ H_3^T V+R \end{bmatrix} \quad
C = \begin{bmatrix} H_1U \\ H_2^T V \\ H_3 S \end{bmatrix} \quad
D = \begin{bmatrix} H_1^T V \\ H_2 U\\ H_3 T \end{bmatrix}?
\end{align*}
We note that a representative of the $Z$-logical operator of the new code $\mathcal{Q}(\pi_1 H_1\tau_1,\pi_2 H_2 \tau_2,\pi_3 H_3\tau_3)$ is given by 2 full horizontal layers of Pauli-$Z$ operators in $A$ and $B$ with support $\tau_3^T R$. 
Defining $S', T', U' V'$ as 
\[
S'
= (\pi_1 \otimes \tau_2^T \otimes \tau_3^T ) S, \;\;
T'
=( \tau_1^T \otimes \pi_2 \otimes \tau_3^T)T, \;\;
U'
=( \tau_1^T \otimes \tau_2^T \otimes \pi_3) U, \;\;
V'
= (\pi_1 \otimes \pi_2 \otimes \pi_3) V,
\]
we obtain the following representative of the logical error:
\begin{align*}
A' &= \begin{bmatrix} (\pi_1 H_1\tau_1)^T S'\\ (\pi_2 H_2 \tau_2)^T T'   \\ (\pi_3 H_3\tau_3)^T U'+ \tau_3^T R \end{bmatrix} = (\tau_1^T \otimes \tau_2^T \otimes \tau_3^T) \begin{bmatrix} H_1^T S\\ H_2^T T   \\ H_3^T U+R \end{bmatrix}\\
B' &= \begin{bmatrix} (\pi_1 H_1\tau_1)T' \\ (\pi_2 H_2 \tau_2) S' \\ (\pi_3 H_3\tau_3)^T V'+\tau_3^T R \end{bmatrix} =(\pi_1\otimes \pi_2 \otimes \tau_3^T) \begin{bmatrix} H_1T \\ H_2 S \\ H_3^T V+R \end{bmatrix} \\
C' &= \begin{bmatrix} (\pi_1 H_1\tau_1)U' \\ (\pi_2 H_2 \tau_2)^T V' \\ (\pi_3 H_3\tau_3) S' \end{bmatrix} = (\pi_1\otimes \tau_2^T \otimes \pi_3)\begin{bmatrix} H_1U \\ H_2^T V \\ H_3 S \end{bmatrix}\\
D' &= \begin{bmatrix} (\pi_1 H_1\tau_1)^T V' \\ (\pi_2 H_2 \tau_2) U'\\ (\pi_3 H_3\tau_3) T' \end{bmatrix}=(\tau_1^T \otimes \pi_2 \otimes \pi_3)\begin{bmatrix} H_1^T V \\ H_2 U\\ H_3 T \end{bmatrix},
\end{align*}
which has the same weight as the minimal logical error of $\mathcal{Q}(H_1,H_2,H_3)$.
\end{proof}

\subsubsection{Ruling out expansion-based arguments}
\label{sub:expansion}

It is tempting to consider the XYZ product built from classical expander codes.
Indeed, we might be able to use the expansion properties to get a good lower bound on the minimum distance.
Recall for instance that classical expander codes have a linear minimum distance \cite{SS96} and that quantum expander codes (built from classical expander codes) saturate the $\Theta(\sqrt{N})$ upper bound for the distance of hypergraph product codes \cite{LTZ15}.
Even more so, quantum expander codes have the property that any error $\cE$ of weight $O(N^{1/2})$ is expanding in the sense that its syndrome $\sigma(\cE)$ is large:
\[
\forall \cE \quad \text{s.t.} \quad |\cE| \leq \gamma \sqrt{N}, \quad
|\sigma(\cE)| \geq c |\cE|,
\]
for some constants $\gamma, c >0$.

When taking the XYZ product of three classical expander codes of size $n_i = \Theta(n)$, one could naively expect a similar phenomenon to hold for errors of weight up to $N^{2/3} = n^2$.
Unfortunately, it seems that for any $\delta>0$ we can construct errors of weight $n^{2 - \delta} = \Theta(N^{2/3-\delta/3})$ for which the syndrome has weight\footnote{A caveat is that it is not entirely clear at this point that the error cannot be reduced modulo the stabilizer, although this seems unlikely.} only $O(n)$.

The construction is algebraic in nature, and goes as follows: consider an error operator that corresponds to two horizontal slices (not necessarily full slices) of $Z$-errors in $A$ and $B$.
Denote the support of the $Z$-errors on these slices by the matrices or 2-tensors $P$ and $Q$, respectively.
Then the syndrome (in matrix notation) is given by the corresponding slices in $S$ and $T$:
\[
S
= H_1 P + Q H_2, \qquad
T
= P H_2^T + H_1^T Q.
\]
We will define the error tensors $P$ and $Q$ as follows: for some matrix $X$, let
\begin{align*}
P
&:=  H_1^T  \left( \sum_{\ell =0}^{k-1} (H_1 H_1^T)^\ell X (H_2 H_2^T)^{k-\ell-1} \right) H_2\\ 
Q
&:=  (H_1 H_1^T) \left( \sum_{\ell=0}^{k-1}  (H_1 H_1^T)^\ell X (H_2 H_2^T)^{k-\ell-1} \right).
\end{align*}
Now we calculate the syndrome associated to error $(P,Q)$:
\begin{align*}
S
&= H_1 P + Q H_2
= 0 \\
T
&= P H_2^T + H_1^T Q \\
&= H_1^T \left( \sum_{\ell=0}^{k-1} (H_1 H_1^T)^\ell X (H_2 H_2^T)^{k-\ell}
	+ \sum_{\ell=0}^{k-1} (H_1 H_1^T)^{\ell+1} X (H_2 H_2^T)^{k-\ell-1} \right) \\
&= H_1^T \left( X (H_2 H_2^T)^k + (H_1 H_1^T)^k X \right).
\end{align*}
The most important observation now is that almost all of the terms of the original error cancel, and it seems likely that this will prevent expansion of the errors.
Indeed, assume that $X$ has Hamming weight $|X| \in O(1)$.
By the expansion of $H_1$ and $H_2$, we expect that the error weight $|P| + |Q|$ will increase more or less monotonically with $k$ (up to $\Theta(n^2)$).
Hence there will be a range of $k$ such that the error has weight $|P| + |Q| \in \omega(n)$.
For the error in this range to be expanding, we need that also the syndrome has weight $|S| + |T| \in \omega(n)$.
However, as is apparent from the expressions for $S$ and $T$, it always holds that the syndrome weight $|S| + |T| \in O(n)$ (both terms in $T$ are supported on only $O(1)$ rows or columns).

This construction shows that expansion arguments are unlikely to be useful in the study of XYZ product codes: while it is easy to show that errors of weight $O(N^{1/3})$ expand (simply by applying the same method as for the quantum expander codes \cite{LTZ15}), our argument suggests that similar methods fail for errors of weight $\omega(N^{1/3})$.

\subsection{XYZ product codes from parity-check matrices with even row/column weight}

\label{subsec:decoupling-chamon}

Let us first focus on the Chamon code which provides a specific instance of XYZ product code with parity-check matrices with rows and columns of weight 2. 
Lemma \ref{lem:perm-inv} shows that the Chamon code has the same distance as $\mathcal{Q}(H_1,H_2,H_3)$ with $H_i = \Omega_{n_i} + \Omega_{n_i}^T$ when the $n_i$'s are odd (so that the map $ \ell \to 2\ell \mod n_i$ corresponds to a permutation of $[n_i]$).
Let us consider for simplicity the case where $\gcd(n_1, n_2, n_3)=1$ yielding a code of dimension 4 \cite{BLT11}. 
Any representation of the $\overline{Z}_i$ logical operator ($i\in \{1,2,3,4\}$) takes the form
\begin{align*}
A_i = \begin{bmatrix} H_1 S\\ H_2 T   \\ H_3 U+ \delta_{i,1} R \end{bmatrix} \quad
B_i = \begin{bmatrix} H_1T \\ H_2 S \\ H_3 V+ \delta_{i,2} R \end{bmatrix} \quad
C_i = \begin{bmatrix} H_1U \\ H_2 V \\ H_3 S+ \delta_{i,3} R \end{bmatrix} \quad
D_i = \begin{bmatrix} H_1 V \\ H_2 U\\ H_3 T+ \delta_{i,4} R \end{bmatrix},
\end{align*}
where $(S,T,U,V)$ is an arbitrary stabilizer element and $\delta_{i,1} = 1$ if $i=1$ and 0 otherwise. 
In particular, the logical operator $\overline{Z}_1 \overline{Z}_2 \overline{Z}_3 \overline{Z}_4$ corresponds to $(A_1, B_2, C_3, D_4)$. Its weight gives an \emph{upper} bound on the distance $d$ of the Chamon code:
\begin{align*}
d &\leq w(\overline{Z}_1 \overline{Z}_2 \overline{Z}_3 \overline{Z}_4) \\
&= \min_{ S,T,U,V} (|A_1| + |B_2| + |C_3| + |D_4|)\\
&\leq  \min_{ S = T = U =V} (|A_1| + |B_2| + |C_3| + |D_4|)\\
& = d_*
\end{align*}
where we imposed the constraint $S=T=U=V$ in the third line and defined $d_*$ as
\begin{align}\label{eqn:d*-chamon}
d_* := \min_{M} 2 \left(  |(H_1 +H_2)M | + |(H_1+H_3)M+R| + |(H_2+H_3)M + R|  \right).
\end{align}

The argument generalizes immediately to any XYZ product code where $H_i=H_i^T$, with odd $n_i$'s. 
The triangle inequality $|A|+|B|+|C|+|D| \geq |A+B+C+D|$ also shows that
\[ d_* \leq w(\overline{Z}_1) \]
by setting $M=S+T+U+V$. This does not directly lead to a lower bound on the minimum distance, however, because one needs to consider all $4^k-1$ possible logical operators of the code.

\section{Cyclic 3D XYZ code and polynomial formalism}
\label{sec:cyclic}

We focus in this section on translation-invariant codes. In the next section, we will study in greater detail one of the simplest instances of such codes, which can be embedded with local interaction in 3 dimensions.

In order to get a translation invariant code, it is possible to take parity-check matrices which are \textit{circulant matrices}.
A square matrix of dimension is called circulant when every row is shifted by one element w.r.t.~the preceding row.
Such a matrix can be described by a polynomial $H = P(\Omega)$ in the permutation matrix $\Omega$, defined by $\Omega|j\rangle = |j+1 \mod n\rangle$ (so that $\Omega^n = \1_n$).
Alternatively, these operators can be described even more concisely using a \textit{polynomial formalism} where we describe $\Omega$ using a variable $x$, and a circulant matrix $H = P(\Omega)$ is described by the polynomial $P(x) \in \F_2[x]/(1+x^n)$ (\textit{i.e.}, the polynomial ring over $\F_2$ with $x^n = 1$).

With this formalism we wish to revisit the question about the minimum distance, and in particular the decoupling argument (Section \ref{sec:decoupling}) which showed that the minimum distance problem reduces to a problem about 3-tensors.
In the polynomial formalism we describe such an $n_1 \times n_2 \times n_3$ tensor by a multivariate polynomial
\[
P(x,y,z)
\in \F_2[x,y,z]/(x^{n_1}+1, y^{n_2}+1, z^{n_3}+1)
=: \F_2[x,y,z]/I,
\]
using the shorthand $I$ for the ideal $(x^{n_1}+1, y^{n_2}+1, z^{n_3}+1)$.
The presence of a monomial $x^i y^j z^k$ in $P$ denotes the presence of a 1 in position $(i,j,k)$ of the tensor.
As an example, consider the natural $Z$-logical operator ($R$ in Section \ref{sec:decoupling}) corresponding to a full horizontal slice.
In the polynomial formalism we can describe it using the polynomial
\[
R(x,y,z)
= \sum_{i=0}^{n_1-1} \sum_{j=0}^{n_2-1} x^i y^j.
\]
If we want that the code can be embedded in 3 dimensions with local interactions, then it suffices to pick polynomials $P_i$ (describing the $H_i$'s) of the form
\begin{align*} 
P_1(x) &= 1 + \sum_{i \in \mathcal{I}} x^{i} + x^{-i} \\
P_2(y) &= 1 + \sum_{j \in \mathcal{J}} y^{j} + y^{-j}\\
P_3(z) &= 1 + \sum_{k \in \mathcal{K}} z^{k} + z^{-k} ,
\end{align*}
with $\max(i \: : \: i \in \mathcal{I}), \max(j \: : \: j \in \mathcal{J}), \max(k \: : \: k \in \mathcal{K}) = O(1)$. Note that $x^{-i}$ corresponds to $x^{n_1-i}$.
Here, we restrict to ``symmetric'' polynomials (satisfying the symmetry $x \leftrightarrow x^{-1}$) for simplicity. 
The fact that each of the sets $\mathcal{I}, \mathcal{J}, \mathcal{K}$ has constant size is equivalent to saying that the quantum code is LDPC.
Indeed, the weight of a generator will be $ 3 +2 (|\mathcal{I}| + |\mathcal{J}| + |\mathcal{K}|)$.

Now we are ready to revisit the decoupling argument.
Define the polynomial $Q_1$ by
\begin{align}\label{eqn:Qi}
Q_1(x) = (1+P_1(x))^2 :=  \sum_{i \in \mathcal{I}} x^{2i} + x^{-2i},
\end{align}
and similarly for $Q_2$ and $Q_3$.
Using these polynomials, and Theorem \ref{thm:decoupling}, we see that the minimum distance (up to constant factors, see the remark below Theorem \ref{thm:decoupling}) is given by 
\[
\min_{P \in \F_{2}[x,y,z]/I}
\Big( |(Q_1 + Q_2) P| +  |(Q_1 + Q_3) P + R|  \Big),
\]
where we omit the variables for conciseness.
This minimum is at most equal to $|R| = n_1 n_2$, which is achieved by setting $P=0$. 
The main question is whether there exists a choice for $P$ such that the distance is significantly less.
While we will not solve this question here, it is interesting to understand which properties a polynomial $P$ should display to get a small weight for $(Q_1+Q_2)P$.

\subsection{Fractal operators}

For simplicity, we would like to focus on a choice of polynomials that ensures that the dimension of the quantum code is equal to 1.
As explained in Section \ref{sec:decoupling}, taking a triple $(H_1, H_2, H_3) \in \mathcal{T}$ ensures that the map
\[
\Phi_{12}\: : \: P(x,y) \mapsto (Q_1(x) + Q_2(y)) P(x,y)
\]
is almost bijective on $F_2[x,y]/(x^{n_1}+1, y^{n_2}+1)$.
In fact, the image is equal to all polynomials with even weight and the kernel of the map has dimension 1:
\[
\ker \Phi_{12}
= \left\{0, \sum_{i=0}^{n_1-1} \sum_{j=0}^{n_2-1} x^i y^j \right\}.
\]
To understand the structure of polynomials such that $|\Phi_{12}(P)|$ is small, it is useful to recall the \textit{Froebenius endomorphism} $F$ on the polynomial ring, which is given by
\[
F(p(x,y))
= p(x^2, y^2),
\]
with the property that $F(P+Q) = F(P) + F(Q)$. 
In particular, it implies that 
\[
(Q_1(x) + Q_2(y))^{2^k}
= (Q_1(x))^{2^k} + (Q_2(y))^{2^k}
= Q_1(x^{2^k}) + Q_2(y^{2^k}),
\]
and hence
\[
\Phi_{12}\big( (Q_1(x)+Q_2(y))^{2^k-1} \big)
= (Q_1(x)+Q_2(y))^{2^k}
= Q_1(x^{2^k}) + Q_2(y^{2^k}).
\]
This gives the bound
\begin{align*}
\left| \Phi_{12}\left( (Q_1(x)+Q_2(y))^{2^k-1} \right) \right|
= \big| Q_1(x^{2^k}) + Q_2(y^{2^k}) \big|
\leq |Q_1(x)| + |Q_2(y)|.
\end{align*}
Note that we do not get an equality in general because cancellations might occur modulo $n_1$ or $n_2$. 
A polynomial of the form $(Q_1(x)+Q_2(y))^{2^k-1}$ will be referred to as a \emph{fractal operator}.

It is also interesting to understand what polynomials $P$ are such that $| (Q_1+Q_3)P + R|$ is small. 
For this, taking a fractal operator is not sufficient: while $(Q_1+Q_3)P$ might have some support on the support of $R$, it also has support on the complement of $R$. 
However, defining $m_3 := 2^{n_3-1}-1$, $P_{13}(x,z) := (Q_1(x) + Q_3(z))^{m_3}+1$, and $\Phi_{13}$ in analogy to $\Phi_{12}$, we get:
\begin{align*}
\Phi_{13}(P_{13}(x,z))
&= (Q_1(x)+Q_3(z)) P_{13}(x,z)\\
&= (Q_1(x) + Q_3(z))^{m_3+1}+Q_1(x) + Q_3(z)\\
&=  (Q_1(x) + Q_3(z))^{2^{n_3-1}}+Q_1(x) + Q_3(z)\\
&=  Q_1(x)^{2^{n_3-1}}+Q_1(x)  + Q_3(z)^{2^{n_3-1}} + Q_3(z)\\
&=  Q_1(x)^{2^{n_3-1}}+Q_1(x)  + Q_3(z^{2^{n_3-1}}) + Q_3(z)\\
&= Q_1(x)^{2^{n_3-1}}+Q_1(x) 
\end{align*}
where the last equality follows from the fact that $2^{n_3-1} = 1 \mod n_3$ and therefore $z^{2^{n_3-1}} = z$ on the ring $\F_2[z]/(z^{n_3}+1)$.
We get that $\Phi_{13}(P_{13})$ is supported on a single row (in the $x$-direction), having weight at most $n_1$.

Now in order to get a polynomial $P$ such that $| (Q_1+Q_3)P + R|$ is small, one could therefore choose a polynomial of the form
\[
P(x,y,z)
= \sum_{y=0}^{n_2-1} y^j P_j(x,z),
\]
which corresponds to $y$-translates of the $x$-$z$ planes $P_j(x,z)$.
If we choose $P_j(x,z)$ close to polynomials of the form $S_j(x) P_{13}(x,z)$, then the weight of $S_j(x) (Q_1(x)^{2^{n_3-1}}+Q_1(x) )$ is close to $n_1$ (by our former argument), which in turn implies that $| (Q_1+Q_3)P + R|$ is small.
At the same time, writing this polynomial as $P(x,y,z) = \sum_{k=0}^{n_3-1} z^k P_k(x,y)$, one asks whether each $P_k$ can be a small sum of fractal operators, to ensure that $|(Q_1+Q_2)P_k|$ is small.  It is tempting here to try to argue that if $P$ was chosen so that $|(Q_1+Q_3)P + R|$ is small, then there must exist some choice of $Q_2$ for which the $|(Q_1+Q_2)P_k|$ cannot all be small. 
Formalizing a probabilistic argument along these lines may be an approach to proving the existence of XYZ product codes with distance $\Theta(N^{2/3})$ but appears to be a complicated problem. 

In the next section, we will argue that taking $Q_1=Q_2 = Q_3$, but distinct $n_i$'s (which is for instance the case of the original Chamon's code) may often lead to the existence of low-weight logical operators, and therefore a distance possibly as small as $\Theta(N^{1/3})$.

We remark that the existence of fractal operators which can have a linear weight but a constant-weight syndrome shows that the cyclic XYZ product codes investigated here cannot be locally testable \cite{AE15}.

\subsection{Embedding in 3-dimensional space}
\label{sub:embed}

Let us quickly remark on the possibility to locally embed such a code in three dimensions.
We consider a lattice of size $n_1 \times n_2 \times n_3$ with boundary conditions. Putting 4 qubits per site (one qubit of each type, $A, B,C,D$), one obtains a topological code with generators of weight $d_1+d_2+d_3$, where $d_i$ is the weight of $P_i$.
To ensure that the quantum code admits geometrically local generators, it is sufficient that the three classical codes are local in 1 dimension.


\section{The simplest 3D XYZ code}
\label{sec:simplest}

A general upper bound on the distance for 3-dimensional topological codes is $d = O(N^{2/3})$ \cite{BT09} and it is an open question whether this bound is tight or whether the correct value is $O(N^{1/2})$ as suggested by all known code constructions.
We try to shed new light on this question by investigating one of the simplest XYZ product codes embeddable in 3 dimensions.
Let ${\cal Q}(H_1,H_2,H_3)$ be the XYZ product code derived from the parity-check matrices
\[
H_i
= \1_{n_i} + \Omega_{n_i} + \Omega_{n_i}^T,
\]
with $\Omega_{n_i} |j\rangle = | j+1 \mod n_i\rangle$.
The only freedom in the code construction is the choice of the triple $(n_1, n_2, n_3)$.
We will first compute the dimension of this 3D XYZ code when the $n_i$'s are odd and not multiples of 3.
The formula for the dimension is similar to that of the Chamon code, but with what looks like an extra qubit.
By taking the $n_i$'s to be coprime, we can ensure that only this logical qubit is encoded, and so the code has dimension 1.
For this particular case we show that most choices of $n_i$'s lead to a distance upper bounded by $N^{1/3}$. While this does not rule out the existence of families of $n_i$'s leading to a larger distance, it strongly suggests that taking slightly more complicated parity-check matrices will be required to beat the $\sqrt{N}$ barrier.
Indeed, recall from Lemma \ref{lem:perm-inv} that permuting the rows and columns of the parity-check matrices does not change the minimum distance.
As a consequence, any circulant matrices with columns of weight 3 will lead to the same distance, and columns of weight 5 will probably be necessary to get better parameters.

\begin{rem}
Ref.~\cite{BLT11}, which focused on the Chamon code, also suggested some generalizations where the repetition code would be replaced by symmetric circulant matrices with even row weight.
Unfortunately, the decoupling argument of Theorem \ref{thm:decoupling} does not apply in that case since the 3-tensor $R$ corresponding to a single horizontal plane of 1s is not stabilized by circulant matrices of even weight. 
\end{rem}

\subsection{Dimension of the 3D XYZ code}
\label{sub:dim-3D}

Using Theorem \ref{thm:xyz-dim} we can prove the following claim on the dimension of the 3D XYZ code.
\begin{claim} \label{claim:dim-cubic}
If $n_1$, $n_2$ and $n_3$ are odd and no multiples of 3, then the dimension of the 3D XYZ code is
\[
4(\gcd(n_1,n_2,n_3)-1) + 1.
\]
\end{claim}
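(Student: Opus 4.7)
The plan is to apply Theorem~\ref{thm:xyz-dim} directly, after checking that its hypotheses are met.

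\textbf{Step 1: invertibility of each $H_i$.} In the polynomial formalism, $H_i$ corresponds to $P_i(x) = 1 + x + x^{-1} = x^{-1}(1+x+x^2)$ in $\F_2[x]/(x^{n_i}+1)$. Since $x$ is always invertible modulo $x^{n_i}+1$, invertibility of $H_i$ is equivalent to $\gcd(1+x+x^2,\,x^{n_i}+1) = 1$ over $\F_2[x]$. The roots of $1+x+x^2 = \Phi_3(x)$ are the primitive cube roots of unity, which appear among the $n_i$-th roots of unity iff $3 \mid n_i$; the hypothesis $3 \nmid n_i$ therefore suffices.

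\textbf{Step 2: eigenvalues of $H_i H_i^T$.} Since $H_i = H_i^T$, we have $H_i H_i^T = H_i^2$, which in characteristic $2$ corresponds to the circulant polynomial $(1+x+x^{-1})^2 = 1 + x^2 + x^{-2}$. Because $n_i$ is odd, $x^{n_i}+1$ is separable over $\overline{\F_2}$, so $\Omega_{n_i}$ (and every polynomial in it) is diagonalizable. All Jordan blocks of $H_i H_i^T$ are therefore one-dimensional, with characteristic polynomials of the form $x + \lambda$, where $\lambda$ ranges, with multiplicity, over $\{1 + \omega^2 + \omega^{-2} : \omega \in \mu_{n_i}\}$. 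Because $n_i$ is odd, $\omega \mapsto \omega^2$ is a bijection on $\mu_{n_i}$, so this eigenvalue set equals $\{a(\zeta) := 1 + \zeta + \zeta^{-1} : \zeta \in \mu_{n_i}\}$.

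\textbf{Step 3: counting matching triples.} By Theorem~\ref{thm:xyz-dim}, the code dimension equals the number of triples $(\zeta_1,\zeta_2,\zeta_3) \in \mu_{n_1} \times \mu_{n_2} \times \mu_{n_3}$ with $a(\zeta_1) = a(\zeta_2) = a(\zeta_3)$. I would split by the common value $a^\ast$. For $a^\ast = 1$ (i.e.\ $\zeta + \zeta^{-1} = 0$), the relation $\zeta^2 = 1$ combined with $n_i$ odd forces $\zeta = 1$, contributing exactly one triple. For $a^\ast \neq 1$, the equation $\zeta^2 + (a^\ast{+}1)\zeta + 1 = 0$ has two distinct roots in $\overline{\F_2}$, which must be of the form $\{\zeta_0, \zeta_0^{-1}\}$ with $\zeta_0 \neq 1$. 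Requiring that each factor $\mu_{n_i}$ contain one of these roots forces $\zeta_0 \in \mu_{n_1} \cap \mu_{n_2} \cap \mu_{n_3} = \mu_d$, where $d = \gcd(n_1,n_2,n_3)$. Since the $n_i$ are odd, $d$ is odd, so the nontrivial elements of $\mu_d$ split into $(d-1)/2$ unordered pairs $\{\zeta_0,\zeta_0^{-1}\}$, each producing a distinct value of $a^\ast$ and contributing $2 \cdot 2 \cdot 2 = 8$ triples. The total is
\[
1 + 8 \cdot \frac{d-1}{2} = 4(d-1) + 1,
\]
as claimed.

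\textbf{Main obstacle.} None of the steps is technically deep, but two characteristic-$2$ subtleties have to be handled with care: first, the fact that $a(\zeta) = 1$ has only the single preimage $\zeta = 1$ (rather than two), which produces the asymmetric additive constant $+1$ in the formula; and second, the legitimacy of reducing the Cecioni--Frobenius sum to one-dimensional Jordan blocks, which requires the separability argument in Step~2. Once these are in place, the counting is routine.
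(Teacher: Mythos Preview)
Your proof is correct and follows essentially the same route as the paper: verify invertibility, diagonalize $H_iH_i^T$ using separability of $x^{n_i}+1$ for odd $n_i$, and then count eigenvalue-matching triples via Theorem~\ref{thm:xyz-dim}. The only cosmetic difference is in the final count: the paper reduces the equation $\zeta_1+\zeta_1^{-1}=\zeta_2+\zeta_2^{-1}=\zeta_3+\zeta_3^{-1}$ to the equation $\zeta_1=\zeta_2=\zeta_3$ and then counts common roots of unity via an $\mathrm{lcm}$ argument, whereas you partition directly by the common value $a^\ast$ and count preimages using the quadratic $\zeta^2+(a^\ast{+}1)\zeta+1=0$; the two bookkeepings are equivalent. (Minor remark: in your $a^\ast=1$ case you don't actually need $n_i$ odd, since in characteristic~$2$ the equation $\zeta^2=1$ already forces $\zeta=1$; but the conclusion is unaffected.)
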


To prove the claim, first fix some $n$ and let $H = \1 + \Omega + \Omega^T$ be the corresponding matrix of dimension $n$.
We can prove the following lemma.
\begin{lemma}
Assume $n$ is odd and let $z$ be a primitive $n$-th root of unity\footnote{A solution of $z^n=1$ such that $z^k\neq 1$ for $k < n$. Such a root always exists when $n$ is odd.}.
Then the matrix $H = \1 + \Omega + \Omega^T$ is diagonalizable (over the algebraic closure of $\overline{\mathbb{F}}_2$) with eigenvalues
\[
1, 1 + z + z^{-1}, \dots, 1 + z^{n-1} + z^{-n+1}.
\]
It is invertible if and only if $n$ is not a multiple of 3.
\end{lemma}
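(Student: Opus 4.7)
The plan is to reduce everything to the spectral theory of the shift $\Omega$ itself, since $H = \1 + \Omega + \Omega^T = \1 + \Omega + \Omega^{-1}$ is a Laurent polynomial in $\Omega$ (using $\Omega^T = \Omega^{n-1} = \Omega^{-1}$). Any polynomial in a diagonalizable matrix is diagonalizable in the same eigenbasis, so once we understand $\Omega$ the rest follows mechanically.

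First I would show that $\Omega$ is diagonalizable over $\overline{\F_2}$ with eigenvalues $1, z, z^2, \ldots, z^{n-1}$. The characteristic polynomial of $\Omega$ is $x^n + 1$, and its formal derivative is $n x^{n-1} = x^{n-1}$ because $n$ is odd. Since $\gcd(x^n+1, x^{n-1}) = 1$ in $\F_2[x]$, the polynomial $x^n+1$ is separable, so $\Omega$ has $n$ distinct eigenvalues in $\overline{\F_2}$; these are precisely the $n$-th roots of unity, enumerated as $z^k$ for $k=0,\dots,n-1$. (A primitive $n$-th root exists because $x^n+1$ is separable of degree $n$.) Diagonalizing $\Omega$ simultaneously diagonalizes every polynomial in $\Omega$, so $H$ has eigenvalues $1 + z^k + z^{-k}$, as claimed.

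Next I would characterize invertibility by asking when one of these eigenvalues vanishes. Multiplying by $z^k$, the condition $1 + z^k + z^{-k} = 0$ becomes $z^{2k} + z^k + 1 = 0$, i.e.\ $z^k$ is a root of $\Phi_3(x) = x^2+x+1$, the minimal polynomial of the primitive cube roots of unity. Equivalently, $z^k$ is a primitive third root of unity, which holds iff $z^{3k} = 1$ and $z^k \neq 1$, i.e.\ $n \mid 3k$ but $n \nmid k$. If $\gcd(n,3)=1$ then $n \mid 3k$ forces $n \mid k$, contradicting $z^k \neq 1$; so $H$ has no zero eigenvalue and is invertible. Conversely, if $3 \mid n$, write $n = 3m$ and take $k = m$: then $3k = n$, so $z^{3k} = 1$, while $z^m \neq 1$, producing a genuine zero eigenvalue and killing invertibility.

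I do not anticipate any serious obstacle. The only subtlety to keep in mind is that we are in characteristic~2, so one must justify separability of $x^n + 1$ via the derivative being $x^{n-1}$ (valid because $n$ is odd); if $n$ were even, $\Omega$ would not be diagonalizable and the whole approach would fail. Everything else is routine linear algebra applied to circulant matrices through the ``polynomial-in-$\Omega$'' viewpoint that the paper has already been using.
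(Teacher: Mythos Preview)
Your proof is correct and follows essentially the same approach as the paper: diagonalize $\Omega$ via the separability of $x^n+1$ (the paper states this without the derivative argument you supply), deduce the eigenvalues of $H=\1+\Omega+\Omega^{-1}$, and reduce the zero-eigenvalue condition to $z^{3k}=1$ with $z^k\neq 1$. The only difference is cosmetic---you phrase the last step via $\Phi_3$ while the paper combines $1+z^k+z^{-k}=0$ and $1+z^k+z^{2k}=0$ to reach $z^{3k}=1$.
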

\begin{proof}
First we note that the characteristic polynomial of $\Omega$ is $x^n + 1$, so that its eigenvalues are given by the $n$-th roots of unity.
Over any field of characteristic 2, if $n$ is odd then there exist $n$ distinct roots of unity $1,z,z^2,\dots,z^{n-1}$.
This implies that $\Omega$ is diagonalizable with eigenvalues $1,z,\dots,z^{n-1}$.
The first claim then follows by noting that $\Omega^T = \Omega^{-1}$, and that $\Omega^{-1}$ is diagonal in the same basis with eigenvalues $1,z^{-1},\dots,z^{-n+1}$.

$H$ will be invertible if there is no $0 \leq k \leq n-1$ such that $1 + z^k + z^{n-k} = 0$.
This is equivalent to $1 + z^k + z^{2k} = 0$, and combining these equations shows that necessarily $z^{2k} = z^{n-k}$, or equivalently $z^{3k} = z^n = 1$.
Since $z$ is a primitive root, this implies that $k = n/3$ or $k = 2n/3$, so that there is a solution if and only if $n$ is a multiple of 3.
\end{proof}

Now assume that the $n_i$'s are odd and no multiples of 3, ensuring that the $H_i$'s are invertible.
Moreover, $H_i^2$ and $H_i$ are similar if $n$ is odd.
From Theorem \ref{thm:xyz-dim} it follows that the dimension of the code is $\sum_{i,j,k} \mathrm{deg}(\mathrm{gcd}(p^1_i,p^2_j,p^3_k))$ with $p^1_i,p^2_j,p^3_k$ the characteristic polynomials of the Jordan blocks of $H_1^2,H_2^2,H_3^2$ (respectively).
By the lemma above the Jordan blocks are 1-dimensional and hence the dimension is simply
\[
\sum_{i,j,k} \mathrm{deg}(\mathrm{gcd}(p^1_i,p^2_j,p^3_k))
= \sum_{i=0}^{n_1-1} \sum_{j=0}^{n_2-1} \sum_{k=0}^{n_3-1}
	\mathbf{1}(\lambda_{n_1}^i=\lambda_{n_2}^j=\lambda_{n_3}^k),
\]
where $\lambda_{n_i}^j = 1 + z_{n_i}^j + z_{n_i}^{-j}$ with $z_{n_i}$ a primitive $n_i$-th root of unity.
This corresponds to the number of solutions to the equation
\[
x^a + x^{-a} = y^b + y^{-b} = z^c + z^{-c}
\]
for $0 \leq a < n_1$, $0 \leq b < n_2$, $0 \leq c < n_3$ and $x,y,z$ primitive $n_1$-, $n_2$- and $n_3$-th roots of unity.
By the following lemma this proves Claim \ref{claim:dim-cubic}.
\begin{lemma}
Let $x,y,z$ be primitive $k$-, $\ell$-, and $m$-th roots of unity (respectively).
Then the number of solutions to the equation $x^a + x^{-a} = y^b + y^{-b} = z^c + z^{-c}$ (with $0 \leq a < k$, $0\leq b < \ell$ and $0 \leq c < m$) is equal to $4(\gcd(k,\ell,m)-1)+1$ 
\end{lemma}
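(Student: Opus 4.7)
The plan is to reduce the equation to counting triples of compatible roots of unity via the substitution $u = x^a$, $v = y^b$, $w = z^c$. Since $x$, $y$, $z$ are primitive roots of unity of orders $k$, $\ell$, $m$ respectively, the maps $a \mapsto x^a$, $b \mapsto y^b$, $c \mapsto z^c$ are bijections onto the sets of $k$-th, $\ell$-th, $m$-th roots of unity. So counting triples $(a,b,c)$ is the same as counting triples $(u,v,w)$ where $u^k = v^\ell = w^m = 1$ and $u + u^{-1} = v + v^{-1} = w + w^{-1}$.

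Next I would analyze the two-variable equation $u + u^{-1} = v + v^{-1}$. Multiplying through by $uv$ gives $u^2 v + v = u v^2 + u$, which factors as $(u+v)(uv+1) = 0$ (working in characteristic $2$, so signs are trivial). Hence either $v = u$ or $v = u^{-1}$. Applying the same argument to the pair $(u,w)$ yields $w \in \{u, u^{-1}\}$ as well. So a solution is completely determined by $u$ together with a choice of sign for $v$ and for $w$, subject to the constraints that $v$ is an $\ell$-th root of unity and $w$ is an $m$-th root of unity.

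Since for every sign choice one needs $u$ and $u^{-1}$ to be roots of unity of the appropriate order, $u$ must in fact satisfy $u^k = u^\ell = u^m = 1$, i.e. $u$ is a $\gcd(k,\ell,m)$-th root of unity. Conversely, any such $u$ works for all four sign choices $(v,w) \in \{(u,u), (u,u^{-1}), (u^{-1},u), (u^{-1},u^{-1})\}$. The main point requiring care is that in characteristic $2$ we have $u = u^{-1} \iff (u+1)^2 = 0 \iff u = 1$, so the four sign choices collapse to a single solution exactly when $u = 1$, and give four distinct solutions otherwise. Writing $g = \gcd(k,\ell,m)$, this gives $1 + 4(g-1)$ solutions in total, which is the claimed formula.

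The argument is essentially elementary once the characteristic-$2$ factorization $(u+v)(uv+1) = 0$ is observed; the only subtlety is the collapse of the four sign choices at $u = 1$, which is what creates the off-by-one constant in the formula $4(g-1) + 1$ as opposed to a naive $4g$. No deep tools are required beyond the fact that primitive roots of unity of order $n$ exist over $\overline{\mathbb{F}}_2$ whenever $n$ is odd, which was already used in the preceding lemma.
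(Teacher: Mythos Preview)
Your proof is correct and rests on the same key factorization as the paper, namely that $u+u^{-1}=v+v^{-1}$ forces $v\in\{u,u^{-1}\}$. The execution is somewhat cleaner than the paper's, however. By passing immediately to the roots of unity $u=x^a$, $v=y^b$, $w=z^c$, you make transparent that the constraint is simply $u^k=u^\ell=u^m=1$, so that $u$ ranges over the $g=\gcd(k,\ell,m)$ many $g$-th roots of unity, and the count $4(g-1)+1$ follows directly from the collapse of the four sign choices at $u=1$. The paper instead keeps track of the exponents $(a,b,c)$, reduces to the equation $x^a=y^b=z^c$, and then proves separately via an $\mathrm{lcm}$/common-multiple argument that this equation has exactly $\gcd(k,\ell,m)$ solutions. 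Your substitution bypasses that second step entirely. One small point worth making explicit: you implicitly use that there are exactly $g$ distinct $g$-th roots of unity in $\overline{\mathbb{F}}_2$, which holds because $g$ divides the odd integers $k,\ell,m$ and is therefore itself odd.
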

\begin{proof}
First we show that the number of solutions is $4(q-1)+1$ with $q$ the number of solutions to the equation $x^a = y^b = z^c$ (with $0 \leq a < k$, $0\leq b < l$ and $0 \leq c < m$).
To this end we use that $x + 1/x = y + 1/y$ (for $x,y \neq 0$) implies that $x = y$ or $x = 1/y$.
To see this, note that $x + 1/x = y + 1/y$ is equivalent to $x^2 + (y+1/y)x + 1 = (x+y)(x+1/y) = 0$.
On its turn, this implies that $x^a + x^{-a} = y^b + y^{-b} = z^c + z^{-c}$ is equivalent to $x^a = y^{\pm b} = z^{\pm c}$ (for some choice of the signs).
The number of solutions to $x^a = y^{\pm b} = z^{\pm c}$ is then given by $4(q-1) + 1$ where $q$ is the number of solutions to $x^a = y^b = z^c$.
Indeed, every solution $(a,b,c)$ of $x^a = y^b = z^c$ with $a,b,c \neq 0$ gives rise to 4 distinct solutions $(a,\pm b, \pm c)$ of $x^a = y^{\pm b} = z^{\pm c}$ (using that $y^b \neq y^{-b},z^c \neq z^{-c}$ for $0 < b < \ell$ and $0 < c < m$).
On the other hand, if either $a=0$, $b=0$ or $c=0$ then necessarily $a=b=c=0$ (since the roots are primitive), so this case corresponds to a single solution of $x^a = y^b = z^c$, and it gives rise to only 1 solution to $x^a = y^{\pm b} = z^{\pm c}$.

Next we show that the number of solutions $(a,b,c)$ to $x^a = y^b = z^c$ is $q = \gcd(k,\ell,m)$.
First note that if $(a,b,c)$ is a solution then $x^{a k} = x^{a \ell} = x^{a m} = 1$.
Hence $a\ell = qk$ and $am = rk$ for integers $q$ and $r$.
This implies that $am\ell = mqk = rk\ell$ is a common multiple of $k\ell$, $\ell m$ and $km$.
Moreover every $a < k$ defines a unique such common multiple of size $<k\ell m$ (as well as unique choices for $b$ and $c$), and every common multiple $s$ defines a solution by setting $(a,b,c)=(s/\ell m,s/km,s/k\ell)$.
Now we can use the formula
\[
\mathrm{lcm}(k\ell,\ell m,km)
= \frac{k\ell m}{\gcd(k,\ell,m)}.
\]
Since every common multiple of $k,\ell,m$ is a multiple of $\mathrm{lcm}(k\ell,\ell m,km)$, this shows that there are exactly $\gcd(k,\ell,m)-1$ nontrivial (different from 0) common multiples of size $<k\ell m$.
Adding the trivial solution $a=b=c=0$ finishes the lemma.
\end{proof}

\subsection{Minimum distance of the 3D XYZ code}
\label{sub:dmin-3D}

In this section, we choose the $n_i$'s so that the 3D XYZ code has dimension 1. More precisely, we choose $n_1, n_2, n_3$ to be coprime, odd, and not divisible by 3. 
In this case, there are only three logical operators to consider.
By the symmetry of the code, we can focus on a $Z$-logical operator without loss of generality (but by considering all the permutations of $n_1, n_2, n_3$).
Indeed the weight a $Z$-logical operator for $(n_1, n_2, n_3)$ is identical to the weight of an $X$-logical operator of the code parameterized by $(n_3,n_2,n_1)$.

We adopt the same polynomial notations as in the previous section, in which an arbitrary 3-tensor is described by a polynomial $P$ in the ring $\F_2[x,y,z]/\I$ with the ideal $\I := (x^{n_1} + 1, y^{n_2}+1, z^{n_3}+1)$.
Up to some multiplicative constant, Theorem \ref{thm:decoupling} (see the remark below this theorem) shows that the minimum distance is given by 
\begin{align}\label{eqn:dmin3D}
\min_P | (1+xy)(1+x/y)P | + |(1+xz)(1+x/z)P + R|,
\end{align}
where $R = \sum_{i=0}^{n_1-1}\sum_{j=0}^{n_2-1} x^i y^j$ corresponds to a single horizontal plane in the 3D picture. 
To see this, recall that the minimum distance is unchanged if we perform the change of variables $x \to x^{(n_1+1)/2}, y \to y^{(n_2+1)/2}, z \to z^{(n_3+1)/2}$ which yields the following polynomials $Q_1, Q_2, Q_3$ (defined in Eqn.~\eqref{eqn:Qi}):
\[ Q_1(x) = x^{-1}+x, \quad Q_2(y) = y^{-1}+y, \quad Q_3(z) = z^{-1}+z.\]
Note moreover that 
\[ Q_1(x) + Q_2(y) = x^{-1}(1+xy)(1+x/y) \quad \text{and} \quad Q_1(x) + Q_3(z) = x^{-1}(1+xz)(1+x/z).\]
Recall that the weight of a polynomial (that is the number of monomials) is invariant when the polynomial is multiplied by a (nonzero) monomial, corresponding to a translation in the tensor picture.

The main strategy to form a small weight solution of \eqref{eqn:dmin3D} is to consider $P$ of the form 
\[ P(x,y,z) = \left( \sum_{j=0}^{n_2-1} (xy)^j\right) P_{13}(x,z),\]
with $P_{13}$ such that 
\[  \left|(1+xz)(1+x/z)P_{13}(x,z) + \sum_{i=0}^{n_1-1} x^i \right| = f(n),\]
for some function $f(n) \in o(n)$. 
This choice yields $ |(1+xz)(1+x/z)P + R| = n f(n)$ and 
\begin{align*}
 (1+xy)(1+x/y)P   &=  (1+xy)  \left(\sum_{j=0}^{n_2-1} (xy)^j\right) (1+x/y)P_{13}(x,z)  \\
&=  (1+(xy)^{n_2})  (1+x/y)P_{13}(x,z)  \\
&=   (1+x^{n_2})  (1+x/y)P_{13}(x,z) 
\end{align*}
where we used $y^{n_2}=1$ in the last equality.
In particular, we obtain 
\[  |(1+xy)(1+x/y)P | \leq 2 | (1+x^{n_2})  P_{13}(x,z) |.\]

While it is possible to choose a triple $(n_1, n_2, n_3)$ such that the right-hand side is large for any $P_{13}$ achieving a low value of $f(n)$, we have not been able to find any triple such that $ | (1+x^{n_j})  P_{ik}|$ remains large for all $P_{ik}$ achieving a small $f(n)$, for all permutations of $(i,j,k)$. 
We are therefore tempted to conjecture that this simple XYZ product code has a minimum distance $d= o(N^{2/3})$ for any choice of $n_i$'s. In fact, any cyclic XYZ product code with $P_1 = P_2 = P_3$ such that $P_i$ is local in 1 dimension (and therefore the XYZ code is local in 3 dimensions) will allow similar constructions. This means that beating the $\sqrt{N}$ barrier for the minimum distance will likely require to take 3 distinct polynomials. 

We detail in the next section the simplified case of the Chamon code.

\subsection{$O(\sqrt{N})$ upper bound on the distance of the Chamon code}

While the analysis in this section does not improve on the upper bound from \cite{BLT11}, it is still interesting because it generalizes to cyclic XYZ product codes with polynomials with larger weight. Moreover, it shows that a $\sqrt{N}$ distance for the Chamon code does not correspond to the generic case of randomly chosen coprime $n_1, n_2, n_3$. On the contrary, it seems that for most triples, the distance is more likely to be $\Theta(N^{1/3})$, which raises the question of the existence of triples yielding a better distance.

For the Chamon code, the polynomials $Q_i$ take a very simple form
\[ Q_1(x) = x, \quad Q_2(y)=y, \quad Q_3(z)=z,\]
and Section \ref{subsec:decoupling-chamon} shows that the minimum distance is therefore \emph{upper bounded} by 
\[ \min_{P} |(1+xy^{-1}) P| + |(1+xz^{-1})P+R|\]
up to a multiplicative constant. As before, the minimization is over polynomials $P \in \F_2[x,y,z]/\I$ with the ideal $\I := (x^{n_1} + 1, y^{n_2}+1, z^{n_3}+1)$.

We consider the case where $\gcd(n_1, n_2,n_3)=1$ and set $m_1$ such that 
\begin{align}\label{eqn:m1}
m_1 n_2 = n_3 \mod n_1.
\end{align}
Define $P(x,y,z)$ as follows:
\[ P(x,y,z) = \left(\sum_{i=0}^{m_1-1} x^{i n_2}  \right)  \left(\sum_{j=0}^{n_2-1} (x y^{-1})^j  \right)  \left(\sum_{k=0}^{n_3-1} (xz^{-1})^k  \right).\] 
We first observe that 
\[ (1+xy^{-1})P =  \left(\sum_{i=0}^{m_1-1} x^{i n_2}  \right) (1+x^{n_2})   \left(\sum_{k=0}^{n_3-1} (xz^{-1})^k  \right) \]
where we performed the simplification $1+(xy^{-1})^{n_2} = 1+x^{n_2}$ over the ring. 
This further simplifies into $ (1+x^{m_1 n_2}) \left(\sum_{k=0}^{n_3-1} (xz^{-1})^k  \right)$ 
with weight
\[ |(1+xy^{-1})P| \leq 2 n_3.\]
On the other hand, 
\begin{align*}
(1+xz^{-1})P+R &= R +  \left(\sum_{i=0}^{m_1-1} x^{i n_2}  \right)  \left(\sum_{j=0}^{n_2-1} (x y^{-1})^j  \right)  \left(1+x^{n_3}  \right).
\end{align*}
The definition of $m_1$ ensures that 
\[  \left(1+x^{n_3}  \right)\left(\sum_{i=0}^{m_1-1} x^{i n_2}  \right) =\sum_{i=0}^{2m_1-1} x^{i n_2}, \]
which has weight 
\begin{align*}
w_1 := \left\{ \begin{array}{ll}
2m_1 & \text{if} \quad 2m_1 \leq n_1\\
2(n_1-m_1) & \text{if} \quad 2m_1 > n_1.
\end{array}\right.
\end{align*}
In particular, we obtain that
\begin{align*}
 |(1+xz^{-1})P+R| &= n_2 \left|(1+xz^{-1})P+ \sum_{i=0}^{n_1-1} x^i \right|\\
&= n_2 (n_1 - w_1).
\end{align*}
Our specific choice of $P$ removed for each $xz$-slice $w_1$ out of the $n_1$ coordinates in $R$, at a ``cost'' of $2n_3$ for $|(Q_1+Q_3)P|$.
This can be repeated up to $q_1 := \left\lfloor \frac{n_1}{w_1} \right \rfloor$ times by considering the polynomial 
\[ P' = \left(\sum_{i=0}^{q_1-1} x^{2 m_1 n_2 i } \right) P\]
which yields a low-weight logical operator since
\[  |(1+xy^{-1}) P'| + |(1+xz^{-1})P'+R| \leq 2 q_1 n_3 + n_2 (n_1 - q_1 w_1).\]
Note that if $w_1$ is very small, then $q_1$ will be large and this will not provide an interesting upper bound on the distance. 
Remember, however, that one needs to consider all permutations of indices. In particular, from the definition of $m_1$ in Eqn.~\eqref{eqn:m1}, we infer that $m_1'$ defined as
\[ m_1' n_3 = n_2 \mod n_1\]
obtained by exchanging the roles of $n_2$ and $n_3$ satisfies
\[m_1 m_1' = 1 \mod n_1.\]
In particular, either $m_1$ (and $n_1-m_1$) or $m_1'$ (and $n_1-m_1'$) is $\Omega(\sqrt{n_1})$. We therefore assume without loss of generality that $w_1 = \Omega(\sqrt{n_1})$ and keep the same setting as before.
From this, we get that 
\[ 2 q_1 n_3 = O (\sqrt{n_1} n_3) = O(N^{1/2})\]
in the regime where $n_1, n_2, n_3 = O(N^{1/3})$.

In the slice $j=0$ (\textit{i.e.}, corresponding to constant term in $y$), the remaining coordinates in the support of $R$ but not covered by $(1+xz^{-1}) P$ form an arithmetic progression over $\mathbbm{Z}/n_1 \mathbbm{Z}$ with common difference $n_2$ and length $L:=n_1-q_1w_1$.
If this length is $L = O(\sqrt{n_1})$, then we are done.
Otherwise we want to show how to take an additional $P''$ that will cover most of this remaining set. 

Let us take some $r \leq  L$ and some integer $s$ to be determined later, and define $P''(x,y,z)$ as
\[ P''(x,y,z) = x^{2m_1 q_1 n_2} \left(\sum_{i=0}^{r-1} x^{i n_2}  \right)    \left(\sum_{i=0}^{s-1} x^{in_3}  \right) \left(\sum_{j=0}^{n_2-1} (x y^{-1})^j  \right)  \left(\sum_{k=0}^{n_3-1} (xz^{-1})^k  \right).\] 
It satisfies $|(1+xy^{-1})P'' | \leq 2sn_3$ with an additional factor $s$ compared to previously.
Moreover,
\begin{align*}
(1+xz^{-1})P'' &= x^{2m_1 q_1 n_2}  \left(\sum_{i=0}^{r-1} x^{i n_2}  \right)  \left(\sum_{j=0}^{n_2-1} (x y^{-1})^j  \right)  (1+x^{s n_3}).
\end{align*}
The restriction to the slice $j=0$ (\textit{i.e.}, corresponding to constant term in $y$) reads
\[ x^{2m_1 q_1 n_2}  \left(\sum_{i=0}^{r-1} x^{i n_2}  \right)  (1+x^{sn_3}).\]
We want the terms $(1+x^{sn_3}) \sum_{i=0}^{r-1} x^{i n_2}  $ to cover all the remaining arithmetic progression, except possibly for $O(\sqrt{n_3})$ terms (since recall that we assume $n_i = \Theta (N^{1/3})$). 
For this, it is sufficient to take $r= L/2 + O(\sqrt{n_3})$ as well as $s$ such that 
\[ sn_3 = rn_2 \mod n_1.\]
To establish the $O(\sqrt{N})$ upper bound on the distance of the Chamon code, it is then sufficient to show that there exists a choice of $s = O(\sqrt{n_3})$ satisfying this equation. 
Multiplying both sides by the inverse of $n_2$ modulo $n_1$ gives
\[ s n_3 n_2^{-1} = r \mod n_1,\]
that is 
\[ s m_1 = r \mod n_1.\]
Since we have $m_1, n_1-m_1 = \Omega(\sqrt{n_1})$, we infer the existence of a couple $(r,s)$ with the required properties, if $c$ is chosen large enough. 
This completes the proof that the Chamon code has distance 
\[ d_{\text{Chamon}} = O(\sqrt{N}).\]

We note that although we only prove a $O(\sqrt{N})$ upper bound, the kind of construction above suggests that the distance will likely be much lower (as low as $N^{1/3}$) for most choices of $n_i$'s and it is not entirely clear that there will exist triples of $n_i$'s saturating the upper bound.

\subsection{Lack of energy barrier of the 3D XYZ code}
\label{sub:nobarrier-3D}

Similarly to Chamon's code, we will show that the 3D XYZ code obtained by taking the product of $H_i = \1_{n_i} + \Omega_{n_i} + \Omega_{n_i}^T$ does not display an energy barrier. 
For this it suffices to exhibit a path of errors going from the identity operators to a logical operator such that, at each step, the syndrome weight is upper bounded by a constant.
To this end we will describe a general logical error, distributed over the qubits in $A,B,C,D$, by a polynomial in 12 variables $x_Q, y_Q, z_Q$ for $Q \in \{A,B,C,D\}$.
Now recall that one possible representative of a $Z$-logical operator corresponds to a full horizontal slice of $Z$-Pauli errors applied to all qubits of types $A$ and $B$, which corresponds to the polynomial
\[
E_L
= \sum_{i=0}^{n_1-1}\sum_{j=0}^{n_2-1} x_A^i y_A^j + x_B^i y_B^j.
\]
Assuming that $n_1$ and $n_2$ are coprime, the idea is to flip qubits two at a time.
At time $k$ from 0 to $n_1 n_2-1$, one flips qubits in $A$ and $B$ at location $(i_k, j_k)$ with 
\[ i_k = k \mod n_1, \quad j_k = k \mod n_2.\]
At step $k$, the error is described by the polynomial
\[
E_L(k)
= \sum_{\ell = 0}^{k-1} (x_Ay_A)^\ell + (x_B y_B)^{\ell},
\]
where we recall that we work over the ring $F_{2}[x_A,y_A,x_B,y_B]/I$ with the ideal 
\[
I
:= (x_A^{n_1}+1, y_A^{n_2} +1, x_B^{n_1}+1, y_B^{n_2} +1).
\]
Note that since $n_1$ and $n_2$ are coprime, it is possible to perform a change of variable with $w = xy$ such that 
\[
F_2[x,y]/(x^{n_1}+1, y^{n_2} +1)
\cong \F_2[w]/(w^{n_1n_2}+1).
\]
With this new notation, we can rewrite
\[
E_L(k)
= \sum_{\ell = 0}^{k-1} w_A^\ell + w_B^{\ell},
\]
from which we get that this path connects the identity to a $Z$-logical operator:
\[
E_L(0) = 0, \quad
E_L(n_1 n_2) = E_L.
\]
It is then straightforward to observe that the syndrome of the error at step $k$ lies in the neighborhood of the end points of the string error. In particular, it has constant weight. 
A similar argument applies more generally to any cyclic XYZ product code described in Section \ref{sec:cyclic} as soon as two of the three polynomials are identical, even if the $n_i$'s differ.


%

\end{document}